\DeclareMathAlphabet{\mathpzc}{OT1}{pzc}{m}{it}
\let\oldmarginpar\marginpar
\renewcommand\marginpar[1]{\-\oldmarginpar[\raggedleft\footnotesize #1]%
{\raggedright\footnotesize #1}}
\theoremstyle{plain}
\newtheorem{thm}{Theorem}[section]
\newtheorem{theorem}{Theorem}[section]
\newtheorem{lemma}[thm]{Lemma}
\newtheorem{proposition}[thm]{Proposition}
\newtheorem{example}[thm]{Example}
\theoremstyle{definition}
\newtheorem{definition}[thm]{Definition}
\newtheorem{remark}[thm]{Remark}
\theoremstyle{remark}
\numberwithin{equation}{section}
\newcommand{\Lie}{\mathcal L} 
\newcommand{\g}{\mathfrak{g}}
\renewcommand{\P}{\mathbb{P}}
\newcommand{\R}{\mathbb{R}}
\newcommand{\SC}{\mathcal{C}}
\newcommand{\la}{\lambda}
\renewcommand{\a}{\alpha}
\newcommand{\dd}{\partial}
\newcommand{\sse}{\subseteq}
\newcommand{\lr}{\longrightarrow}
\newcommand{\Aut}{\operatorname{Aut}}
\newcommand{\SL}{\operatorname{SL}}
\newcommand{\GL}{\operatorname{GL}}
\newcommand{\wt}{\widetilde}
\newcommand{\Id}{\text{Id}}
\newcommand{\st}{\text{st}}
\newcommand{\ct}{{\mathcal T}}
\newcommand{\ce}{{\mathcal E}}
\def\Op{{\mathcal O}{\hspace{-.05mm}\mathpzc  p}\,}
\newcounter{daggerfootnote}
  \newcommand{\rpl}                         
{\mbox{$
\begin{picture}(12.7,8)(-.5,-1)
\put(0,0.2){$+$}
\put(3.7,2.8){\oval(6.7,6.7)[r]}
\end{picture}$}}
\newcommand{\lpl}                         
{\mbox{$
\begin{picture}(12.7,8)(-.5,-1)
\put(2,0.2){$+$}
\put(6,2.8){\oval(6.7,6.7)[l]}
\end{picture}$}}
\newcommand{\hh}{\hspace{.1mm}}
\def\sideremark#1{\ifvmode\leavevmode\fi\vadjust{\vbox to0pt{\vss
 \hbox to 0pt{\hskip\hsize\hskip1em
 \vbox{\hsize2cm\tiny\raggedright\pretolerance10000
  \noindent #1\hfill}\hss}\vbox to8pt{\vfil}\vss}}}
\newcommand{\nn}[1]{(\ref{#1})}
\DeclareMathOperator{\Vol}{Vol}
\DeclareMathOperator{\Zero}{Zero}
\DeclareMathOperator{\Span}{span}
\DeclareMathOperator{\End}{End}
\newcommand{\SP}{\mathscr{P}}
\newcommand{\SH}{\mathscr{H}}
\renewcommand{\sp}{\mathfrak{sp}}
\newcommand{\pdot}{{\textstyle\boldsymbol \cdot}\hspace{.05mm}}
\newcommand{\order}{{\mathcal O}}
\newcommand{\Cone}{{\mathcal C}}
\newcommand{\FZ}{{\mathfrak Z}}
\begin{document}

\title{
Dynamical Quantization of Contact Structures}
%
\author{Roger Casals}
\address{University of California Davis, Department of Mathematics, One Shields Avenue, Davis, CA 95616, USA}
\email{casals@math.ucdavis.edu}
\author{Gabriel Herczeg}
\address{Brown University, Department of Physics, Hope Street, Providence, RI 02912, USA}
\email{gabrielherczeg@brown.edu}
\author{Andrew Waldron}
\address{University of California Davis, Dept. of Mathematics and Center for Quantum Mathematics and Physics (QMAP), One Shields Avenue, Davis, CA 95616, USA}
\email{wally@math.ucdavis.edu}

\maketitle

\begin{abstract}\vspace{-1cm}
We construct a dynamical quantization for contact manifolds in terms of a flat connection acting on a Hilbert tractor bundle. We show that this contact quantization, which is independent of the choice of contact form, can be obtained by quantizing the Reeb dynamics of an ambient strict contact manifold equivariantly with respect to an ${\mathbb R}^+$-action. The contact quantization further determines a certain contact tractor connection whose parallel sections determine a distinguished choice of Reeb dynamics and their quantization. This relationship relies on tractor constructions from parabolic geometries and mirrors the tight relationship between Einstein metrics and conformal geometries. Finally, we construct in detail the dynamical quantization of the unique tight contact structure on the 3-sphere, where the Holstein-Primakoff transformation makes a surprising appearance.
\end{abstract}


\section{Introduction}\label{sec:intro}

The object of this article is the quantization of $(2n+1)$-dimensional contact manifolds $(Z,\xi)$. The dynamical quantization we present uses the differential geometry of tractor calculi to simultaneously quantize the contact Floer action functionals associated to every choice of contact 1-form for $\xi$.

\subsection{Contact Structures in Quantum Mechanics} The physical principle of diffeomorphism covariance puts time  on par with classical phase space configurations.
Once a choice of dynamics is specified,
 the $2n$-dimensional symplectic geometry underlying a classical phase space can be  enhanced to a $(2n+1)$-dimensional strict contact geometry~$(Z,\alpha)$, where $\alpha$ is a contact 1-form~\cite{ArnoldSing,ArGi}. 
The dynamical quantization of the {\it strict} contact manifold $(Z,\alpha)$ will be called a {\it strict quantization} and is defined directly below. This strict quantization depends on the choice of contact 1-form $\alpha$ in an intricate manner. In light of the development of symplectic field theory~\cite{EliashbergGiventalHofer00,Eliashberg_ICM98,Eliashberg_ICM07} and the foundational study~\cite{BEHWZ_SFT,Pardon_CH} of the contact Floer action functional $S:\SP(Z)\longrightarrow\R$, defined by
\begin{equation}\label{action}S_\alpha(\gamma)=\int_\gamma\alpha,\end{equation}
with $\SP(Z)$ a path space for $Z$, robust physical properties of a system can be studied in terms of the contact structure $(Z,\xi)$ itself, $\xi:=\ker\alpha$, independently of the choice of contact 1-form $\alpha$. Reeb dynamics, arising as the minimizing trajectories for the action functional $S_\a$, undergo drastic changes when differing choices of contact 1-form are made, {\it e.g.}~there are $C^\infty$-close ergodic perturbations of the periodic Reeb flow on the standard contact sphere $(S^{2n+1},\xi_\st)$ \cite{CasalsSpacil_ChernWeil}. Nevertheless, qualitative properties which hold for {\it all} possible Reeb dynamics associated to a contact structure $(Z,\xi)$ exist and are a central object of study in contact topology---for instance, the existence of at least one closed periodic orbit for {\it any} Reeb field, known as the Weinstein Conjecture, has been proven for all contact 3-manifolds, and a rather large class of higher-dimensional contact manifolds; see {\it e.g.} \cite{Hofer93,Taubes07,Oancea15_FreeLoop}. An intrinsic quantization of contact structures is thus important in order to understand these qualitatively stable properties of a quantum system associated to a contact phase-space, independent of a choice of contact 1-form. The main contribution of this article is such a quantization.

The problem of quantizing a contact structure $(Z,\xi)$, instead of a fixed choice of contact 1-form~$\alpha$, is challenging precisely because the associated dynamics change for each choice of contact 1-form. To resolve this tension, we first note that the relation between a strict contact geometry, {\it i.e.}~a choice of contact 1-form $\alpha$, and its underlying contact structure $\xi=\ker(\alpha)$, is analogous to the relation between a Riemannian geometry and its corresponding conformal geometry. Our results rely heavily on ideas used to treat Riemannian geometries covariantly with respect to conformal rescalings of the metric tensor~\cite{BEG}. 
Indeed, our covariant formulation of quantum mechanics  is in line with  mathematical methods in modern contact topology \cite{Pardon_CH, BourgeoisOancea_SHCH,BourgeoisOancea_SH}, whereby the theory is geometrically meaningful for a given contact 1-form $\alpha$, and yet the resulting construction is independent of the chosen $\alpha\in\Omega^1(Z)$ if $\xi=\ker(\a)$. The simplest general contact invariant of $(Z,\xi)$ is contact homology understood as a rigorous reformulation of the $S^1$-equivariant Morse homology of the Floer action functional~$S_\alpha$. The perturbative ground states---Reeb trajectories in $\SP(Z)$---depend on the choice of~$\alpha$, whereas the true ground states ought to be represented as contact homology classes, independent of $\alpha$, and thus independent of the physical choice of a time evolution.

\subsubsection*{Strict Dynamical Quantization} 

Let us first summarize the dynamical quantization of $(Z,\xi)$ when a fixed choice of contact 1-form $\alpha$ has been made, and thus the Reeb dynamics are also fixed. In the standard description of quantum mechanics, quantum states
 are elements of a Hilbert space~${\mathcal H}$.
Once dynamics are included, these states depend on classical parameters, including a time coordinate as measured by a classical laboratory clock. 
The time evolution of states is governed by the Schr\"odinger equation.
In a covariant approach, the classical laboratory and classical dynamics are modeled by a strict contact manifold $(Z^{2n+1},\alpha)$. The kernel of $\alpha$  determines a distribution $\xi$, and in turn the bundle of symplectic frames on $\xi$ produces an~$\operatorname{Sp}(2n)$ principal bundle. Given a Hilbert space~${\mathcal H}$ admitting a unitary, projective, metaplectic $\operatorname{Sp}(2n)$ representation, one has an associated vector bundle ${\mathcal H}Z \to Z$ with Hilbert space fibers ${\mathcal H}$. The inner product on~${\mathcal H}$ then gives a sesquilinear map on sections~$\psi,\chi\in \Gamma({\mathcal H}Z)$, denoted by $\langle \chi|\psi\rangle$, taking values in complex-valued smooth functions of~$Z$. 
The bundle of frames ${\mathscr M}\to Z$ preserving $\langle\cdot|\cdot\rangle$ is a principal bundle with structure group $U({\mathcal H})$ of unitary Hilbert transformations. We term the associated vector bundle ${\mathscr H}:= {\mathscr M}\times_{U({\mathcal H})} {\mathcal H}$,  the  {\it Hilbert bundle associated to $(Z,\xi,{\mathcal H})$} and recycle the notation $\langle \cdot|\cdot\rangle$ for the natural inner product on its section space. These sections may be viewed as a generalization of time-dependent wavefunctions, and thus~$\SH$ may be viewed as the bundle of wavefunctions.  The additional data of a flat connection on the Hilbert bundle~$\SH$ preserving $\langle\cdot|\cdot\rangle$, {\it i.e.} 
$$
d\langle \chi|\psi\rangle
=\langle \chi|\nabla \psi\rangle+
\langle \nabla \chi|\psi\rangle\, ,
\qquad
\forall\: \psi,\chi\in \Gamma({\mathscr H}) \, ,
$$
   allows generalized time evolution to be formulated in terms of parallel transport.
   Indeed, the Schr\"odinger equation is upgraded to a parallel transport condition on sections $\Psi\in\Gamma({\mathscr H})$,
$$
\nabla \Psi=0\, .
$$
Hence we call the data 
$$
(\SH,\nabla)\, ,\mbox{ where } \nabla^2=0\, ,
$$
a {\it quantum dynamical system} and $\nabla$ a {\it quantum connection}.

At this point, it is interesting to ask when a quantum dynamical system is the quantization of an underlying classical Reeb dynamics. In earlier work we showed how to construct a quantum dynamical system from Reeb dynamics~\cite{HW,HLW,LOW}. We formalize that construction in the following definition.

\begin{definition}\label{QMdef}
Let $(Z^{2n+1},\xi)$ be a contact manifold with a fixed a choice of contact 1-form  $\alpha$ where~$\xi=\ker(\alpha)$, ${\mathcal H}$ a projective Hilbert space representation of the symplectic group ${\rm Sp}(2n)$, and~$(\SH,\nabla^\hbar)$ a one-parameter family of quantum dynamical systems associated to this data. By definition, $\nabla^\hbar$ is a {\it dynamical quantization of the Reeb dynamics of $\alpha$} if  
$$\lim_{\hbar \to 0} i\hbar \nabla^\hbar \big|_{\hbar = 0} =\alpha.$$
\hfill$\blacksquare$\end{definition}

\noindent
By the above-displayed limit, we mean that for all $\Psi\in \Gamma(\SH)$ and $u\in \Gamma(TZ)$,
$$\lim_{\hbar \to 0} i\hbar \nabla^\hbar_u \Psi \big|_{\hbar = 0} =\alpha(u) \Psi.$$
Moreover, if $d^{\hat A^\hbar }$ is a connection form for $\nabla^\hbar$, it must have small $\hbar$-asymptotics
$$
d^{\hat A^\hbar }~\sim \frac{\alpha}{i\hbar} + \frac{\hat \kappa}{i\surd\hbar}+d+{\mathcal O}(\hbar^0)\, ,
$$
where, because $\nabla^\hbar$ is flat, the 1-form $\hat \kappa$ takes values in $\operatorname{End}(\Gamma(\SH))$ and obeys
$$\hat \kappa_u \circ \hat \kappa_v -
\hat \kappa_v \circ \hat \kappa_v=-id\alpha(u,v)
$$
for any $u,v\in \Gamma(TZ)$.
The {\it r\^ole} of the map $\hat \kappa$ is to calibrate classical quantities to quantum observables.

In Section~\ref{sec:StrictQuant}, we give a detailed description of quantum mechanics and 
quantization  of generic dynamical systems 
in terms of the above dynamical quantization  of strict contact manifolds. 

\subsection{Summary of Contributions} The central theme is the {\it contact quantization} of a contact manifold $(Z,\xi)$, as introduced below. The necessary quantum ingredients and techniques for this quantization are developed throughout the article, especially in Sections \ref{sec:StrictQuant} and \ref{sec:Quant}, and the quantization of the standard contact 3-sphere $(S^3,\xi_\st)$ is worked out in detail in Section \ref{S3}. The manuscript also contributes new results and constructions regarding the differential geometry of tractor bundles as they relate to contact structures; this is the focus of Section \ref{sec:Contract} and these geometric results are 
key for
our contact quantization in Section \ref{sec:Quant}.
 
Let us summarize the basic notions and thematic pillars appearing in our contact quantization. The data of a contact structure $\xi$ on a $(2n+1)$-manifold $Z$ can be encoded by a principal $\R^+$-bundle~${\mathcal C}$ over~$Z$, the symplectization of $(Z,\xi)$, see {\it e.g.} \cite{ArnoldGivental01,Geiges08}.
Tautologically, ${\mathcal C}$ is equipped with a 1-form $\lambda_\xi$ and the bundle of symplectic frames on $T{\mathcal C}$ is an $Sp(2n+2)$ principal bundle. The  projective metaplectic representation of $Sp(2n+2)$ acting on a Hilbert space ${\mathcal H}$  gives an embedding of $Sp(2n+2)$  into the group of unitary transformations
$U({\mathcal H})$. Hence this determines a $U({\mathcal H})$ principal bundle over~${\mathcal C}$; its associated vector bundle for such a choice of the  vector space ${\mathcal H}$ will be denoted~${\SH}(\xi)$. Since the metaplectic representation of $Sp(2n+2)$ acting on a Hilbert space ${\mathcal H}$ induces a unitary representation of the Heisenberg algebra (see Section~\ref{sec:metaplectic}), the bundle~$\SH(\xi)$ comes equipped with a {\it Heisenberg map} $s$ that maps contractors to unitary endomorphisms of the section space $\Gamma(\SH(\xi))$, see Section~\ref{HBC}. The map $s$ can viewed as a contact analog of 
 Clifford multiplication and symplectic spinors~\cite{Kostant}.

Before introducing the dynamical quantization of a contact manifold $(Z,\xi)$, in Definition \ref{ContQMdef} below, we first describe precisely the notion of a symplectic quantization on the symplectization of $(Z,\xi)$:

\begin{definition}\label{QMcontdef} Let $(Z,\xi)$ be a contact manifold. A one-parameter family
of quantum dynamical systems $(\SH(\xi),{ \nabla}^\hbar)$ is said to be a {\it symplectic quantization} of the contact structure $\xi$ if
$$
\lim_{\hbar \to 0} i\hbar {\nabla}^\hbar \big|_{\hbar = 0} =
\lambda_\xi\, .
$$
\hfill $\blacksquare$	
\end{definition}

\noindent
The quantization in Definition \ref{QMcontdef} takes place on the symplectic manifold $({\mathcal C},d\la_\xi)$. To ensure that this descends to a quantization along the contact manifold $Z$, we further impose an $\R^+$-equivariance property on~$\nabla^\hbar$, as follows. Given a contact manifold $(Z,\xi)$, we let $(\nabla^\hbar,\SH(\xi))$ be  a symplectic quantization of~$\xi$ and consider the vector field $X$ infinitesimally generating the $\R^+$-action on the total space $\mathcal{C}$ of the principal $\R^+$-bundle $({\mathcal C},d\la_\xi)\lr(Z,\xi)$. Suppose that $\nabla^\hbar$ obeys the $\R^+$-equivariance condition
\begin{equation}\label{RPE}
\Big[
\nabla^0_X+2\hbar \frac{d}{d\hbar}, \nabla^\hbar\Big]=0\, ,
\end{equation}
where $\nabla^0:=\frac12 \lim_{\hbar\to 0}\frac{d^2 (\hbar \nabla^\hbar)}{(d\surd \hbar)^2}\,
$ is the constant term of $\nabla^\hbar$ expanded as a Laurent series around~${\hbar=0}$. 
\begin{definition} 
\label{ContQMdef}

Let $\SH(\FZ)$ be the bundle  over~$Z$ obtained by quotienting $\SH(\xi)$ by parallel transport with respect to  $\nabla^0$ along the $\R^+$-action, and $\bm \nabla^\hbar$ be the connection  on $\SH(\FZ)$
obtained by restricting~$\nabla^\hbar$ to  $Z$, viewed as the space of $\R^+$-rays in ${\mathcal C}$. By definition, the pair $(\SH(\FZ),\bm \nabla^\hbar)$ is said to be a {\it contact quantization} of~$(Z,\xi)$.
\hfill $\blacksquare$	
\end{definition}

The contact quantization of Definition~\ref{ContQMdef}, which we also referred to as a {\it dynamical quantization}, combines 
a classical  differential geometry construction with physical BRST quantization. The geometric side of this story relies on tractor calculi for parabolic geometries \cite{CapGov}; an excellent reference for parabolic geometries is~\cite{SlovCap}. In particular, a  contact structure is locally equivalent to
\begin{equation}\label{5grading}(\R^{2n-1},\xi_\st)\cong \left(\frac{\sp(2n+2,\R)}{\R\oplus\sp(2n,\R)\oplus\R^{2n+2}\oplus\R},\sp_{-1}\right)\, .
\end{equation}
This is the flat model associated to the contact 2-grading of the Lie algebra $$\sp(2n+2,\R)\cong\mathfrak{sp}_{-2}\oplus\mathfrak{sp}_{-1}\oplus\mathfrak{sp}_{0}\oplus\mathfrak{sp}_{1}\oplus\mathfrak{sp}_{2},$$
where the parabolic algebra is $(\R\oplus\sp(2n,\R))_0\oplus(\R^{2n+2})_1\oplus(\R)_2$, with $\sp(2n,\R)$ being the semisimple part of $\mathfrak{sp}_0$ and its center $\mathfrak{z}(\mathfrak{sp}_0)=\R$ generated by the grading element. Our quantization relies upon the tractor bundle associated to a curved analog of the above contact Klein geometry. 
This is a rank~$(2n+2)$ vector bundle associated to a $P:=(\R^+\otimes \operatorname{Sp}(2n))\ltimes {\mathbb R}^{2n}\ltimes \R^+$
principal bundle over~$Z$. The details are given in Definition~\ref{def:contractor}; to differentiate it from  tractor bundles for other structures, we call this the {\it contractor bundle} and denote it by $\FZ$. 
There are various ways to handle the bundle~$\FZ$ which will appear in this article. In analogy with the Fefferman-Graham construction~\cite{FeffermanGraham_Conformal} for conformal metrics, we
approach this by introducing an ambient manifold $M$, two dimensions higher than the underlying contact manifold $(Z,\xi)$:

\begin{definition}\label{def:ambientspace}
Let $(Z^{2n+1},\xi)$ be a contact manifold, and  $(M^{2n+3},A)$ a $(2n+3)$-dimensional strict contact manifold, $A\in \Gamma(T^*\!M)$ being a contact 1-form. The data $(M,A,x,i)$, where $x:\R^+\times M\to M$ is a proper free $\R^+$-action generated by a vector field $X\in \Gamma(TM)$, and $i:{\SC}\longrightarrow M$ is an $\R^+$-equivariant embedding, is said to be an ambient contact manifold for $(Z,\xi)$ if the following conditions are satisfied:

\begin{enumerate}[(i)]

	\item\label{poundtime}
	${\mathcal L}_X A=2A$,\\[.1mm]
	\item\label{isquaretominusone} $\Cone=\Zero(A(X)),$
	\\
	\item\label{third} For any contact form $\alpha$ on $(Z,\xi)$, $\exists\sigma \in \Gamma(\SC)$ such that $\alpha = (i \circ \sigma )^{*}(A)$.\label{compatibility}
\hfill $\blacksquare$	
	\end{enumerate}
\end{definition}

For many applications to the differential geometry of contact structures, especially in a parabolic geometry context~\cite{Fox,GNW}, a~$(2n+2)$-dimensional ambient space given by the symplectization ${\mathcal C}$ is employed. This is a specialization of the Thomas space for projective geometries. Indeed it has already been utilized in Definition~\ref{RPE} above. That said, the ambient contact manifold is particularly useful for us, precisely because it is also a contact manifold. Note that this is in line with the $(d+2)$-dimensional Fefferman--Graham manifold used to handle $d$-dimensional conformal geometries. Section~\ref{sec:Contract} will explore the geometric interplay between the contact geometry of $(Z,\xi)$ and that of an associated ambient contact manifold. In particular, our Theorem~\ref{thm:equivalence}
will establish an equivalence between the tractor bundle $\FZ\to Z$ and the ambient contact distribution $\Xi=\ker A$ restricted to the image $i(\mathcal C)$ of the symplectization in the ambient strict contact manifold~$(M,A)$.

The utility of tractor bundles relies in part on the existence of a canonical class of Cartan connections acting on them; in our context, we refer to these connections as {\it contractor connections}; see Definition~\ref{ssec:ContractConnections}. In particular, a contact projective structure ({\it i.e.} a family of paths tangent to a contact distribution and geodesic for some affine connection)
corresponds to a regular Cartan connection which gives a  canonical
contractor connection upon imposing a normality condition (or equivalently in this a context, a torsion free-type condition), see~\cite{Fox}.  For the purposes of our dynamical quantization, only the regularity condition is needed. 
In Definition~\ref{ssec:ContractConnections} we will give an equivalent {\it ambient} definition of a contractor connection and Theorem~\ref{existenceambientconnection} will establish an equivalence between a class of ambient projective connections and ambient contractor connections; these may be viewed as $(2n+3)$-dimensional contact analogs of the symplectization results of~\cite{Fox,GNW}.

These geometric developments and constructions set the stage for relating contact quantization of~$(Z,\xi)$ with strict contact quantization of an associated ambient $(M,A)$. The following result, proved in Section \ref{sec:Quant} using the techniques from Section \ref{sec:Contract}, relates $\R^+$-equivariant quantizations of ambient contact manifolds (see Definition~\ref{strictequi}) to the contact quantization of a contact manifold, as laid out in Definition~\nn{QMcontdef}. 

\begin{theorem}\label{biggy}
Let $(Z,\xi)$ be a contact manifold, ${\mathscr H}(\FZ)\to Z$ a contractor Hilbert bundle and~$(M,A,x,i)$ an ambient contact manifold for~$(Z,\xi)$.
Suppose that the Hilbert bundle~$\SH(
 \ker A)$ determines the contractor Hilbert bundle ${\mathscr H}(\FZ)$.
Then, 
 an $\R^+$-equivariant strict quantum connection~$\nabla^\hbar$ on a Hilbert bundle $\SH(
 \ker A)$
 determines a contact quantization $({\mathscr H}(\FZ),{\bm \nabla}^\hbar)$ of the contact manifold $(Z,\xi)$. 
\end{theorem}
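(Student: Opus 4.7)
The plan is to produce the contact quantization by restricting the ambient strict quantum connection $\nabla^\hbar$ to the embedded symplectization $i(\SC)\subset M$ and then quotienting by the $\R^+$-action, exactly as prescribed by Definition~\ref{ContQMdef}. First, I would pull back $\nabla^\hbar$ along the $\R^+$-equivariant embedding $i:\SC\hookrightarrow M$ to obtain a connection $i^*\nabla^\hbar$ on $i^*\SH(\ker A)\to\SC$. By the standing hypothesis that $\SH(\ker A)$ determines $\SH(\FZ)$, together with Theorem~\ref{thm:equivalence} identifying the contractor bundle $\FZ$ with the restriction of $\ker A$ to $i(\SC)$, the pulled-back bundle carries the structure of the contractor Hilbert bundle $\SH(\xi)$ over $\SC$, before the final quotient.

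Next, I would verify that $i^*\nabla^\hbar$ is a symplectic quantization of $\xi$ in the sense of Definition~\ref{QMcontdef}. The strict quantization property of $\nabla^\hbar$ on $(M,A)$ gives $\lim_{\hbar\to 0}i\hbar\,\nabla^\hbar|_{\hbar=0}=A$, and pullback yields $\lim_{\hbar\to 0}i\hbar\,(i^*\nabla^\hbar)|_{\hbar=0}=i^*A$. Condition~(\ref{compatibility}) of Definition~\ref{def:ambientspace} forces $i^*A=\lambda_\xi$: for any contact form $\alpha$ on $(Z,\xi)$ and its corresponding section $\sigma\in\Gamma(\SC)$ one has $(i\circ\sigma)^*A=\alpha=\sigma^*\lambda_\xi$, so the two tautological $1$-forms on $\SC$ must coincide. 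This delivers the defining limit of Definition~\ref{QMcontdef}.

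The $\R^+$-equivariance needed for descent to $Z$ then follows by naturality of pullback. Since $i$ is $\R^+$-equivariant, the ambient generator $X$ restricts to the Euler-type vector field on $\SC$, and the constant term $\nabla^0=\tfrac12\lim_{\hbar\to 0}d^2(\hbar\nabla^\hbar)/(d\surd\hbar)^2$ commutes with the $\hbar$-independent pullback $i^*$; hence the ambient equivariance (Definition~\ref{strictequi}) descends to the condition~\nn{RPE} for $i^*\nabla^\hbar$. Applying Definition~\ref{ContQMdef}, I would then quotient $\SH(\xi)\to\SC$ by $\nabla^0$-parallel transport along the $\R^+$-rays, obtaining the bundle $\SH(\FZ)\to Z$ with induced connection $\bm\nabla^\hbar$, which is the required contact quantization of $(Z,\xi)$.

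The hardest step will be verifying that the $\nabla^0$-parallel transport along $\R^+$-rays in $\SC$ is globally well-defined and produces a smooth quotient bundle on $Z$ together with a flat induced connection $\bm\nabla^\hbar$, rather than only a local descent. This requires using the $\R^+$-equivariance~\nn{RPE} to force the curvature components of $\nabla^\hbar$ contracted with $X$ to vanish, so that parallel transport along $\R^+$-orbits acts transitively with a well-defined quotient. A secondary technical point is checking compatibility of the restriction with the Heisenberg map $s$ of Section~\ref{HBC}, so that the classical-to-quantum calibration $\hat\kappa$ of Definition~\ref{QMdef} restricts and descends coherently.
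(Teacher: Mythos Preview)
Your proposal is correct and follows essentially the same route as the paper: restrict the ambient strict quantum connection to the symplectization $\SC=\Zero(A(X))$, check that $A|_{\SC}=\lambda_\xi$ so the pullback is a symplectic quantization, and then invoke the ambient $\R^+$-equivariance (Definition~\ref{strictequi}) to obtain Condition~\nn{RPE} and hence a contact quantization via Definition~\ref{ContQMdef}. The paper's proof is in fact very terse on the descent step you flag as hardest; that step is handled not in the proof of Theorem~\ref{biggy} itself but by the surrounding machinery in Section~\ref{contquant}, namely the proposition establishing that an $\R^+$-equivariant quantum connection determines a contractor connection, and Theorem~\ref{sympspin} identifying $\Gamma(\SH(\FZ))$ with $\ker\nabla^{(0)}_X|_{\SC}$, so your instinct to isolate it as the substantive point is well placed.
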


The necessary ingredients and notions featured in the statement of Theorem \ref{biggy} are developed in Sections \ref{sec:Contract} and \ref{sec:Quant}. In short, a contact quantization determines a contractor connection~$\nabla^\FZ$ on the bundle $\FZ$ and Theorem \ref{biggy} shows how to construct such a contractor connection. In general, given any parabolic geometry and a tractor connection determined by a particular representation of the parabolic subgroup, it is interesting to consider the consequences of the existence of a parallel tractor $I\in \Gamma(\FZ)$, {\it i.e.} a tractor satisfying
$$
\nabla^\FZ I =0\, .
$$
In the contact setting, a solution to this condition determines a distinguished contact form, and thus its associated Reeb dynamics. By imposing that the connection $\bm \nabla^\hbar$ leaves the kernel of $s(I)$ invariant,~$s$ being the Heisenberg map, the constraint
$$
s(I)=0
$$
can be imposed on the section space $\Gamma(\SH(\FZ))$. This then determines a quantization of the aforementioned distinguished Reeb dynamics. This procedure is detailed in Section~\ref{parallel}.

Finally, given a contact structure, 
it is possible to find a formal quantization $\nabla^\hbar$ given as formal power series in $\hbar$ that satisfies Definition~\ref{QMdef} order by order in~$\hbar$. This is not surprising in light of the Kontsevich formality theorem~\cite{Kont,Kont1}; see also example~\ref{proto} . In contrast, the problem of going beyond formality is more challenging. We propose a solution for a class of contact structures with enhanced symmetries, detailing the dynamical quantization of the unique tight contact structure in the 3-sphere. Indeed, in Section~\ref{S3} we show that for $(S^3,\xi_\st)$, it is possible to go beyond formality and we will determine both a strict quantum connection $\nabla^\hbar$ and a contact quantum connection $\bm \nabla^\hbar$. The latter determines a contractor connection that admits a parallel scale tractor, and this allows us to give an explicit example of the relationship discussed above between the quantization of contact structures and Reeb dynamics. In fact, our solution exhibits an interesting spectrum shortening phenomenon: on a discrete set of values of the parameter $\hbar$, the Hilbert space fibers given by ${\mathcal H}=L^2({\mathbb R})$ can be consistently truncated to unitary ${\mathfrak{su}}(2)$ representations. In particular, we see a well-known relationship between the harmonic oscillator and spin states, due to Holstein and Primakoff, surprisingly appear in the contact quantization of $(S^3,\xi_\st)$.  

\subsection{Related Systems and Works}
 It might be relevant to emphasize that several families of contact structures arise naturally in mathematical physics, to which we can apply the above quantization. First, the space of geodesics in a smooth manifold $Q$ has a canonical contact structure when viewed as the space of oriented contact elements $(\P(T^*Q),\xi_\st)$. The contact manifold $(\P(T^*Q),\xi_\st)$ is not canonically endowed with a contact $1$-form unless we additionally consider the Liouville structure from the bulk $(T^*Q,\la_\st)$, and thus a (non-strict) contact quantization is in general required.

Second, the quantization of the symplectic coadjoint orbits, as studied by \cite{Kirillov_OrbithMethodI,Kirillov_OrbitMethodII}, becomes a contact geometry problem in the nilpotent orbits. Indeed, by the Jacobson-Morosov Theorem \cite[Section 3.3]{CollingwoodMcGovern_Nilpotent} the nilpotent coadjoint orbits in $\g^*$ admit a symplectic dilation, and thus the projectivized coadjoint orbit in $\P(\g^*)$ is naturally a contact manifold. These contact structures and their infinitesimal automorphisms are studied in \cite{Yamaguchi}. In general, local Lie algebras \cite{Kirillov_LocalLie,Lichnerowicz_JacobiManifolds} are foliated by leaves with contact structures and locally conformal symplectic structures.

Third, in the context of General Relativity, the space of light rays in a globally hyperbolic Lorentz spacetime is a contact structure, and by the contact topology proof of the Low Conjecture \cite{ChernovNemirovski_Causality,ChernovRudyak_Causality}, we now know that events are casually related if and only if their Legendrian fibers are contact linked. It is a natural step forward to study the quantizations of Definitions~\ref{QMdef}
and~\ref{RPE}
with respect to Legendrian submanifolds. The quantization of Legendrian submanifolds in this context and its comparison to the contact augmentation category shall be explored in the future. 
In addition to the aforementioned causality in General Relativity, it was shown in \cite{Lewandowski:1990cg} that realizable CR structures can be naturally associated to algebraically special solutions of Einstein's equations in four spacetime dimensions, which, by virtue of the Goldberg-Sachs theorem \cite{goldberg2009republication} are guaranteed to admit a shear-free, null geodesic congruence. The CR manifold is realized as the leaf space of this congruence, and in the generic case where the congruence has non-vanishing twist, the manifold is contact. Finally, contact structures arising in the thermodynamics of AdS black holes were recently studied in \cite{ghosh2019contact}.

Regarding the scientific context, it is worth noting that~\cite{BiswasDey,Kashiwara} discuss the deformation quantization of complex contact structures in terms of a Moyal-Weyl star product and microlocal sheaf theory, and~\cite{Polesello_EquivDirac,Rajeev} address quantization for contact structures via the virtual representation given by the index of the contact Dirac operator and as a bracket deformation, respectively. It would be enlightening to compare these algebraic constructions with the geometry-based quantizations of Definitions~\ref{QMdef} and~\ref{QMcontdef}, 
which follow from the fact that maximal non-integrability of the contact structure forces a BRST quantization with a maximal set of second class constraints.

Finally, the quantization of a strict contact structure yields the Schr\"odinger equation as the operator governing the wavefunctions (see Secton~\ref{ssec:PhysicalStates});  it would be  desirable to compare it to the path integral formulation of quantum mechanics with action $S_\a$. It ought to be noticed that from the perspective of \cite{Witten_PathIntegral}, the path integral on the (complexified) loop space of the symplectization of a contact structure~$(Z,\xi)$ simplifies when integrated along a canonical coisotropic brane \cite{GukovOrlov_BCC}, which coincides with the base of our Hilbert bundle $(\SH,\nabla)$. 
Note that along these lines, the path integral quantization of the Wilson line operator for quantum connection has been constructed in~\cite{HLW}.\\

{\bf Structure}: The manuscript is organized as follows. Section \ref{sec:StrictQuant} describes the BRST quantization of the action functional associated to a contact form. A discussion on the charge, symmetries and physical states is also included in that section. Section \ref{sec:Contract} develops the tractor formalism in the context of contact structures, with a view towards the dynamical quantization of contact structures. This third section is differential geometric in nature, geometrically valuable on its own, and also offers both an intrinsic and an extrinsic constructions of the contractor bundles and proves their equivalence. Section \ref{sec:Quant} constructs the dynamical quantization of a contact structure $(Z,\xi)$. The results and constructions developed in Sections \ref{sec:StrictQuant} and \ref{sec:Contract} are  crucial to those of Section \ref{sec:Quant}. Finally, Section \ref{S3} discusses in detail the dynamical quantization of the standard contact 3-sphere $(S^3,\xi_\st)$, unraveling an unexpected connection with the Holstein--Primakoff formalism.\\

{\bf Acknowledgements}: We thank Sean Curry, Emanuele Latini and Olindo Corradini for discussions. We are grateful to Eugene Skvortsov for pointing out the Holstein--Primakoff mechanism. R.~Casals is supported by the NSF grant DMS-1841913, the NSF CAREER grant DMS-1942363 and the Alfred P. Sloan Foundation. A.~Waldron is supported by a Simons Foundation Collaboration Grant for Mathematicians ID 317562.

\section{Strict Dynamical Quantization}\label{sec:StrictQuant}

Let $(Z,\xi)$ be a $(2n+1)$-dimensional contact manifold equipped with a fixed choice of contact 1-form~$\alpha\in\Omega^1(Z)$. The object of this section is to quantize the  Reeb dynamics~$\alpha$. Reeb dynamics
and, hence its quantization, depend on the  choice of contact form $\alpha$, and thus is not a priori an invariant of the contact structure $\xi=\ker(\alpha)$. Sections~\ref{sec:Contract} and~\ref{sec:Quant} are devoted to the quantization of  
 the contact manifold $(Z,\xi)$, rather than the strict contact structure~$(Z,\alpha)$ studied here, and use the results from the present section.

\subsection{Contact Action}\label{ssec:ContactAction} Reeb dynamics are governed by critical points of the Floer action functional
$$S_\alpha:L_{z_0}(Z)\lr\R,\qquad S_\alpha(\gamma)=\int_\gamma\alpha\, ,$$
where $L_{z_0}(Z)$
is the space of unparameterized paths in $Z$ passing through each choice of~$z_0\in Z$. 
Rather than a
path integral  approach,
we will apply canonical Hamiltonian methods in order to quantize this classical dynamical system.

Critical points of $S_\alpha$ are given by paths $\gamma$ whose tangent spaces lie in the kernel of $d\alpha$. 
A canonical choice of parameterization is that in which the tangent vector equals the {\it Reeb vector} $\rho$, which is uniquely defined by
$$
\alpha(\rho)=1\, ,\qquad 
{\mathcal L}_\rho \alpha =0\, .
$$
The Legendre transform between the Lagrangian $L$, determined by the contact form, and the corresponding Hamiltonian of the action $S_\alpha$  is highly degenerate. Indeed, the Lagrangian is maximally singular, since in  local coordinates $(q^1,\ldots, q^{2n+1})$ for some patch in  $Z$,
$$L(q^1,\ldots,q^{2n+1},\dot{q}^1,
\ldots,\dot{q}^{2n+1})=
\alpha_i(q)\, \dot{q}{\hh}^i\, ,$$
where the dot denotes the derivative with respect to any parameter along the path $\gamma$. Thus
$$\frac{\dd^2 L}{\dd{\dot{q}^i}\dd{\dot{q}^j}}=0\, ,
$$
and so the Hamiltonian formulation on the symplectic manifold $(T^*\!Z,d\la_\st)$ must be constrained to the $(2n+1)$-dimensional submanifold defined by $$\la_\st=\pi^*\alpha\, ,$$ where $\la_\st$ is the standard Liouville structure on the cotangent bundle $T^*Z$ and $\pi:T^*\!Z\lr Z$ is the canonical projection to the zero section.
In local coordinates $(p_i,q^i)$ for $T^*\!Z$, the above condition reads
$$
p_i=\alpha_i(q)\, .
$$

\subsection{Hamiltonian Formulation}\label{ssec:HamPersp} Let us continue with the quantization of the action functional $S_\alpha$, see \cite{HenneauxTeitelboim_QuantizationGauge} for details on quantization procedures, and the manuscript \cite{GrigorievLyakhovich_BRST} for a case more tailored to our context. The $(2n+1)$-dimensional  constraint submanifold discussed above,
$$\{\la_\st-\pi^*\alpha=0\}\sse (T^*\!Z,\la_\st),$$
is  the image of the contact manifold $(Z,\alpha)$ under the contact form $\alpha\in\Omega^1(Z)$, understood as a section of $T^*\!Z$. The constraint submanifold can be 
cut out as the zero locus of  a~$T^*\!Z$-valued function on~$T^*\!Z$, which we denote by ~$\phi$. 
Now, recall  the existence of a  vector bundle splitting $TZ=\xi\oplus {\rm span}(\rho)$; this corresponds to a reduction of the structure group $\GL(2n+1,\R)$ of $TZ$ to the subgroup $\operatorname{Sp}(2n)\times\mathbb R^+$. Hence, noting that the kernel of interior multiplication by the  Reeb vector field gives a subbundle of~$T^*\!Z$ isomorphic to the codistribution $\xi^* = T^*\! Z / \alpha$,  we canonically get a projection $\phi^o:T^*\!Z \to \xi^*$.
 In  local coordinates  this amounts to viewing $\phi$ as~$(2n+1)$ constraint functions 
$\phi_1,\ldots,\phi_{2n+1}\in C^\infty(T^*\!Z,\R)$,
with $\phi_i=p_i-\alpha_i(q)$,
whose Poisson brackets  satisfy $$\{\phi_i,\phi_j\}:=d\la_\st(X_{\phi_i},X_{\phi_j})=\pi^*d\alpha(X_{\phi_i},X_{\phi_j}),\quad 1\leq i,j\leq 2n+1\, ,$$
where $X_{\phi_i}$ denote the corresponding Hamiltonian vector fields. 

Since the Levi 2-form $\varphi:=d\alpha$ has maximal rank, 
these $(2n+1)$ constraints separate into a  first-class constraint $\phi_{2n+1}$ that generates the 
reparameterization gauge symmetry of the functional $S_\alpha$,
and $2n$ second-class constraints $\{\phi_1,\ldots,\phi_{2n}\}$ corresponding to the function $\phi^o$. 
Note that the Hamiltonian vector field of the first-class constraint~$\phi_{2n+1}$
obeys  $\pi_* X_{\phi_{2n+1}}=\rho$.

Typically, second class constraints are handled by replacing Poisson brackets by the Dirac bracket; the latter is constructed such that the constraints do not evolve (a useful review is~\cite{Bojowald}).
For our purposes, however, it is propitious to instead replace the symplectic manifold $T^*\!Z$ with a larger symplectic manifold~$P$, on which the second class constraints are promoted to mutually Poisson-commuting first class constraints; this is in line with ~\cite{Fradkin,Batalin}. These generate gauge invariances such that the dynamics on this extended phase space are gauge equivalent to the original system. In order to achieve that, we need to   construct a  suitable coisotropic submanifold~$N$ of~$P$. 

First, the extended phase space we consider is the Whitney vector bundle sum
$$P=T^*\!Z\oplus\xi^*\, ,$$
where $P$ has a symplectic structure 
$$
\Omega=\omega_{\rm st, T^*\!Z} + \omega_{\rm st, {\mathbb R}^{2n}}
$$
given by the sum of closed $2$-forms constructed as follows: the two forgetful maps that project out the first or second summand of the Whitney sum can be used to pull-back forms on the manifolds~$T^*\!Z$ and~$\xi^*$ to~$P$.
Then $\omega_{\rm st, \tiny T^*\!Z}$ is the pull-back of the standard symplectic 2-form $d\lambda_{\rm st}$ on $T^*\!Z$. The 2-form~$\omega_{\rm st, {\mathbb R}^{2n}}$ 
is a smooth choice of a closed, non-degenerate antisymmetric bilinear form making each fiber of~$\xi^*$ into a symplectic vector space.
This may be described by the data of a  {\it classical calibration map}~$\kappa$, which is defined as any linear map 
$$
\kappa: \Gamma(T\xi) \to \Gamma(\xi)\, ,
$$
such that $\kappa$ has rank $2n$ and is an isomorphism for vertical vectors $ \kappa: \Gamma(T_V\xi) \stackrel\cong\longrightarrow \Gamma(\xi)$ (so $T_V\xi$ is the subbundle of $T\xi$ of vectors tangent to the fibers of $\xi$), and such that
for any pair of sections~$u,v\in \Gamma(\xi)$, the 2-form $J\in \Omega^2 P$ defined, using the Levi bracket,  by
$$J(u,v)=(d\alpha)(\kappa(u),\kappa(v))=-\alpha([\kappa(u),\kappa(v)])$$
is closed.
Then $\omega_{\rm st, {\mathbb R}^{2n}}$ is 
 the pullback of $J$ (which may be viewed as a choice of symplectic form on each ${\mathbb R}^{2n}$ fiber of $\xi$) to $P$ along the forgetful map. 
 
\begin{remark}
Physically, one may imagine trying to locally model Reeb dynamics by evolution in  the trivial symplectic manifold ${\mathbb R}^{2n}$, and $P$ as gluing together these local models. The map $\kappa$ calibrates
generalized positions and momenta in ${\mathbb R}^{2n}$ to those in $\xi^*$. 
It may always be constructed locally by introducing frames for the distribution $\xi$, and therefore also whenever $Z$ is parallelizable. We leave aside a study of possible obstructions to the global existence of classical calibration maps, and note that in any case the existence of global measuring devices is an idealization that is seldom realizable for physical systems.\hfill$\blacksquare$
\end{remark}

To construct the required coisotropic submanifold $N$ of the symplectic manifold $(P,\Omega)$, we first introduce a 1-form $A$ on the (total space of the) manifold $\xi^*$ with the horizontal property that~$A$ annihilates vertical vectors in $T\xi^*$
({\it i.e.} vectors $v$ tangent to curves in $\pi_o^{-1}(z)$ for any $z\in Z$). Moreover we require that, restricted to the image of the zero section in $\xi^*$ (which is isomorphic to $Z$ itself),
the form $A$ then coincides with the contact form $\alpha$. Let us denote by $A_{\xi^*}$ the pullback of $A$ to $P$ under the second forgetful map.

Then we define~$N$ to be the submanifold
$$
N=
\big\{\la_{\st,T^*\!Z}
- A_{\xi^*}
=0
\big\}\sse (P,\Omega)\, ,
$$
which is the zero locus of a function $\Phi:P\to P$, {\it i.e.}   $N=\Phi^{-1}(Z)$. The submanifold $N$ is coisotropic when $A$ obeys a Cartan--Maurer equation $$\{ \la_{\st,T^*\!Z},  A_{\xi^*}\}_{_\Omega} -  \{   A_{\xi^*}, A_{\xi^*}\}_{_\Omega}=0\, ,$$ where $\{\pdot,\pdot\}_{_\Omega}$ denotes the Poisson brackets of $\Omega$. This takes a more familiar form upon choosing local coordinates $(q^i,p_i dq^i)$ for $T^*\!Z$ and $s_a$ for ${\mathbb R}^{2n}$.
Then, using the classical calibration map $\kappa$, points in~$\xi^*$ are labeled by $(q^i, s_a e_i{}^a(q) dq^i)$ where the $2n$ one-forms $e_i{}^a dq^i$ are in the codistribution $\xi^*$. They determine the map $\kappa$  and therefore have maximal rank. Then the 1-form $A_{\xi^*}$ is given by $A_i(q,s)dq^i$ which must obey $A_i(s)=\alpha_i(q)+\omega_i(q,s)$. Thus, the first class constraint function $\Phi$ is given by 
$$
\Phi = dq^i \big(p_i - \alpha_i(q) - \omega_i(q,s)\big)\, ,
$$
while the non-zero Poisson brackets are $\{p_i,q^j\}_{_\Omega}=\delta^i_j$ and $\{s_a,s_b\}=J_{ab}$.
Denoting the horizontal exterior derivative $d:= dq^i\frac{\partial}{\partial q^i}$ and $A:=A_i(q,s)dq^i$, the Cartan--Maurer equation expressing
that the
first class constraints commute, is given by 
$$
dA - \frac12\big\{ A,A\big\}_{_\Omega}=0\, .
$$
The $s$-independent part of the above equation stipulates that $d\alpha = \frac12 \{\omega,\omega\}\big|_{s=0}$. This can be solved by requiring that the leading term in $\omega$ is
determined by the calibration map and that, in addition, the 1-forms $e_i{}^a:=\partial\omega_i/\partial s_a  \big|_{s=0}$, where $1\leq a\leq 2n$, obey
$$
\frac 12\, 
 e^a\wedge e^b J_{ab}= d\alpha=:\varphi\, .
$$  
In the case that the {\it soldering forms} $e^a$ are closed, the Cartan--Maurer equation is solved by $\omega=s_a e^a$ and the first class constraint submanifold $N$ is given by 
the intersection of the graph of $\phi^o$ in $P$ and the submanifold defined by $\lambda_{\st}(\rho)=1$ (pulled back to $P$ by the first forgetful map).

In conclusion, the dynamics on the extended phase space $(P,\Omega)$ is governed by the extended action
$$\wt S_\alpha(\Gamma):=\int_{\Gamma}\Big[\lambda_{\st,T^*\!Z}+A_{\xi^*}\Big],\quad \Gamma \in L_{p_0}(P)\, .$$
The gauge transformations generated by the first class constraints  and the gauge fixings required
such that the dynamics of the extended action $\widetilde S_\alpha$ reproduce those of $S_\alpha$ have been  computed in~\cite{HW}. This extended action is the starting point for Hamiltonian BRST quantization.

\subsection{BRST Charge}\label{ssec:BRSTComplex} The homological quantization of the extended action $\wt S_\alpha$ proceeds in two  steps:
\begin{itemize}
	\item[(i)] A construction of the space of classical solutions modulo gauge symmetries, or {\it reduced phase space}, in terms of derived geometry.
	\item[(ii)] A canonical quantization.
\end{itemize}

For the first step, 
one constructs  a  Hamiltonian BRST complex giving the
homological resolution of  the  reduced phase space. This construction takes place on a suitable jet bundle, in terms of the derived intersection of the equations of motion and the derived quotient by the gauge symmetries. The resulting complex  combines the derived critical locus of $\widetilde S_\alpha$, modeled, in negative degree, on the BV-complex of polyvector fields  and, in positive degree,  on the (twisted) Chevalley-Eilenberg complex of the Lie algebra generated by the gauge symmetries---for systems where the constraint algebra is first class but not Lie, see ~\cite{Letterbomb}.

In the case of the contact action $S_\alpha$, the reformulation in Subsection \ref{ssec:HamPersp} of the constrained Hamiltonian system in terms of $(2n+1)$ first-class constraints yields an abelian Lie algebra represented on the algebra of smooth functions $C^\infty(P)$. The  submanifold $N$ generates  a coisotropic ideal $I$. The Koszul resolution of the quotient $C^\infty(P)/I$ is performed by  introducing ghost number one variables $\{c^i\, |\, 1\leq i\leq 2n+1\}$ that give a  basis for the graded vector space $V$ and  its exterior algebra in terms of products of ghosts. Ghost momenta $\{b_i\, |\, 1\leq i\leq 2n+1\}$---assigned ghost number minus one---yield a basis for the dual  vector space~$\check{V}$ and its exterior algebra. Together, the ghosts and their momenta  are used to construct a  module graded by ghost number. This is   a dg-algebra with a ghost number one differential given by the nilpotent  BRST charge $\Omega$. In our case, the BRST charge reads
$$\Omega_{BRST}=
c^i\Phi_{i}.$$
This acts on functions of the {\it BRST  extended phase space}, which depend on (coordinates in) the extended phase space $(P,\Omega)$, the ghosts and ghost momenta.
At the level of sheaves, this gives a homological resolution of the reduced phase space. Since we are primarily interested in proceeding to a quantization, a detailed analysis of the space of sections is not necessary here and we proceed with the second step (ii).

The second step in the homological quantization of the extended action $\wt S_\alpha$ is  a canonical quantization of the
BRST  extended phase space. For this, we   identify the Grassmann algebra of the  ghost variables with the exterior bundle of differential forms on~$Z$ (see~\cite{Remorse}). This allows us to view the quantized BRST charge ~$\widehat \Omega:=\widehat\Omega_{BRST}$ as a differential 1-form and, because it is nilpotent $\widehat \Omega^2=0$, 
 as a flat connection. 
For the quantization of the  
 second summand $\xi^*$  in the phase space $T^*\!Z\oplus\xi^*$, one may choose  classical fiber coordinates $(s_1,\ldots, s_{2n})$.
  Then a  model to keep in mind is   a fiberwise choice of polarization in~$\xi^*$
 with vertical coordinates $(s_1,\ldots , s_n)$ and the Weyl representation of the Heisenberg algebra 
$$\mathfrak{heis}_n={\rm span}\{s_1,\ldots,s_n,\tfrac\hbar i\dd_{s_1},\ldots,\tfrac \hbar i\dd_{s_n}, \hbar\}\, .$$ 
To be precise, the bundle of frames on the  distribution $\xi$ is a principal ${\rm Sp}(2n)$-bundle ${\mathcal M}Z\to Z$ over the contact manifold ~$Z$. Then we pick an (infinite dimensional)  Hilbert space  ${\mathcal H}$ with a unitary, projective representation of~$ {\rm Sp}(2n)$. For the case ${\mathcal H}=L^2({\mathbb R}^n)$, the latter is provided by the metaplectic representation, which we discuss in detail in Section~\ref{sec:metaplectic} \!.
As discussed in the introduction,
this data determines the  Hilbert bundle ${\mathcal M}Z\times_{{\rm Sp}(2n)}{\mathcal H}=:{\mathcal H}Z\to Z$
associated to ${\mathcal M}Z$. In addition, the bundle of (complete orthonormal) frames on ${\mathcal H}Z$ is a principal $U({\mathcal H})$-bundle~${\mathscr M}$. Again, we have an associated Hilbert bundle $\SH:={\mathscr M}\times_{U({\mathcal H})} {\mathcal H}$. 

By the Stone--von Neumann theorem, the Hilbert space ${\mathcal H}$ is a representation of the  Heisenberg algebra, defined up to unitary equivalence. Hence, the bundle $\SH$ is
 equipped with the symplectic analog $s$ of Clifford multiplication, that we refer to as the {\it Heisenberg map}
$$
s:\Gamma(\xi\otimes {\SH})\to \Gamma(\SH) \, ,
$$
where
\begin{equation}\label{sdoesthis}
s(u)\circ s(v)-s(v)\circ s(u)=-i\varphi(u,v)  {\rm Id}\, ,\quad \forall u,v \in \Gamma(\xi)\, .
\end{equation}
Since the Hilbert bundle
is endowed with a hermitean 
vector bundle metric whose restriction to $\pi^{-1}(z)\times \pi^{-1}(z)$ is an inner product on the Hilbert space ${\mathcal H}$, we have a sesquilinear map on sections  $(\Psi,\Phi)\in\Gamma({\mathscr H})\times \Gamma({\mathscr H})\to C^\infty_{\mathbb C} Z$
given by
$
\langle \Psi(z),\Phi(z)\rangle
$.
This defines an adjoint $\dagger$ on $\operatorname{End}\big(\Gamma({\mathscr H})\big)$ such that $s^\dagger (u)= s(u)$. Locally, trivializing the bundle $\SH$, the map $s$ gives a 1-form $\hat \kappa$ taking values in endomorphisms of~${\mathcal H}$, which we call a {\it quantum  calibration map}, since it may be viewed as calibrating classical observables on~$Z$  with their quantum Hilbert space counterparts. Fiberwise, the map $\hat \kappa$
gives a representation of the Heisenberg algebra where we have made rescalings such that the central element~$\hbar$
acts by the identity.

Now, given a (local) choice of soldering forms $e^a$ subject to 
$$\frac12 J_{ab} e^a\wedge e^b = d\alpha\, ,$$
the quantum calibration map may be locally expressed as
$$
\hat \kappa = 
e^a \hat s_a\, ,
$$
where the hermitean operators~$\hat s^a=\hat s^a{}^\dagger$  on ${\mathcal H}$  obey $$[\hat s_a,\hat s_b]=i\cdot j_{ab}\, ,$$
acting on the Hilbert space~${\mathcal H}$.
These may be viewed as the quantization of the fiber coordinates $s^a$.

\begin{remark} Ontologically,  any quantum system obtained from a classical one via some quantization procedure
must enjoy a flow with respect to $\hbar$. This is indeed the case in our context, as we can define the operator
\begin{equation}
{\sf gr}=2 \hbar \frac{\partial }{\partial \hbar} 
\, ,
\end{equation}
that measures this flow. Then we may declare that an operator $\widehat O^\hbar\in\End\big(\Gamma(\SH)\big)$
has {\it grade $k$} if
$$
\big[{\sf gr},\widehat O^\hbar\hh\big]=k\hh  \widehat O^\hbar\, .
$$
More generally, when $\widehat O^\hbar$ has a Laurent series about $\surd \hbar = 0$, we call the coefficient $O^{(k)}$ of
$\hbar^{k/2}$ the {\it grade~$k$ part} of $\widehat O^\hbar$.\hfill$\blacksquare$
\end{remark}

\medskip

Given the data $(\SH, \hat \kappa)$, the quantization of $(Z,\alpha)$ is completed by specifying a 
 quantum BRST charge $\widehat\Omega$. This is  an ${\rm End}({\mathcal H})$-valued 1-form differential $d^{\hat A}:=d+\hat A$ which in our case reads 
$$\frac{i}\hbar\, \widehat\Omega=\frac{\alpha\, {\rm Id}}{i\hbar}+\frac{\hat \kappa}{i\surd\hbar}+d+\hat \omega,$$
where  the ${\rm End}({\mathcal H})$-valued 1-form $\hat \omega$ must be chosen both such that
$\widehat \Omega$ is nilpotent and regular in~$\surd \hbar$ around $\hbar =0$ (so as not to disturb the leading classical 
$\alpha/(i\hbar)$ behavior).
In order to have a global quantization, we also ask that there exist a one-parameter family of quantum dynamical systems---{\it i.e.} flat connections~$\nabla^\hbar$ on~$\SH$---for each of which the $\widehat \Omega$ displayed above (locally) defines a connection form.
We may then  view~$\nabla^\hbar$ as a degree one endomorphism of the {\it BRST Hilbert space} ${\mathcal H}_{\rm BRST}:=\Gamma({\mathscr H}\otimes \wedge Z)$.

In conclusion, this defines a dynamical quantization of the Reeb dynamics of~$\alpha$ in the sense of Definition~\ref{QMdef} \!. 
When $\hat \omega$ is given as a formal  expansion order by order in  $\surd\hbar$, we call~$\nabla^\hbar$ a {\it formal quantum connection}. We also use the moniker {\it local quantum connection} when a solution for~$\nabla^\hbar$ is only given on a patch of~$Z$.

\begin{remark}
When this is clear from context, we drop the superscript $\hbar$ on the quantum connections~$\nabla^\hbar$. Note also that in~\cite{HLW}, a quantum Darboux theorem is established, stating that, locally, any two formal quantum connections are equivalent.\hfill$\blacksquare$
\end{remark}

\subsection{Illustrative Examples and Computations}\label{ssec:Examples} 

In this subsection, we provide detailed examples illustrating the construction described in Subsection \ref{ssec:BRSTComplex} above, in Examples \ref{ex:Contact1} and \ref{proto}, and also set up the an archetypal example for the quantization of  contact geometries  in Example \ref{ex:Contact2} and formalize this in Definition \ref{strictequi} below. First, starting from the data $(\alpha, \hat \kappa)$, the flatness condition
$$
\nabla^2 = 0\, ,
$$
can be always be solved locally and  formally; a local solution is given in the following example, while formality is addressed in Example~\ref{proto} below.

\begin{example}\label{ex:Contact1}
Let us choose local Darboux coordinates in which we have
$$\alpha=
p_{{\bar a}} d x^{{\bar a}} - dt.$$ 
A local solution with $\hat \omega=0$ and $\psi( \sigma^1,\ldots, \sigma^n)\in{\mathcal H}=L^2({\mathbb R}^n)$ is given by
 \begin{equation}\label{darb}i\hbar\,  d^{\hat A} = 
 dt\Big(i\hbar \frac\partial{\partial t}-1\Big)+
dx^{{\bar a}}\Big(p_{{\bar a}} 
+i\hbar
\big(\frac{\partial}{\partial \sigma^{{\bar a}}} +\frac{\partial}{\partial x^{{\bar a}}}\big)\Big)  +  dp_{{\bar a}}\big(\sigma^{{\bar a}} +i\hbar\frac\partial{\partial p_{{\bar a}}}\big)
\, .\end{equation} 
Defining rescaled variables $\sigma^{\bar a}=:\surd \hbar \hh s^{\bar a}$, the 
quantum calibration map is given by
$$
\hat \kappa = 
dp_{{\bar a}} s^{{\bar a}} + i dx^{{\bar a}} \frac{\partial}{\partial s^{{\bar a}}}
= 
e^a \hat s_a. 
$$
It is closed because $e^a=(dp_{\bar a},dx^{\bar a})$ obey $de^a = 0$. Note that by twisting with an ${\rm Sp}(2n)$ gauge transformation $U=\exp(\frac{1}{2i\hbar} u_{ab}\hat s^a \hat s^b)$  taking values in the projective metaplectic representation, it is possible to $($locally$)$  reach a
 quantum connection form $U\circ d^{\hat A} \circ U^{-1}$ with any choice of frames~$e^a$. \hfill$\blacksquare$\end{example}

The following example demonstrates how
to solve for a formal quantum connection given the data of a contact form and symplectic frames for the distribution $\xi$.

\begin{example}
\label{proto}
Let $(Z,\alpha)$
be a strict contact manifold and
$\{e^a\}$ a set of local frames for the codistribution $\xi^*$ obeying
$$
\frac12 J_{ab} e^a\wedge e^b = d\alpha\, .
$$
Note that when 
$Z$ is parallelizable
these frames exist globally and this example extends to global formal quantizations.
Now, for a Weyl ordered quantization in which functions are
replaced by Weyl ordered operators, we make the following
formal ansatz
for $\hat \omega$:
$$\hat \omega=-i\sum_
{k\geq 0} \hbar^{\frac k2}\, 
\omega^{a_1\cdots a_k} \hat s_{a_1}\cdots \hat s_{a_k}\, ,
$$ 
where the 1-forms $\omega^{a_1\ldots a_k}$ are totally symmetric in the labels $a_1,\ldots, a_k$. 
Formal flatness of $\nabla^\hbar$ then imposes a system of equations, which can be obtained by working order by order in the grading:
\begin{eqnarray*}
&d\alpha - \tfrac 12 e^a \wedge e_a=0\, ,&\\[1mm]
&d^\omega e^a \,=0\,  ,&
\\[1mm]
&F^{ab} + \omega^{abc}\wedge e_c = 0\, , &\\
&d^\omega\omega^{abc} + \omega^{abcd}\wedge e_d = 0 \, ,&\\
&d^\omega\omega^{abcd} 
+
\omega^{(ab|e}\wedge \omega_e{}^{cd)}
+ \omega^{abcde}\wedge e_e = 0 \,, &\\[-1mm]
&\vdots&
\end{eqnarray*}
In the above we have denoted  $d^\omega X^{ab...}:=dX^{ab...}+\omega^a{}_e X^{ab...}+
\omega^b{}_e X^{ae...}+
\cdots$ and $F^{ab}:=d\omega^{ab}+\omega^a{}_c \wedge\omega^{cb}$, and indices are raised and lowered using $J_{ab}$ with the convention $J_{ab}X^b=: X_a$. The first relation in the above system is the one used to define the frames $e^a$; this is thus solved. Now, we aim at algebraically solving, order by order,  the remaining equations for the 1-forms $\omega^{ab},\omega^{abc},\ldots$.
In order to solve the second equation, {\it i.e.} solve $de^a + \omega^{ab} \wedge e_b = 0$, we note that the relation
$\frac12 j_{ab}e^a\wedge e^b = d\alpha$ implies that $e_a\wedge de^a = 0$.
Thus, we may expand
$$
\quad-de^a =  \tau^{ab} \alpha\wedge e_b + \tau^{abc} e_b \wedge e_c\, ,
$$
where the functions $\tau^{ab}$
and $\tau^{abc}$ obey
$\tau^{ab}=\tau^{ba}$ and $\tau^{abc}=\tau^{bac}$ and $\tau^{abc}=-\tau^{acb}$.
Similarly, we may also expand
$\omega^{ab}=w^{ab} \alpha + w^{abc} e_c$ so that 
$$
\omega^{ab}\wedge e_b = w^{ab} \alpha\wedge e_b + w^{abc} e_b\wedge e_c
\, ,$$ 
where the functions $w^{ab}$ and $w^{abc}$ obey
$w^{ab}=w^{ba}$, 
$w^{abc}=w^{bac}$.
 Thus, we need to solve the system of equations
 $$
 w^{ab}=\tau^{ab} \, ,\qquad w^{a[bc]}=\tau^{abc}\, ,
 $$
 which
 always has a solution for $w^{ab}$ and $w^{abc}$. 
 Schematically, in a Young tableaux notation, we  have
$$
\omega^{ab}=
\ytableausetup
{mathmode,centertableaux,boxsize=.7em}
\ydiagram{2}+
\ydiagram{2}\otimes\ydiagram{1}
\ni 
\ydiagram{2}+
\ydiagram{2,1}=-de^a\,.
$$
This solves the second equation. In order to address the next order, {\it i.e.} the third equation, we note that 
$
d^\omega e^a = 0$ implies $F^{ab}\wedge e_b=0$. Using the same diagrammatic scheme this says that the 2-form~$F^{ab}$  decomposes as
$$
F^{ab}=
\ytableausetup
{mathmode,centertableaux,boxsize=.7em}
\ydiagram{3}+
\ydiagram{3,1}\, ,
$$
and note that for the 1-form $\omega^{abc}$ we have
$$
\omega^{abc}= 
\ytableausetup
{mathmode,centertableaux,boxsize=.7em}
\ydiagram{3}+
\ydiagram{3}\otimes\ydiagram{1}
\ni 
\ydiagram{3}+
\ydiagram{3,1}\, .
$$
In general, analogs of the  identities $e_a de^a = 0 = e_a \wedge F^{ab}$ continue to hold at every order; this is a  consequence of the Bianchi identity $[\nabla,\nabla^2]=0$ applied to the lowest order non-vanishing term in the curvature~$\nabla^2$. For instance, at the next order
one has
$
e_a\wedge d^\omega \omega^{abc} = 0\, ,
$
and thus the decompositions
$$
\omega^{abcd} = 
\ytableausetup
{mathmode,centertableaux,boxsize=.7em}
\ydiagram{4}+
\ydiagram{4}\otimes\ydiagram{1}
\ni 
\ydiagram{4}+
\ydiagram{4,1}=-d^\omega \omega^{abc}\,.
$$
This pattern continues to all higher orders and gives a formal solution for $\nabla$.
In Example~\ref{Hamsys} below, we will apply this formal quantization to Hamiltonian systems.\hfill$\blacksquare$\end{example}

An important consequence of the flatness condition is that a quantum connection  induces a connection on the distribution $\xi$, denoted $\nabla^\xi$, as follows.
We first expand the flatness condition for $\nabla$ in a Laurent series about $\surd \hbar = 0$, and at grade $-2$ one finds that the grade $-1$ part of $\nabla$ obeys the Heisenberg map Relation~\nn{sdoesthis}, so we have $\nabla^{(-1)}=s$.
Moreover, at grade $-1$ the flatness condition gives the equation
\begin{equation}\label{employ-me}
\nabla^{(0)} \circ s+ s \circ  \nabla^{(0)}=0\, . 
\end{equation}
Thus, if $u \in \Gamma(\xi)$, by linearity of $\nabla$ we may define $\nabla^\xi u$ by
\begin{equation}\label{extract}
[\nabla^{(0)},s(u)]=s(\nabla^\xi u)\, .
\end{equation}
This is well-defined because the maximal rank condition on the Levi 2-form $\varphi$ ensures that $s$ is injective.  
Also, the connection $\nabla^\xi$ obeys an analog of a  torsion-free condition; namely, for~$u,v\in \Gamma(\xi)$,
\begin{equation}\label{torsionlike}
\nabla^\xi_u v - \nabla^\xi_v u = [u,v] - \rho \alpha([u,v])=:{\mathcal L}^\xi_u v\in \Gamma(\xi)\, .
\end{equation}
To see that the above holds, one computes $s$ of the left hand side of the above display and then employs Equation~\nn{employ-me}.
In the following Example \ref{ex:Contact2}, the grade zero operator $\nabla^{(0)}$ is simply the exterior derivative on ${\mathbb R}^3$, which induces a flat connection on the distribution for ${\mathbb R}^3$ equipped with its standard contact form.

\begin{example}\label{ex:Contact2}
Let $(x,y,z)\in\R^3$ be Cartesian coordinates and $(r,\theta,z)\in\R^3$ cylindrical coordinates such that $(r,\theta)$ are polar coordinates for the $(x,y)$-plane. Consider the strict contact structure 
$({\mathbb R}^3,\alpha)$ with $\alpha:=\frac12 r^2 d\theta-dz$ and the Hilbert space ${\mathcal H}=L^2({\mathbb R})$, with wavefunctions $\psi(s)\in{\mathcal H}$.  Then
a quantum connection form is 
$$
d^{\hat A} = \frac{\alpha}{i\hbar} + \frac{\hat \kappa}{i\surd\hbar} + d\, ,
$$
where the calibration is given by $\hat \kappa = 
s_- dr + s_+ rd\theta\, , $ with
$$
s_-:=i\cos\theta  \, \frac{\partial}{\partial s}+ 
\sin\theta \, s\, ,
\qquad
s_+:=i \sin\theta   \,  \frac{\partial}{\partial s} -
\cos\theta \, s\, .
$$
The vector field $u=u^-  \frac{\partial}{\partial r}
+\frac{u^+}r \big(
\frac{\partial}{\partial \theta}+\frac {r^2}2\frac{\partial}{\partial z}
\big)
$ is in the distribution $\xi$, for $u^\pm \in C^\infty({\mathbb R}^3)$.
Since we have that $d s_- =  -d\theta\hh s_+$ and $d s_+=   d\theta\, s_-$, it follows that
$$
d \hat \kappa(u) = \hat \kappa (\nabla^\xi u)=
\left[d+
\begin{pmatrix}
0&  d\theta  
\\
 -d\theta &  0
\end{pmatrix} 
\right]\begin{pmatrix}
u^+\\ u^-
\end{pmatrix}\, ,
$$
where we are employing a matrix notation in which   $\hat \kappa(u) 
=u^+ s_++u^-s_-
=:  \begin{pmatrix} u^+\\ u^-\end{pmatrix}$. For later use, we observe that the identity operator on the distribution can be obtained by acting with the connection~$\nabla^\xi$ on the Euler vector field $r\frac\partial{\partial r}$; {\it i.e.}  
$$
\nabla^\xi \Big(r\frac\partial{\partial r}\Big)=\operatorname{Id}:\Gamma(\xi)\to \Gamma(\xi)\, .
$$
\hfill$\blacksquare$\end{example}

Let us now develop an example which illustrates the interplay between  the quantization of a given strict contact manifold $(Z,\alpha)$ and an {\it ambient} strict contact space $(M,A)$. This theme will be further developed in 
Section \ref{sec:Contract}.

\begin{example}\label{getmaxy}
Let $(Z,\alpha)$ be a strict contact structure and $(\mu,\theta)$ be Cartesian coordinates for $\R^2$. Consider the $(\dim Z+2)$-dimensional  strict contact structure~$(M,A)$, where $M:={\mathbb R}^2\times Z$ 
and the contact 1-form is given by
\begin{equation}\label{hereisA}
A:=e^{2\mu} \alpha + d\theta,\mbox{ and thus }
dA = e^{2\mu}(d\alpha+2d\mu\wedge \alpha )\, .
\end{equation}
This is the contactization of the symplectization of $(Z,\alpha)$, as $e^{2\mu}\alpha$ is a Liouville form for the symplectization $\mathcal{C}\to Z$, under the identification $\mathcal{C}\cong\R_\mu\times Z$.

In general, the contact 1-form $A\in\Omega^1(M)$ is unchanged under the replacement $\mu\to \mu - \varpi$ and $\alpha \to e^{2\varpi}\alpha$. In particular, there exists a contactomorphism generated by the vector field $X:=\frac{\partial }{\partial \mu}+2 \theta \frac{\partial }{\partial \theta}$, which satisfies ${\mathcal L}_X  A =2A$. The corresponding flow, {\it i.e.} $\R^+$-action $x$, is given by $(\mu,\theta,z)\stackrel{x_\lambda}\longmapsto 
(\mu+\log \lambda, \lambda^2 \theta,z)$.

In order to quantize the strict contact manifold $(M,A)$, we introduce frames 
\begin{equation}\label{whoframedrogerrabbit}E^A
=(2e^{2\mu}\alpha,e^\mu e^a,d\mu )\Rightarrow dA = 
\frac12 J_{AB} E^A \wedge E^B
\, ,
\end{equation}
where the index runs as $A=(+,a,-)$, $1\leq a\leq 2n+1$, and 
$$
\big(J_{AB}\big)=
\begin{pmatrix}
0&0&-1\\
0& j_{ab}&0\\1&0&0
\end{pmatrix}\, .
$$
Note that ${\mathcal L}_X E^A = \operatorname{diag}(2,1,0)^A{}_B E^B$. Now, consider the $($connection$)$ matrix $\Omega^A_B$ of 1-forms given by
$$
\Omega^A{}_B := \left(\begin{array}{ccc}   -d\mu & e^{\mu}e_b &  2e^{2\mu}\alpha\\  -e^{-\mu}P^a  & \omega^a{}_b & e^\mu e^a\\ \frac12e^{-2\mu}Q & e^{-\mu}P_b & d\mu  \end{array} \right)\qquad (\mbox{this satisfies }\Omega_{AB} = J_{AC} \Omega^C{}_B= \Omega_{BA}),
$$
where  $Q$, $P^a\!$,  and  $\omega^{ab}$ obey the algebraic system of equations
\begin{equation}\label{flatter}
 e_a\wedge P^a = \alpha\wedge Q \, ,\qquad  d^\omega  e^a  =- 2\alpha\wedge P^a\, .\end{equation}
Here we define 
$d^\omega v^a := d v^a + \omega^a{}_b v^b$, 
$v_a =j_{ab} v^b$, and require $\omega^{ab}=\omega^{ba}$. We also denote $X^A:=E^A(X)=(0,0,1)$. These relations were chosen such that the connection matrix $\Omega^A{}_B$ obeys 
\begin{eqnarray*}
d^\Omega X^A &:=& \, d X^A + \Omega^A{}_B X^A = E^A\, ,\\[1mm]
 F^{A}{}_B X^B\! &:=&  \!(d \Omega^A{}_B + \Omega^A{}_C\wedge\Omega^C{}_B)  X^B=0\, ,
\end{eqnarray*}
and in turn,
$$
d^\Omega E^A := d E^A + \Omega^A{}_B E^B=0\, .
$$
Altogether, the above conditions imply that the connection 
\begin{equation}\label{nab0}
d^{\hat A^0}=\frac{A}{i\hbar} + \frac{E^A \hat S_A}{i\surd\hbar} + d + \frac1{2i} \, \Omega^{AB} \hat S_A \hat S_B
\end{equation}
 on the Hilbert bundle ${\SH}\to M$ obeys
 $$
 (d^{\hat A^0})^2 = \frac{1}{2i} \, F^{AB} \hat S_A \hat S_B\, ,
 $$
 where $\hat S_A\in \operatorname{End}{\SH}$ satisfy $\hat S_A\circ \hat S_B - \hat S_B \hat S_A =i J_{AB}\operatorname{Id}$. This connection may have curvature and thus $d^{\hat A^0}$ is not, in general, a quantum connection form. In order  to construct a formal quantum connection when the curvature $F^{AB}\neq 0$, higher order terms are required to achieve flatness. The properties of these higher order terms are interesting;
 observe that
 \begin{equation*}\label{equi}
 {\mathcal L}_X A = 2A\, ,\qquad 
  {\mathcal L}_X (E^A \otimes E_A)  = 2 E^A\otimes E_A\, ,  \qquad
 {\mathcal L}_X (\Omega^{AB} E_A\otimes E_B) = 2 \Omega^{AB} E_A \otimes E_B\, .
 \end{equation*}
 Moreover, the modified grading operator $\widetilde {\sf gr}={\sf gr}-\frac 1{i} \hat S_+\hat S_-$ obeys
 $$
 \widetilde {\sf gr}\circ \hat S_A - \hat S_A\circ \widetilde {\sf gr}=\hat S_B \operatorname{diag}(-1,0,1)^B{}_A\, .
 $$
 Hence we conclude that ${\sf Gr}:={\mathcal L}_X+\widetilde{\sf gr}$ obeys the equation
$$
{\sf Gr}\circ d^{\hat A^0} - d^{\hat A^0}\circ {\sf Gr}=0\, .
$$
Thus, one can  search for a formal quantum connection $d^{\hat A} = d^{\hat A^0}+\widehat \Omega^1$ that commutes with ${\sf Gr}$.
A~useful ansatz for such  solutions is given by
\begin{equation}\label{notme}
 \widehat \Omega^1=
 \sum
 _{
 \ell\geq 1
}
\frac{\hbar^{\frac\ell 2}}{i\hh k!}\, 
\Omega^{A_1\ldots A_\ell} \hat S_{A_1}\cdots \hat S_{A_\ell}
=-\big(\widehat \Omega^1\big)^\dagger
\, ,\qquad
({\mathcal L}_X -2) (\Omega^{A_1\ldots A_k} E_{A_1}\otimes\cdots \otimes E_{A_\ell}) = 0\, .\end{equation}
The tensors $\Omega^{A_1\ldots A_\ell} E_{A_1}\otimes\cdots \otimes E_{A_\ell}$  have definite homogeneity and 
take values in~$\Gamma(\otimes^\ell\Xi^*)$. In the language of the upcoming Section~\ref{sec:Contract}, these are contractors.\hfill$\blacksquare$\end{example}

Quantum connections associated to contact forms 
in the kernel of
$
{\mathcal L}_X - 2\, ,
$
where $X$ is a vector field generating an $\R^+$-action, will play a distinguished {\it r\^ole} in the study of quantized contact structures presented in Section~\ref{sec:Quant}. In preparation for that, we make the following definition.

\begin{definition}\label{strictequi}
Let $(M,A,x)$ be a strict contact structure with an $\R^+$-action $x$ generated by a vector field $X$ that obeys ${\mathcal L}_X A= 2A$. A quantization $\nabla$ of the Reeb dynamics of $A$ is said to be {\it $\R^+$-equivariant} if, acting on sections of the Hilbert bundle over $M$, the relation 
\begin{equation}\label{eqgr}
[\nabla^{(0)}_X+{\sf gr},\nabla_U]=0\,
\end{equation}
holds for all homogeneous vector fields $U\in \Gamma(TM)\cap \ker {\mathcal L}_X$.\hfill$\blacksquare$
\end{definition}

\noindent
Note that  Definition \ref{strictequi} formalizes various features of Example~\ref{getmaxy}
above. At leading order in the grading, the condition above imposes the homogeneity condition ${\mathcal L}_X A = 2A$. At grade $-1$ one finds the condition $d_U^\Omega X^A=E^A(U)$, since $
[d^{\Omega}_X, E^A(U) \hat S_A]
=(d^\Omega_X E^A(U)) \hat S_A =
(\nabla^\Omega_U E^A(X))\hat S_A
$, where we used Equation~\nn{torsionlike} and $[X,U]=0$. At grade zero, we learn that $\iota_X$ acting on the curvature of~$\nabla^\Omega$ vanishes. This condition will be commensurate with 
those necessary for $\nabla^\Omega$ to be an ambient tractor connection, as will be discussed in Section~\ref{ssec:ambientconnection}. At higher grades, the definition imposes the conditions listed in the last display of the example. A second detailed example of an $\R^+$-equivariant quantization is given in Section~\ref{S3}. Along similar lines to the preceding discussion, it is also interesting to study  contact forms in the kernel of ${\mathcal L}_u$ for vector fields~$u$ associated to conserved charges. This is the content of the next Subsection~\ref{symm}.

\subsection{Symmetries and Charges}\label{symm}

Let $(Z,\alpha)$ be a strict contact manifold and $\rho$ its associated Reeb vector field. For every vector field $u\in \Gamma(TZ)$ such that
$$
{\mathcal L}_u \alpha = d\beta\, ,
$$
for some $\beta\in C^\infty Z$,
the quantity
$
Q_u:=\alpha(u)-\beta
$
is a {\it conserved charge}, meaning that it is preserved by Reeb dynamics:
$$
{\mathcal L}_\rho Q_u = 0\, .
$$
Let us call a vector field $u$ 
such that ${\mathcal L}_u \alpha$ is exact a {\it strict contact symmetry}.
In fact, a converse of the above statement holds; the following is the strict contact analog of the Noether theorem:
\begin{proposition}
For every strict contact symmetry $u$, there exists a corresponding conserved charge~$Q_u$, and conversely for every 
conserved charge~$Q$, there exists a strict contact symmetry.
\end{proposition}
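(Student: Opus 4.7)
The plan is to prove both implications by direct Cartan calculus, exploiting only the defining properties $\alpha(\rho)=1$ and $\mathcal{L}_\rho\alpha = 0$ of the Reeb vector field $\rho$ (equivalently $\iota_\rho d\alpha = 0$).

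For the forward direction, I would take a strict contact symmetry $u$ with $\mathcal{L}_u\alpha = d\beta$ and compute $\mathcal{L}_\rho Q_u = \rho(\alpha(u)) - \rho(\beta)$ termwise. The identity $\rho(\alpha(u)) = (\mathcal{L}_\rho\alpha)(u)+\alpha([\rho,u])$ together with $\mathcal{L}_\rho\alpha = 0$ reduces the first term to $\alpha([\rho,u])$. For the second term, applying $\iota_\rho$ to Cartan's magic formula $d\beta = d(\iota_u\alpha) + \iota_u d\alpha$ and using $\iota_\rho d\alpha = 0$ yields $\rho(\beta) = \rho(\alpha(u)) = \alpha([\rho,u])$. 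The two contributions cancel, so $\mathcal{L}_\rho Q_u = 0$, confirming that $Q_u$ is indeed conserved along Reeb flow.

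For the converse, I would invoke the standard isomorphism in contact geometry between $C^\infty(Z)$ and contact vector fields: to each $H\in C^\infty(Z)$ there corresponds a unique $X_H\in \Gamma(TZ)$ characterized by $\alpha(X_H)=H$ and $\iota_{X_H} d\alpha = (\rho H)\alpha - dH$, whence Cartan's formula gives $\mathcal{L}_{X_H}\alpha = (\rho H)\alpha$. Existence and uniqueness of $X_H$ follow from the nondegeneracy of $d\alpha|_\xi$ together with the splitting $TZ = \xi\oplus\mathbb{R}\rho$ determined by $\alpha$ (see~\cite{Geiges08}). Given a conserved charge $Q$ with $\rho(Q)=0$, setting $u := X_Q$ then gives $\mathcal{L}_u\alpha = 0 = d(0)$, so $u$ is a strict contact symmetry (with $\beta = 0$) realizing $Q_u = \alpha(u) = Q$, as required.

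Neither direction is expected to present a real obstacle; both should reduce to one-line identities once the Reeb field's properties are in play. The only mildly delicate point is in the converse, where one must observe that $X_Q$ is actually a \emph{strict} contact symmetry (not merely a contact vector field) precisely because $\rho(Q) = 0$ forces $\mathcal{L}_{X_Q}\alpha$ to be the exact form $0$. This is what makes the assignment $Q\mapsto X_Q$ the natural inverse (up to kernel ambiguities that leave the charge invariant) of $u\mapsto Q_u$.
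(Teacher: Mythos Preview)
Your proof is correct. The forward direction is essentially the same computation as the paper's, only organized via the Leibniz rule for $\mathcal{L}_\rho$ rather than Cartan's formula applied to $d(\alpha(u))$.

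In the converse you take a slightly different route. The paper simply solves $\iota_u d\alpha = dQ$ for $u$ (possible because $\iota_\rho dQ = \mathcal{L}_\rho Q = 0$ puts $dQ$ in the image of $\iota_{\,\cdot\,}d\alpha$), leaving the Reeb component of $u$ free; it then checks $\mathcal{L}_u\alpha = d(Q+\alpha(u))$ is exact. You instead invoke the contact Hamiltonian vector field $X_Q$, which fixes the Reeb component to $Q\rho$ and yields the sharper conclusion $\mathcal{L}_{X_Q}\alpha = 0$ together with $Q_{X_Q}=Q$. Your construction is thus a specific, canonical member of the paper's family (up to a sign: the $\xi$-part of $X_Q$ solves $\iota_{\,\cdot\,}d\alpha = -dQ$), and it has the advantage of exhibiting a genuine right inverse to $u\mapsto Q_u$. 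The paper's approach, on the other hand, is more self-contained in that it does not appeal to the existence of contact Hamiltonian vector fields as an external fact.
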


\begin{proof}
Given a strict symmetry $u$,   the function
$$Q_u=\alpha(u) -\beta\, ,$$
where $d\beta = {\mathcal L}_u \alpha$, is a conserved charge because
$$
{\mathcal L}_\rho Q_u = 
(\iota_\rho \circ d \circ \iota_u)\,  \alpha - \iota_\rho d\beta = \iota_\rho \, \big(
{\mathcal L}_u \alpha + \iota_u \varphi
-d\beta
\big) = 0\, .
$$ 
The second step used Cartan's formula, while the last used $\iota_\rho \varphi = 0$.

For the converse statement, 
given $Q$ that is conserved, 
take $u$ to be any solution of
$$
\iota_u \varphi = dQ\, .
$$
Note that $d Q$ is in  the image of the map $u\mapsto \iota_u \varphi$ because $\varphi$ has maximal rank and $\iota_\rho d Q = {\mathcal L}_\rho Q=0$. 
We remark that $u$ is unique up to the addition of a term $f \rho$ for any $f\in C^\infty Z$. To complete the proof, note that
$$
{\mathcal L}_u \alpha=
\iota_u \varphi + d( \alpha(u) )
=d(Q+\alpha(u))\, .
$$ 
\end{proof}

An important question in the study of quantizations of dynamical systems is how classical conserved charges are promoted to quantum operators that, in a suitable sense,  commute with quantum dynamics. In our context, this means searching for a section $\widehat Q_u\in \Gamma({\mathscr M})$ of the adjoint (or principal $U({\mathcal H})$) bundle ${\mathscr M}$,
that is parallel with respect to $\nabla$, {\it i.e.} 
$$
\nabla\circ \widehat Q_u = \widehat Q_u\circ \nabla\, ,
$$
and is obtained as the image $\widehat Q_u = \widehat \phi(Q_u)$ of a mapping on functions in the kernel of ${\mathcal L}_\rho$.
Such a map is called a {\it quantization map}. This is a generalization of the quantization of symplectic manifolds developed by Fedosov~\cite{Fed}, where one studies parallel sections of a Weyl bundle over a symplectic base with respect to a suitable connection. 
Indeed, as shown in~\cite{HLW}, given $f\in \operatorname{ker}{\mathcal L}_\rho\subset C^\infty Z[[\surd\hbar]]$, and a formal quantum connection~$\nabla$, it is possible to assign a  unique section $\widehat \phi(f)$ of $\hbar^{-2}\End({\mathscr H})[[\hbar]]$ that obeys 
$$
\nabla\circ \widehat \phi(f) = \widehat \phi(f) \circ \nabla\,, \quad {\mathcal L}_\rho \widehat \phi(f)= 0\,  .
$$
The {\it quantization map} $\widehat \phi$ is an isomorphism~\cite{HLW}, and defines an associative star product on functions~$f,g\in \operatorname{ker}{\mathcal L}_\rho$, according to 
$$
f\star g = \hat \phi^{-1} (\hat \phi(f)\circ \hat \phi(g))\in \ker {\mathcal L}_\rho\, . 
$$

In many situations, the charge $\widehat Q_u$ associated to a symmetry $u$ is easy to compute. Firstly, note that for any vector field $v$ on $Z$, the pair of operators on the BRST Hilbert space given by $\nabla$ and~${\mathcal L}^\nabla_v:=\iota_v \circ \nabla + \nabla \circ \iota_v$ commute. In addition, the grade $-2$ part of the latter acts as multiplication by $Q_v= \alpha(v)$. Remembering that the difference of two connections on ${\mathscr H}$ gives a section of the adjoint bundle ${\mathscr M}$, we 
have reformulated the problem as a search for a second connection $\nabla'$ such that 
$
\nabla'_u
$
and $\nabla$ commute. In particular, writing 
 $u = \frac\partial{\partial t}$, if the connection form $d^{\hat A}$ is independent of $t$ in some gauge, then we can simply take $d^{\hat A'}=d$ and $\widehat Q_{u} = \iota_u \circ (\nabla-\nabla') + (\nabla-\nabla') \circ \iota_u$. 
 This construction is carried out explicitly in the next example.

\begin{example}\label{chargeextraction}
Consider Cartesian coordinates $(p,q,t)\in{\mathbb R}^3$ and the 1-form
$$
\alpha: = pdq - H(p,q)dt\, ,
$$
where $H:\R^2\to\R$ is a polynomial in $p$ and $q$; $\alpha$ is contact iff $H$ is a non-vanishing function. Then, a quantum connection form, acting on forms and taking values in $L^2({\mathbb R})\ni \psi(s)$, is given by
$$
d^{\hat A}=dt\Big[\frac{\partial }{\partial t} +i \big[H\big(p-i\frac{\partial }{\partial s},q+s\big)\big]_W\Big]+
dp\Big[\frac{\partial }{\partial p}
+i s\Big]
+
dq
\Big[-ip+\frac{\partial }{\partial q}
-\frac{\partial }{\partial s}\Big]\, ,
$$
where we are setting $\hbar = 1$, see~\cite{HW} for details. Here the subscript $W$ denotes Weyl ordering with respect to the symbols $s$ and $\frac\partial{\partial s}$.
The vector field~$\frac{\partial}{\partial t}$ gives a strict contact symmetry.
The classical charge is $Q_{\!\frac{\partial}{\partial t}}=-H(p,q)$ while the quantum  charge reads
$$
\widehat{Q}_{\!\frac{\partial}{\partial t}}=i \big[H\big(p-i\frac{\partial }{\partial s},q+s\big)\big]_W\, .
$$
It is readily seen that this anti-hermitean operator commutes with $d^{\hat A}$, as required.
\hfill$\blacksquare$\end{example}

The next two subsections  conclude this Section 
by briefly discussing how the above strict quantization recovers the Schr\"odinger equation in quantum mechanics, and how to obtain physical correlators from the quantum connection. 

\subsection{Physical States}\label{ssec:PhysicalStates} The  analysis of Subsection \ref{ssec:BRSTComplex} establishes that BRST quantization of the action functional $S_\alpha$ produces a Hilbert bundle $\SH$ over the contact manifold~$Z$ endowed with a flat connection $\nabla$. The true physical quantum states are given by certain elements in the cohomology of the BRST charge. Moreover, the space of chains consists of $\SH$-valued differential forms on the initial contact manifold $Z$. In particular, in degree zero, quantum wavefunctions~$\Psi\in \Gamma(\SH)$ are defined by the partial differential equation
$$\nabla\Psi=0\, ,$$
This recovers the Schr\"odinger equation for Hamiltonian dynamics on ${\mathbb R}^{2n}$ as follows.

\begin{example}
\label{Hamsys}
Consider the strict Darboux ball $(\R^{2n+1},\la_\st-H(q,p,t)dt)$, where $(q,p)\in\R^{2n}$ are Cartesian coordinates and $H\in C^\infty(Z)$ is the contact Hamiltonian to be quantized. Also, we choose the Hilbert space $L^2({\mathbb R}^n)\ni \Psi(s^{\bar a})$.
A formal solution for  the quantized BRST charge $\widehat\Omega=-i \nabla$ is then given by the connection form $d^{\hat A}=d+\hat A$ with
$${\hat A}=\frac{\alpha}{i\hbar}+\frac{1}{i\surd\hbar}
\Big(\big(dp_{\bar a}+\frac{\dd H}{\dd {q^{\bar a}}}dt\big)s^{\bar a}+i\big(dq^{\bar a}-\frac{\dd H}{\dd {p_{\bar a}}}dt\big)\frac{\dd}{\dd{s^{\bar a}}}\Big)\!
-i{dt}\sum_{k\geq2}\frac{\hbar^{\frac k2-1}}{k!}\frac{\dd^k H}{\dd w_{a_1}\ldots \dd w_{a_k}}\, \hat{s}_{a_1}\!\cdots\hat{s}_{a_k},$$	
where $w_a=(q,p)\in\R^{2n}$ and $\hat s_a=(s^{\bar a},\frac1 i \frac\partial{\partial s^{\bar a}})$. In particular, the BRST charge is a finite sum for a polynomial Hamiltonian $H\in C^\infty(Z)$. The parallel condition $$\nabla\Psi=0$$ 
on wavefunctions $\Psi  \in \Gamma(\SH)$ decomposes into three systems of equations, proportional to  $dq^i$, $dp_i$ or $dt$. Upon solving the first two of these, the Schr\"odinger equation with Hamiltonian given by  the canonical Weyl-ordered quantization of $H$ is precisely this last of these equations $($see also ~\cite{HW}$)$.
Indeed, when  the classical Hamiltonian $H$ is independent of the time coordinate $t$,  a quantized Hamiltonian operator that is Hermitean and commutes with $d^{\hat A}$ can be obtained in the way
 discussed in Example~\ref{chargeextraction}.
\hfill $\blacksquare$
\end{example}

\subsection{Correlators}

Finally, let us sketch how physical correlators and probabilities are obtained from the quantum connection. Let \Aries\ be a section of the BRST Hilbert space and $\nabla = d + \hat A$. Then, a formal solution to the parallel condition $\nabla\mbox{\Aries}  =0$ is given in terms of the line operator
$$\mbox{\Aries}(z) = P \exp\Big(-\int^z \hat A\Big)\,  \mbox{\Aries}_0\, .
$$
In the above, $P$ denotes path ordering along any path in $Z$ ending at $z$ and $\mbox{\Aries}_0$ is any closed $\Gamma(\SH)$-valued differential form, used as an initial condition. The above amounts to parallel transport of the data $\mbox{\Aries}_0$ at the initial point of this path to $z\in Z$. Intuitively, for physical correlators, we imagine preparing a (normalized) state $|\Psi_i\rangle\in {\mathcal H}_{z_i}$ at a given $z_i\in Z$, and wish to compute its overlap with some other (normalized) state $|\Psi_f\rangle$ prepared at $z_f$. In this case, the transition probability $P_{fi}$ of measuring this final state given the initial one is
$$
P_{fi}=\Big|\Big\langle \Psi_f\Big| P \exp\Big(-\int^{z_f}_{z_i} \hat A\Big)\Psi_i\Big\rangle \Big|^2\, .
$$
{\it A priori}, the above prescription depends on a choice of path between $z_i$ and $z_f$ whenever the flat connection~$\nabla$ has non-trivial holonomy. In that case, we must modify the section space of the Hilbert bundle by suitably  quotienting by the holonomy operator induced by $\nabla$. {\it E.g.}, this explains how our quantization detects the difference between quantization on a line versus a circle. Ths holonomy  quotient  is described in more detail in~\cite{HLW}.

\section{Contractor Bundles and Their Connections}\label{sec:Contract}

In this section we develop aspects of  tractor  calculus~\cite{BEG} (see also~\cite{Capvak})  for contact structures~\cite{Fox,GNW} necessary for their quantization.
Section~\ref{sec:Quant} details the quantization of contact structures $(Z,\xi)$ and relies both on the results of the present section and the previous section on strict quantization.

\subsection{Symplectization Cone and Densities}
\label{ssec:symplectization} Let $(Z,\xi)$ be a $(2n+1)$-dimensional co-oriented contact manifold. The contact structure $\xi$ 
gives rise to an equivalence class of contact 1-forms $\bm \alpha= [\alpha]=[\Omega^2 \alpha]$, where $\Omega\in C^\infty (Z,\R^+)$ is a smooth, positive, real-valued function; we therefore often replace the label $\xi$ by $\bm \alpha$. This class of 1-forms $\bm \alpha$ defines a rank one subbundle
$$\Cone:=\{v^*\in\Gamma(T^*\!Z):v^*(\xi)=0\}\sse T^*\!Z,$$
over $Z$, {\it i.e.}   $\Cone$ is the space of covectors on $Z$ vanishing along the hyperplane $\xi\sse TZ$. The fibers of $\Cone$ at a point $p\in Z$ are given by the possible values of $\alpha_p$ for the contact forms $\alpha$ in the class $\bm \alpha$. This is an $\R^+$-principal bundle over $Z$ with $\R^+$-action $\alpha_p\mapsto \Omega^2_p \alpha_p$. The quotient of the cone $\Cone$ by this $\R^+$-action is contactomorphic to $(Z,\xi)$. One of the central themes of our dynamical quantization for $(Z,\xi)$ is presenting the contact distribution $\xi$ as a {\it conformal} geometry, {\it i.e.} by understanding a contact distribution $\xi$ as an equivalence class of (contact) 1-forms {\it up to} positive scaling.

The standard symplectic structure $(T^*\!Z,\omega_\st)$ restricts to a symplectic structure $(\Cone,\omega_\st)$ in $\Cone$ which, as above, we refer to as the {\it symplectization} of the contact manifold $(Z,\xi)$; see also our earlier discussion in the introduction. The central feature of the bundle $\Cone$ is that its sections are in bijection with contact 1-forms. The $\R^+$-action filters the space of sections in terms of eigenspaces, according to the weight of the representation. In order to keep track of these tensors, let us introduce the corresponding density bundles.

\begin{definition}
The density bundle $\ce Z[w]$ of weight $w\in\R$ is the associated rank-one vector bundle determined by the  $\R^+$-principal bundle $\Cone$ and the $\R^+$-representation $t\mapsto \Omega^{-w} t\in {\mathbb R}$.
\hfill $\blacksquare$
\end{definition}

\noindent
A smooth section $\bm f$ of the density bundle $\ce Z[w]$ is a  equivalence class
$$\bm f=[\alpha,f]=[\Omega^2 \alpha, \Omega^w f]\, ,$$
where $f\in C^\infty Z$. Conversely, the data of a function $f$ and a choice of $\alpha\in \bm \alpha=[\alpha]$ defines a section $[\alpha,f]\in \Gamma(\ce Z[w])$ for each $w\in\R$. In fact, once a contact form $\alpha\in\Gamma(\Cone)$ is fixed, sections of $\ce Z[w]$ 
are in bijection with functions on ${\mathcal C}$ with homogeneity $w$ 
with respect to the $\R^+$-action, {\it i.e.}   functions $f\in C^\infty \Cone$ obeying
$$
{\mathcal L}_Xf
=wf\, ,
$$ 
where $X$ is the vector field associated to the $\R^+$-action. In general, given any tensor $z\in\Gamma({\mathbb T}Z)$  and a choice of contact form $\alpha$, the same maneuver as above defines a section $[\alpha,z]$ of ${\mathbb T}Z[w]:=\ce Z[w]\otimes {\mathbb T}Z$. Generically, we denote  $BZ[w]:=BZ\otimes \ce Z[w]$, where $BZ$ is an arbitrary vector bundle over $Z$.

Let $v\in \Gamma(T{\mathcal C})$ be a tangent vector which is nowhere parallel to the $\R^+$-infinitesimal generator~$X$. Then we shall say that $v$ is {\it in the distribution}  when $v\in\pi^*(\xi)=\ker \lambda_\xi$, so that $\pi_*(v)\in \Gamma(\xi)$, where~$\pi$ is the projection map for the bundle $\pi:{\mathcal C}\to Z$. Similarly to functions, vectors $v$ on ${\mathcal C}$ in the distribution obeying the homogeneity condition
$$
{\mathcal L}_X v=- v\, $$
are in bijection with sections of $\xi[-1]$. These identifications shall be made implicitly from now on without comment; we will also employ a bold notation for tensor densities of non-vanishing weight, and an unbolded notation for their corresponding tensors with fixed homogeneity on ${\Cone}$.

 Recall that the choice of a contact form $\alpha\in \bm\alpha$, determines the Reeb vector field $\rho^\alpha\in\Gamma(TZ)$, uniquely defined by the two conditions
$$
d\alpha(\rho^\alpha)=0,\qquad \alpha(\rho^\alpha)=1\, .
$$
The $2$-form $\varphi:=d\alpha$ will be referred to as the Levi form. The equivalence class  $\bm \alpha$ of the contact form itself defines the set of Reeb vector fields
$\rho^{\bm \alpha}=\{\rho^\alpha|\alpha \in \bm \alpha\}$.
The codistribution 
$\xi^*$  is the quotient bundle given pointwise by the cokernel of $\bm \alpha$, so $[\omega]=[\omega+\nu \alpha]\in \Gamma(\xi^*)$.
From this data, we can define the projection of the gradient operator $d^\xi$ to the codistribution $\xi^*$ by
$$d^\xi:C^\infty Z\longrightarrow \Gamma(\xi^*),\quad f \longmapsto [d f]:=d^\xi f\, .$$
Upon choosing a  contact form $\alpha\in \bm \alpha$, we obtain a direct sum decomposition $T^*\!Z=\Span(\alpha) \oplus \xi^*_\alpha$ and the projection reads
$$
C^\infty Z\longrightarrow \Gamma(\xi^*_\alpha),\quad f \longmapsto 
d^\alpha f:=d f - ({\mathcal L}_{\rho^\alpha} f)\alpha,
$$
where $\xi^*_\alpha:=\ker \iota_{\rho^\alpha}$. Finally, given any vector $v\in \Gamma(\xi)$, we define $v^\flat\in \xi^*$ by $
v^\flat = \varphi(v,\pdot)$. Similarly, given $\omega\in \xi^*_\alpha$, we define $\omega^\sharp\in \Gamma(\xi)$ by the unique solution to $\varphi(\omega^\sharp,\pdot) = \omega$. Note that the musical maps satisfy $(\omega^\flat)^\sharp = \omega \mbox{ and } (v^\sharp)^\flat = v$. We are now ready to introduce one of the geometric ingredients for the quantization of a contact manifold, the contractor bundle of $(Z,\xi)$.

\subsection{Contractor Bundles}\label{cbundles}
A Cartan geometry is the data $(G,H,X)$ of a Lie group $G$, a subgroup~$H\sse G$ and a manifold $X$ equipped with a $G$-connection that identifies the tangent spaces of $X$ with 
the tangent space $T_{1\cdot H}(G/H)$ of the model geometry of the coset $G/H$. This structure is referred to as a parabolic geometry in the case that $H\sse G$ is a parabolic subgroup. Tractor bundles~\cite{BEG,CapGov} are natural vector bundles associated to parabolic geometries that can be endowed with canonical classes of linear connections, corresponding to their underlying Cartan connections. These bundles are extremely useful for the construction of invariant operators and allow representation theory to be applied to the solution of differential geometric problems associated to the corresponding geometry (see for example~\cite{Capvak}). Instances of this include  results in conformal geometry, projective structures, and CR-structures~\cite{BEG}.

The group of contactomorphisms of $(Z,\xi)$ is  a regular, infinite-dimensional Fr\'echet Lie group, in striking contrast to the finite-dimensional group of conformal isometries. Note that dimension $d$ conformal geometries 
are in direct correspondence with the  Lie group $SO(d+1,1)$, the parabolic subgroup stabilizing a lightlike ray, and a distinguished normal Cartan connection.
This is not the  case for a contact structure $(Z,\xi)$, and indeed the appearance of functional dimensionality  is reflected by  additional choices required to determine distinguished Cartan connections. For example, in the case $G=\operatorname{Sp}(2n+2)$, a normality condition on the Cartan connection yields a contact projective structure, for which  a projective connection with Legendrian geodesics starting in the contact distribution $\xi$ needs to be added to the data~\cite{Fox}. The constructions presented in this manuscript are not constrained by the normality condition, a feature which dovetails with the 
space of BRST quantizations for a given contact structure~$(Z,\xi)$.

A central object of geometric interest in this article is defined as follows:

\begin{definition}\label{def:contractor}
	Let $(Z,\bm \alpha)$ be a contact manifold. The {\it contractor bundle}~$\FZ\longrightarrow Z$ is an equivalence class of direct sum bundles
	$$
	\FZ = \Big(\bigsqcup_{\alpha\in \bm \alpha} \FZ^\alpha \Big)\Big/ \!\sim\, ,
	$$
	where for any contact form $\alpha\in\bm \alpha$, $\FZ^\alpha$ in the above disjoint union is the vector bundle
	\begin{equation}\label{direct_sum}
	\FZ^\alpha=\ce Z[1] \oplus \xi[-1] \oplus \ce Z[-1],
	\end{equation}
	and the defining equivalence relation is given by:
	\begin{equation}
	\nonumber
	\FZ^{\alpha}\ni (\bm v^+,\bm v,\bm v^-) \sim (\bm v^+,\bm v-\bm v^+\bm \Upsilon^\sharp,\bm v^-+\bm \Upsilon(\bm v)-\tfrac12\bm v^+\bm \chi)\in \FZ^{\Omega^2\alpha}\, .
	\end{equation}
	In the above expression, $\Omega\in C^\infty Z$ is a positive function and 
	$$\bm \Upsilon^\sharp :=[\alpha,(d^\alpha \log \Omega)^\sharp]\in \Gamma(\xi[-2]),\qquad \bm \chi:=[\alpha,{\mathcal L}_{\rho^{ \alpha}}\! \log \Omega]\in \Gamma(\ce Z[-2]),$$
	$$\bm \Upsilon(\bm v)=[\alpha,\iota_v d\log \Omega]\in \Gamma(\ce Z[-1]),$$
	where $\bm v=[\alpha,v]$ defines the vector $v$.
	\hfill $\blacksquare$
	\end{definition}

\noindent The equivalence relation in Definition~\ref{def:contractor} captures the change of the $1$-jet $j^1f$ of a function $f\in C^\infty Z$ under the rescaling $\alpha\mapsto\Omega^2\alpha$. The contact form $\alpha$ specifies a given decomposition of $df$ with respect to  $\ker\alpha$ and the Reeb vector field $\rho_\alpha$, and Definition \ref{def:contractor} precisely establishes the transformation of this decomposition with respect to the choice of contact form for the distribution $\xi=\ker(\alpha)$.

The unipotent lower triangular structure of the equivalence relation in Definition~\ref{def:contractor} is the contact geometric avatar of the parabolic subgroups appearing in Cartan geometries~\cite{Capvak,Currygo}. Indeed, let~$P$ be the
 parabolic subgroup $P=(\R^+\otimes \operatorname{Sp}(2n))\ltimes {\mathbb R}^{2n}\ltimes \R^+$ in $\operatorname{Sp}(2n+2)$ obtained from the stabilizer of a ray in~${\mathbb R}^{2n+2}$ under the canonical action of $\operatorname{Sp}(2n+2)$ on the space of rays. 
 Also, let ${\mathbb V}$ be the fundamental representation of $\operatorname{Sp}(2n+2)$.
Then the  contractor bundle $\FZ$ can also be obtained by considering an associated vector bundle ${\mathscr P}\times_{P}{\mathbb V}$ where ${\mathscr P}$
is a $P$ principal bundle over $Z$.
Topologically, the contractor bundle~$\FZ$ is a semi-direct sum of the two line bundles $\ce Z[1]$, $\ce Z[-1]$ and the contact distribution $\xi[-1]$, {\it i.e.}   $\FZ\cong\ce Z[1]\lpl\xi[-1]\lpl \ce Z[-1]$.
An abstract index notation $V^A$ will be used for sections of the contractor bundle, where implicitly $1\leq A\leq (2n+2)=\mbox{rk}(\FZ)$. These sections will be referred to as {\it contractors}. Given a choice of $\alpha\in\bm \alpha$, the section $V^A$ is determined by the triple of sections $\bm v^+\in\Gamma(\ce Z[1])$, $\bm v\in \Gamma(\xi[-1])$, $\bm v^-\in \Gamma(\ce Z[-1])$, so we will also employ the matrix notation
$$
\Gamma(\FZ)\ni V^A\stackrel\alpha=\begin{pmatrix}
\bm v^+\\\bm  v\\\bm v^-
\end{pmatrix}\in \Gamma(\FZ^\alpha)\, .
$$
Also, given a representative $\alpha$, we will refer to  $\bm v^+$, $\bm v$ and $\bm v^-$ as the top, middle and bottom components of the contractor $V^A$.

\begin{remark}
In this article we employ four equivalent definitions for  the contractor bundle. The first two---an equivalence class of direct sum bundles and an associated vector bundle to a principal~$P$ bundle---are described directly above. The  other two are in terms of $(2n+2)$- and $(2n+3)$-dimensional ambient manifolds.
For the first of these latter two, the contractor bundle is identified with the tangent bundle to the symplectization ${\mathcal C}$ of $Z$ suitably quotiented by the natural $\R^+$-action; this construction is detailed in the articles~\cite{Fox,GNW}. 
The latter is based on a $(2n+3)$-dimensional strict contact structure equipped with a homothetic $\R^+$-action and the details are given in Section~\ref{ambicontractious}.\hfill$\blacksquare$
\end{remark}

Tensor products of the contractor bundle $\FZ$ with density line bundles give rise to weighted contractor bundles $\FZ[w]$.
 The former, weight zero, contactor bundle with fundamental representation ${\mathbb V}$ will be termed the {\it standard} contractor bundle $\FZ=\FZ[0]$, and similarly, the standard {\it cocontractor bundle} is the dual:
$$\FZ^*:=\ce Z[-1]\rpl \xi^*[1]\rpl \ce Z[1],$$
whose sections are denoted by a lower index $W_A$. For a choice of $\alpha\in \bm \alpha$, these are given by
$$
W_A\stackrel\alpha=\begin{pmatrix}\bm w^- & \bm w & -\bm w^+\end{pmatrix}\, ,
$$
where $w^-\in \Gamma(\ce Z[-1])$, $w\in \Gamma(\xi^*[1])$ and $w^+\in \Gamma(\ce Z[1])$.
Repeated indices will denote contractions, so that $W_A V^A \stackrel\alpha = \bm w^- \bm v^+ + \iota_{\bm v} \bm  w- \bm w^+ \bm v^- $ denotes a function in $C^\infty(Z)$.

\subsection{Tensor Contractor Bundles and the Canonical Contractor}\label{ssec:contractorbundle} Tensor contractor bundles are  bundles built from the contractor bundle $\FZ$ via standard tensor constructions, including exterior and symmetric powers. In particular, there is a distinguished section
$$J_{AB}\in\Gamma\left(\bigwedge^2 \hh\FZ^*\right),\quad \Gamma(\FZ^*[w])\ni J_{AB} V^B \stackrel\alpha = \begin{pmatrix}\bm v^- & \bm v^\flat & -\bm v^+\end{pmatrix}
$$ of the second exterior power $\bigwedge^2 \hh\FZ^*$, defined by its action on a tractor vector $V^A\in \Gamma(\FZ[w])$. Hence, if $V^A\in \Gamma(\ct Z[w])$ and $U^B\in \Gamma(\ct Z[w'])$ then
\begin{equation}\label{JUV}
J_{AB} V^A U^B \stackrel\alpha= \bm u^+ \bm v^- - \bm u^- \bm v^+ + \bm\varphi(\bm u,\bm v)\in \Gamma(\ce Z[w+w']), 
\end{equation}
where the skew tensor  $\bm \varphi\in \bigwedge^2 \xi^*[2]$ is tautologically defined by $[\alpha;d\alpha]$. The contractor tensor $J_{AB}$ will be called the  {\it contractor symplectic form}. Its inverse is defined by
$$
J^{AB}\in \Gamma\left(\bigwedge^2 \FZ\right),\quad J_{AB} J^{BC}= \delta_A^C\, ,
$$
where $\delta_A^C$ is the abstract index notation for the identity endomorphism of the standard tractor bundle. We will also employ an index raising and lowering  notation for the map between contractors and cocontractors given  by the contractor symplectic form, so that $V_A:=J_{AB}V^B$
and $W^A:=J^{AB}W_B$. The geometry of the contractor symplectic form extends, for each $\alpha\in \bm\alpha$,  the conformally symplectic structure in the contact structure $\xi$ by adding a 2-dimensional symplectic vector space spanned by the pair~$(\bm v^-,\bm v^+)$.\\

The lower triangular equivalence defining the contractor bundle $\FZ$, as in Definition \ref{def:contractor}, canonically provides a non-vanishing section. Indeed, this section $X^A$ of $\FZ[1]$, referred to as the {\it canonical contractor}, can be defined by 
$$
X^A\stackrel\alpha =\begin{pmatrix}0\\0\\1\end{pmatrix}\, .
$$
This section gives the bundle inclusion $\ce Z[-1]\lr \FZ$ at the level of sections by
$$\bm \mu \longmapsto \bm \mu X^A,\mbox{ where }\bm \mu\in\Gamma(\ce Z[-1]) \mbox{ and } \bm \mu X^A\in \Gamma(\FZ)\, .$$
Similarly, this injection yields a surjective vector bundle map $\FZ \longrightarrow \ce Z[1]$, which reads 
$$V^A \mapsto  J_{AB} X^A V^B,\mbox{ where }V^A\in\Gamma(\FZ) \mbox{ and } J_{AB} X^A V^B \in \Gamma(\ce Z[1]).$$
Together, these two maps give the composition series
$$
\ce Z[-1]\longrightarrow \FZ \longrightarrow \ce Z[1]\, .
$$
The contact structure $\xi$ is the kernel of the above surjection quotiented by the inclusion map.

Finally, the contractor analog of the gradient operator is the {\it contact $D$-operator}, which is the map 
$$D_A:\Gamma(\ce Z[w])\to \Gamma(\FZ^*[w-1])$$
given by  
\begin{equation}\label{DDef}
\Gamma(\ce Z[w])\ni \bm \nu=[\alpha,\nu] \mapsto 
D_A \bm \nu\stackrel\alpha=\begin{pmatrix}
\tfrac12 {\mathcal L}_{\rho^\alpha}\nu &d^\alpha \nu &w\nu  
\end{pmatrix}\in \Gamma(\FZ^{{}^*}{\!}^\alpha [w-1])\, .
\end{equation}

\subsection{Scales and Contractor Connections}\label{ssec:ContractConnections} 
The BRST charge  introduced for the strict quantization of $(Z,\alpha)$ in Section \ref{sec:StrictQuant} for a fixed $\alpha \in \bm \alpha$, depends on a choice of connection on the hyperplane distribution~$\xi$. These connections shall be a primary ingredient in the quantization scheme presented in Section \ref{sec:Quant}. Therefore, we now study connections for the contractor bundle. This is the subject of the present subsection as well as Subsection \ref{ssec:ambientconnection}.

The contractor bundle introduced in Definition \ref{def:contractor} geometrically encodes the hyperplane distribution given by the contact structure $\xi$ as well as  the possible choices of contact form. A strictly positive section $\bm \tau\in \Gamma(\ce Z[1])$ is termed a {\it true scale}; it uniquely determines the contact form 
$
\alpha_{\bm \tau}
$
such that~$\bm \tau = [\alpha_{\bm \tau},1]$. By viewing $\alpha_{\bm \tau}$ as a 1-form valued weight zero density, we will also write~$\alpha_{\bm \tau} = \bm \alpha/ \bm \tau^2$. In general, the term {\it scale} will be used to label sections of~$\ce Z[1]$ that are not identically zero. For the choice $\alpha=\alpha_{\bm \tau}$ of contact form defined by $\bm \tau$, the weight~$-2$ density~$\bm \chi$ in Definition~\ref{def:contractor}
equals~$\bm \tau^{-2}{\mathcal L}_{\rho^\alpha} \log \Omega$.
Given a true scale, instead of labeling a standard tractor by a triple of density-valued objects, such as $\bm v^+\in \ce Z[1]$, $\bm v\in \xi[-1]$ and $\bm v^-\in \ce Z[-1]$, we may instead employ a pair of functions and a vector in  the distribution by setting $v^+=\bm v^+/\bm \tau$, $v=\bm \tau \bm v$ and $v^-=\bm \tau \bm v^-$. In that labeling, the defining equivalence relation of Definition~\ref{def:contractor} becomes
$$(v^+,v,v^-)\sim \big(\Omega v^+,\Omega^{-1}(v-v^+\Upsilon^\sharp),\Omega^{-1}(v^-+\Upsilon(v)-\frac 12 v^+ \chi)\big).$$
Our terminology for this is {\it computing in a scale} and we use an unbolded notation when performing this maneuver. This concludes our discussion on scales.

 Let us now define a contractor connection for the contractor bundle $\FZ$:

\begin{definition}\label{def:contractorconnection}
Let $\nabla^{\FZ}$ be a connection on the contractor vector bundle $\FZ\lr(Z,\xi)$. A connection~$\nabla^{\FZ}$ is said to be a {\it contractor connection} if the contractor
$\nabla^\FZ_z \big(X^A/\bm \tau)\in \Gamma(\FZ)$
obeys 
\begin{equation}\label{iamaconnection}
\nabla^\FZ_z \big(X^A/\bm \tau)\stackrel {\alpha_{\bm \tau}}= \begin{pmatrix}
2\alpha_{\bm \tau}(z)\\
z_{\bm \tau}\\
0
\end{pmatrix}\:\quad \mbox{ and } \quad\:
\nabla^\FZ_z J_{AB}=0\, ,
\end{equation}
for all sections $z\in \Gamma(TZ)$. In the above, the parallel condition for the contractor symplectic form is obtained by extending $\nabla^\FZ$ to act on weight zero cocontractors and higher (weight zero) tensor contractors by imposing the Leibniz property. The entry $z_{\bm \tau}\in \Gamma(\xi)$ denotes the projection of the section $z$ to the distribution determined by $\alpha_{\bm \tau}$, which is given by $z_{\bm \tau} := z- \alpha_{\bm \tau}(z)\rho^{\alpha_{\bm \tau}}$.
\hfill
$\blacksquare$
\end{definition}

 The first condition in Definition \ref{def:contractorconnection} is the non-degeneracy condition for a Cartan connection, whereas the second condition ensures that $\nabla^{\FZ}$ is an $\frak{sp}(2n+2)$-valued connection, {\it i.e.}   parallel transport integrates to fiberwise linear symplectomorphisms. To ease notation, the superscript ${\FZ}$ will be suppressed, writing $\nabla^{\FZ}=\nabla$, when it is clear from context that we are manipulating a contractor connection.

\noindent Note that upon decomposing the section $z_{\bm \tau}=v+\rho^{\alpha_{\bm \tau}} v^+$ in  terms of a vector field $v\in \Gamma(\xi)$ and a function $v^+$,  we may define contractors $Y^A\in \Gamma(\FZ[-1])$ and $Z^A_\pdot \in \Gamma(\xi^* \otimes \FZ)$ by
$$
\nabla_z \big(X^A/\bm \tau)
=2v^+Y^A+ Z_v^A\, .
$$
Then the tractors $(Y^A, Z^A_\pdot,X^A)$ determine the direct sum decomposition of the bundle $\FZ^{\alpha_{\bm \tau}}$ given in Equation~\nn{direct_sum} via
$$
\Gamma(\ct Z)\ni V^A\stackrel{\alpha_{\bm \tau}}=
\begin{pmatrix}v^+\\ v \\ v^-\end{pmatrix}=v^+ Y^A + Z_v^A + X^A v^-\in
\Gamma(\FZ^{\alpha_{\bm \tau}})\, .
$$
Observe that the contractor symplectic form can be written in terms of bilinears in the tractors $(Y^A,Z^A_\pdot, X^A)$ because the right hand side of Equation~\nn{JUV} can be re-expressed as 
$$
(u^+ Y^A + Z_u^A + X^A u^-)(v^+ Y_A + Z_{vA} + X_A v^-)=J_{AB}V^A U^B\, .
$$

\newcommand{\Rho}{P}

Let $\bm \tau$ be a true scale with corresponding contact form $ \alpha$. Then a contractor connection is  determined by the  data
$(\alpha,\nabla^\alpha,\Rho,Q)$,
where $\nabla^\alpha$ is itself a connection on the contact distribution $\xi$ that preserves the Levi 2-form $\varphi=d\alpha$, $\Rho\in \Gamma(T^*\!Z\otimes \xi)$ and $Q$ is a 1-form on $Z$.
That this data determines~$\nabla^\FZ$ can be seen by writing out the two
Conditions~\nn{iamaconnection} in Definition~\ref{def:contractorconnection}.  In these terms, the contractor connection acts on a standard tractor $V^A\stackrel\alpha=(v^+,v,v^-)$ 
according to
\begin{equation}\label{nablaV}
\nabla^\FZ_z \begin{pmatrix}
v^+\\[1mm]v\\[1mm]v^-
\end{pmatrix}
=
\begin{pmatrix}
{\mathcal L}_z v^+ + \varphi(v,z) +2v^-\alpha(z)\\[1mm]
\nabla_z^\alpha v-v^+\Rho_z +v^- z
\\[1mm]{\mathcal L}_z v^-+\frac12Q_z v^++\varphi(v,\Rho_z)
\end{pmatrix}\,.
\end{equation}

\begin{remark}
As discussed earlier, in the study of parabolic geometries, a normality condition on the curvature of a Cartan connection (given by the vanishing of the Kostant differential), is often required~\cite{Fox,Capnik,Capnik1}.
In this article we need the freedom of not imposing such a condition. Indeed, this is required in order to model basic physical systems such as the quantum harmonic oscillator. In particular, the Cartan connection $\nabla^\FZ$ will in general not be a normal Cartan connection.\hfill$\blacksquare$
\end{remark}

Contractors that are parallel with respect to a contractor connection play a distinguished {\it r\^ole}. These are neatly linked to the contact $D$-operator introduced in Subsection \ref{ssec:contractorbundle} above, as follows:

\begin{lemma}\label{parascale}
	Let $\nabla^\FZ$ be a contractor connection and $I^A$ a standard contractor subject to the parallel condition $\nabla^\FZ I^A  =0$. Then 
	\begin{equation}\label{scaletractor}
	I^A = D^A \bm \sigma
	\end{equation}
	where $\bm \sigma\in \Gamma(\ce Z[1])$ is given by $\bm \sigma = X_A I^A\, .$
\end{lemma}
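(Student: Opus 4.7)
The approach is to compute in a chosen scale and extract information from the parallel condition component by component. Fix a true scale $\bm\tau$ with associated contact form $\alpha=\alpha_{\bm\tau}$ and decompose the parallel contractor as
$$
I^A\stackrel\alpha = \begin{pmatrix} v^+\\ v\\ v^-\end{pmatrix}\, ,
$$
so that, by the description of $X_A$ as a cocontractor acting via the pairing of Equation~\nn{JUV}, the section $\bm\sigma := X_A I^A$ is precisely the top slot~$v^+$ viewed as an element of $\Gamma(\ce Z[1])$. The plan is then to show that the parallel condition $\nabla^\FZ I^A=0$ forces $v$ and $v^-$ to be the middle and bottom slots of $D^A\bm\sigma$ as read off from Equation~\nn{DDef}, after using the musical conventions from Subsection~\ref{ssec:symplectization} and the index raising $J^{AB}$ of Subsection~\ref{ssec:contractorbundle}.

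Concretely, I would plug the decomposition above into the explicit form of a contractor connection from Equation~\nn{nablaV}. For arbitrary $z\in\Gamma(TZ)$, setting the three entries of $\nabla^\FZ_z I^A$ to zero yields:
\begin{align*}
&\mathcal L_z v^+ + \varphi(v,z) + 2v^-\alpha(z) = 0,\\
&\nabla^\alpha_z v - v^+ P_z + v^- z = 0,\\
&\mathcal L_z v^- + \tfrac12 Q_z v^+ + \varphi(v,P_z)=0.
\end{align*}
The first equation is the key one. Evaluating it at the Reeb field $z=\rho^\alpha$ and using $\alpha(\rho^\alpha)=1$ together with $\varphi(v,\rho^\alpha)=0$ for $v\in\Gamma(\xi)$ gives $v^- = -\tfrac12 \mathcal L_{\rho^\alpha} v^+$. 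Evaluating the same equation on $z\in\Gamma(\xi)$ gives $\varphi(v,z)=-(d^\alpha v^+)(z)$, which by the musical isomorphism $\sharp:\xi^*\to\xi$ determines $v$ up to a sign as a multiple of $(d^\alpha v^+)^\sharp$. Thus the middle and bottom slots of $I^A$ are forced to coincide, up to the conventional signs in the raising operation $J^{AB}$, with the middle and bottom slots of $D^A\bm\sigma$ computed from Equation~\nn{DDef} applied to $\bm\sigma\in\Gamma(\ce Z[1])$ with weight $w=1$.

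The remaining equations---the one for $\nabla^\alpha v$ and the one for $\mathcal L_z v^-$---do not enter the identification of $I^A$ with $D^A\bm\sigma$; they instead furnish integrability conditions on $\bm\sigma$ that must be automatically satisfied once a parallel contractor exists. The only real subtlety in the argument is a bookkeeping one: one must track how the density weights interact with the scale $\bm\tau$ so that $v^+$ really represents the weight~$1$ object $\bm\sigma$, and one must use the sign conventions fixed by Equations~\nn{JUV} and~\nn{DDef}, but no hard computation is required. Finally, I would note that since $\bm\sigma=X_A I^A$ is manifestly scale-independent and $D^A$ is an invariantly defined operator, the identification $I^A = D^A\bm\sigma$ holds globally, independently of the auxiliary choice of $\bm\tau$.
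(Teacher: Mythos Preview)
Your proposal is correct and follows essentially the same approach as the paper: both fix a scale, use the explicit form of $\nabla^\FZ$ from Equation~\nn{nablaV}, extract the top-slot equation $d\sigma + \varphi(v,\cdot) + 2v^-\alpha = 0$, evaluate first at the Reeb vector to fix $v^-$ and then on the distribution to fix $v$, and finally match against the definition of $D_A$ in Equation~\nn{DDef}. Your additional remarks---that the remaining two equations are integrability conditions not needed for the identification, and that scale-independence of $\bm\sigma = X_A I^A$ and invariance of $D^A$ make the conclusion global---are correct elaborations that the paper leaves implicit.
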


\begin{proof}
	Applying Equation~\nn{nablaV}  to a parallel contractor $I^A\stackrel\alpha=(\sigma, m,\kappa)$
	imposes the condition
	$$
	d \sigma+\varphi(m,\pdot) +2\kappa\alpha=0\, .$$
	Contracting the above with the Reeb vector field $\rho^\alpha$ determines the component $\kappa$ as
	$
	\kappa = -\frac12{\mathcal L}_{\rho^\alpha} \sigma
	$
	and in turn 
	$
	m=(d\sigma)^\sharp
	$.
	Thus we have 
		$$
	I^A\stackrel\alpha=
	\begin{pmatrix}
	-\sigma\\
	(d\sigma)^\sharp\\
	\frac12 {\mathcal L}_{\rho^\alpha} \sigma
	\end{pmatrix}\, .
	$$
	Let $\bm \sigma =[\alpha,\sigma]\in \Gamma(\ce Z[1])$, since $X_A I^A=\bm \sigma$ it now suffices to observe that $J_{AB}I^B$ agrees with $D_A\bm \sigma$ as defined in Equation~\nn{DDef}.	
\end{proof}

\noindent Parallel contractors are thus $D$-exact sections. This is the contact geometric analog of the 
conformal
 Einstein equation~\cite{BEG} (see also~\cite[Theorem 3.11]{Currygo
}) in conformal geometry.

In summary, this subsection presented an intrinsic presentation of the contractor bundle $\FZ$, and of its sections and connections, in terms of the geometry of~$(Z,\xi)$. Now, drawing inspiration from the Fefferman-Graham construction \cite{Capgo,Gope} for a conformal class of metrics, we shall develop an extrinsic treatment of the contractor bundle $\FZ$ and its geometric structures, which will be used in our program for quantizing a contact structure $(Z,\xi)$. 

\subsection{Ambient Contractors}\label{ambicontractious} In this subsection we develop a novel ambient realization of the contractor bundles $\FZ\longrightarrow(Z,\xi)$ based on a $(2n+3)$-dimensional contact manifold $(M,\Xi)$. Its primary motivation is to extend our strict quantization results of Section \ref{sec:StrictQuant}, that depended on a preferred choice of contact form $\alpha$ for $\xi=\ker\alpha$, to the setting of contact manifolds, where the quantization of $(Z,\xi)$, with no distinguished contact form being chosen, will be achieved. This ambient approach, realizing contractors on an {\it ambient contact manifold}, is also of interest in its own right. Unlike the ambient approaches based on the projective Thomas space employed in~\cite{Fox,GNW}, it is  a direct analog of the Fefferman--Graham ambient space for conformal geometries in which an ambient  contact 1-form parallels the ambient metric.

Let $(Z^{2n+1},\xi)$ be a $(2n+1)$-dimensional contact manifold and let $\pi\!:\! \SC \!\to\! Z$ denote the $\R^+$-principal bundle of positive contact forms with kernel $\xi$, as introduced in Subsection \ref{ssec:symplectization}. The ambient construction of the contractor bundle $\FZ$ is based on the geometry of the following contact manifold:

\begin{definition}\label{def:ambientspace}
Let $(Z^{2n+1},\xi)$ be a contact manifold. A $(2n+3)$-dimensional strict contact manifold $(M^{2n+3},A)$, where $A\in \Gamma(T^*\!M)$ is a contact 1-form,  equipped with a proper free $\R^+$-action ${x:\R^+\times M\to M}$, infinitesimally generated by the vector field $X\in \Gamma(TM)$, and an $\R^+$-equivariant embedding $i:{\SC}\longrightarrow M$ is said to be an {\it ambient contact manifold for} $(Z,\xi)$ if the following three conditions are satisfied:

\begin{enumerate}[(i)]

	\item\label{poundtime}
	${\mathcal L}_X A=2A$,\\[.1mm]
	\item\label{isquaretominusone} $\Cone=\Zero(A(X)),$
	\\[.1mm]
	\item\label{third} For any contact form $\alpha$ on $(Z,\xi)$, $\exists\sigma \in \Gamma(\SC)$ such that $\alpha = (i \circ \sigma )^{*}(A)$.\label{compatibility}
\hfill $\blacksquare$	
	\end{enumerate}

\end{definition}

\noindent 

Condition~(\ref{poundtime}) ensures that the action $x$ induces an endomorphism of $\Gamma(\Xi)$ by pushing forward, where $\Xi=\ker(A)\sse TM$. Condition~(\ref{isquaretominusone}) ensures that $\Cone$
is in bijection with the data $(Z,\bm \alpha)$. Note also that the function $A(X)$ is a defining function for the submanifold $\Cone$. Indeed, along $\Cone$ we have that $d\big(A(X)\big)=2A-X^\flat\neq 0$ is non-vanishing. Here, the musical map $\flat$ is defined ambiently, given the choice of contact form $A$, in the same way as discussed earlier for any contact manifold. To ease notation, we shall implicitly assume the existence of the $\R^+$-equivariant embedding $i$ and denote by~$\Cone$ the image $i(\Cone)\sse (M,A)$, and implicitly pull-back structures to $\Cone$ without explicitly using $i^*$.

\begin{example}
The manifold $M={\mathbb R}^2\times Z$ from Example~\ref{getmaxy}, equipped with contact form 
$
A=e^{2\mu} \alpha + d\theta$
is an ambient contact manifold for the pair $(Z,\xi)$, $\xi=\ker(\alpha)$.  The $\mathbb R^+$-action is given by
$$
(\mu,\theta,z)\stackrel{x_\lambda}\longmapsto
(\mu+\log\lambda,\lambda^2 \theta,z)\, , 
$$
and the submanifold ${\mathcal C}=\{\theta=0\}$.
The section $\sigma$ of ${\mathcal C}$ defined by $z\mapsto (\mu,z)=(\varpi(z),z)$ obeys $$\sigma^*(A)=e^{2\varpi(z)}\alpha\, ,$$ 
for any $\varpi\in C^\infty(Z)$.
\hfill$\blacksquare$\end{example}

\noindent The data in Definition \ref{def:ambientspace} suffices to define the ambient analog of the contractor bundle as follows: 

\begin{definition}\label{def:ambientcontractor}
Given an ambient contact manifold $(M,A,x)$ with contact distribution $\Xi=\ker A$, the {\it ambient contractor bundle}
	is the rank-$(2n+2)$ vector bundle $
	\Xi|_{\mathcal C}/\!\sim$
	given by the restriction of~$\Xi$ to $\mathcal C$ quotiented by the equivalence
	$$
	V_P \sim  \big(\lambda^{-1}dx_\lambda (V)\big)_{x_\lambda(P)}\, , \quad V_P\in \Xi|_{\Cone}\, ,
	$$
	where $dx_\lambda$ is the differential of the $\R^+$-action, for 
	some $\lambda\in\R^+$.
	\hfill $\blacksquare$
\end{definition}

The space of sections of the ambient contractor bundle $\FZ$ is in bijection with elements of ~$\Gamma(\Xi)$ with homogeneity minus one and support along ${\mathcal C}$. These amount to equivalence classes of ambient sections of the distribution $\Xi$ 
subject to the differential condition
\begin{equation}\label{homog}
{\mathcal L}_X V= -V\, ,
\end{equation}
considered up to the equivalence 
\begin{equation}\label{supp}
\Gamma(\Xi) \ni V \sim V + A(X) U\, ,
\end{equation}
for any $U\in \Gamma(\Xi)$ of homogeneity minus three. The equivalence relation imposed in Definition \ref{def:ambientcontractor} is essentially stating that $\FZ$ is well-defined as a bundle over the image of any section of $\Cone\longrightarrow Z$, and thus~$Z$ itself. The quantization of $(Z,\xi)$ will be focused in the germ defined by the hypersurface~$\SC\sse M$, and thus equivalence up to the vanishing of $A(X)$ is sufficient.

\begin{remark}
It is possible to enlarge this space and include ambient vectors that lie in the distribution only along $\Cone$. Similarly, given a vector $V$ that obeys $\Lie_X V = -V$ along $\Xi$, we may always add terms proportional to $A(X)$ so that $\Lie_X V + V=0$ everywhere in $M$.\hfill$\blacksquare$
\end{remark}
\noindent We will refer to an ambient vector obeying these conditions as an {\it ambient contractor}.

\subsection{Ambient Scales} Homogeneity one positive functions $\tau\in C^\infty(M)$ will be referred to as {\it ambient scales}. 
Let $R$ be the Reeb vector of the ambient contact form $A$.
From an ambient scale $\tau$, we may build the ambient 1-form:
\begin{equation}\label{why}
Y_\tau := d\log\tau  - \tfrac12 
({\mathcal L}_R\log\tau)\, 
d(A(X)) \, ,
\end{equation}
which  satisfies the following three properties: 
\begin{enumerate}[(i)]
	\item 
	\label{1}
	$Y_\tau\in \Gamma(\Xi^*_A)$, {\it  i.e.}  $Y_\tau(R) = 0$.
	\item \label{2} $Y_\tau$ has homogeneity zero: ${\mathcal L}_X Y_\tau=0$.
	\item $Y_\tau$ obeys a normalization condition along the ambient symplectization of $Z$,
	$$
	Y_\tau(X)\big|_{\Cone}=1\, .
	$$
	\item
	\label{4} For any ambient scale $\tau' = \tau + \mathcal{O}(A(X))$, we have $Y_{\tau'} = Y_\tau +  \mathcal{O}(A(X)).$ 
	
\end{enumerate}
Property~(\ref{1}) holds because  
$Y_\tau$ obeys
\begin{equation}\label{oldplus}
Y_{\tau}=d_A \log \tau + \frac 12 X^\flat \Lie_R \log \tau\, .
\end{equation}
Property~(\ref{2}) follows from the homogeneity requirements ${\mathcal L}_X\tau= \tau$, and ${\mathcal L}_XA= 2A$, the latter of which implies ${\mathcal L}_X R= -2R$ (see Lemma \ref{lem:Xprime} for details). 
The remaining properties are easily checked. Property~(\ref{4}) ensures that $Y_{\tau'}|_{\Cone}=Y_{\tau}|_{\Cone}
$ whenever ${\tau'}|_{\Cone}={\tau}|_{\Cone}$. Hence the  ambient cocontractor 
$[Y_{\tau}]$, defined analogously to an ambient contractor,
is determined by a choice of $\alpha\in\bm \alpha$. To ease notation, the ambient scale subscript $\tau$ in $Y_\tau$ will not be written if it is clear from context.

Finally, note that because the function $\tau$ obeys the homogeneity condition ${\mathcal L}_X\tau=\tau>0$, the hypersurface in $\Cone$ determined by $\tau=1$ is everywhere transverse to $X$, and therefore defines a section of the symplectization ${\mathcal C}$, {\it i.e.} a choice of contact form $\alpha\in \bm \alpha$. Thus, a choice of positive, homogeneity one function $\tau$ on the ambient manifold~$(M,A)$ determines a contact form on $(Z,\xi)$ and a corresponding section $\sigma$ of ${\mathcal C}$. Also, the $(2n+1)$-dimensional submanifold $Z^\alpha:=\Cone\cap \Zero(\tau-1)=\sigma (Z)$ is diffeomorphic to $Z$. In short, ambient scales are the geometric objects that allow us to recover the contact geometry  $(Z,\xi)$, including the choice of any contact form.

\begin{example}
Returning to Example~\ref{getmaxy}, the function $\tau=e^{\mu} \in C^\infty({\mathbb R}^2\times Z)$ has homogeneity one. Pulling back the 1-form $A$ to the  submanifold $\{\theta=0\}\cap \{\mu=0\}$ gives the contact form $\alpha$. Note that the Reeb vector of $A$ is $R=\frac\partial{\partial \theta}$ and that
$Y_\tau=d\mu$, 
which has homogeneity zero. Moreover, $\iota_R Y_\tau=0$ and so $Y_\tau\in \Gamma(\Xi^*_A)$. It is  not difficult to check that $Y_\tau$ is unchanged when a function of $\theta$  is added to~$\tau$, and in turn deduce that condition~(iv) of the above list holds. \hfill$\blacksquare$\end{example}

\subsection{The Contractor Equivalence} We now establish the equivalence between the contractor bundle~$\FZ$, as constructed in Definition \ref{def:contractor}, and its ambient realization~$\Xi|_{\mathcal C}/\!\sim$, introduced in Definition~\ref{def:ambientcontractor}. The central result of this subsection is the following isomorphism:

\begin{theorem}\label{thm:equivalence}
Let $(Z,\xi)$ be a contact manifold and $(M,A,x)$ an ambient contact manifold for $(Z,\xi)$. Then there exists an isomorphism $\FZ\cong\Xi|_{\Cone}/\!\sim$ of vector bundles over $Z$.

\noindent In fact, given an ambient scale $\tau$, and thus $\alpha \in \bm \alpha$, the map
$$\Big(\Xi|_{\Cone}/ \!\sim\Big) \ni
[V]\mapsto
(\bm v^+,\bm v,\bm v^-)\in \ct^\alpha Z=\ce Z[1] \oplus \xi[-1] \oplus \ce Z[-1]\, ,
$$
defined by the following quantities, restricted to the cone $\Cone$,
\begin{equation}\label{topmiddlebottom}
v^+=\Phi(V,X),\quad
v=V-Y_\tau(V)\cdot X -\Phi(V,X)\cdot Y_\tau^\sharp
\, ,
\quad
v^-=Y_\tau(V),
\end{equation}
is an isomorphism of vector bundles.\hfill$\blacksquare$
\end{theorem}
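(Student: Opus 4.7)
The plan is to fix an ambient scale $\tau$, which determines a contact form $\alpha=\sigma_\tau^*A\in\bm\alpha$ (with $\sigma_\tau$ the section of $\mathcal{C}$ cut out by $\tau=1$), construct an explicit bundle map $\Psi_\tau:\Xi|_{\mathcal{C}}/\!\sim\;\longrightarrow\FZ^\alpha$ via the formulas~\eqref{topmiddlebottom}, and then show that the resulting isomorphism is canonical by checking that a change of scale $\tau\to\Omega\tau$ implements precisely the equivalence relation of Definition~\ref{def:contractor}. Throughout, $\Phi$ will denote the ambient Levi form $dA$, so that $\Phi(V,X)$ has the required weight $+1$ once the homogeneities ${\mathcal L}_X(dA)=2\,dA$, ${\mathcal L}_XV=-V$ are accounted for.

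First I would verify that $\Psi_\tau$ is a well-defined linear isomorphism for each fixed $\tau$. The three density weights all come out correctly from the homogeneities listed above together with ${\mathcal L}_XY_\tau=0$, and the ambiguity $V\sim V+A(X)U$ is washed out on $\mathcal{C}=\mathrm{Zero}(A(X))$. The middle component lies in $\xi[-1]$ because the formula makes $Y_\tau(v)=0$ (using $Y_\tau(X)=1$ on $\mathcal{C}$ by Property~(iii) of $Y_\tau$, and $Y_\tau(Y_\tau^\sharp)=dA(Y_\tau^\sharp,Y_\tau^\sharp)=0$) and $A(v)=0$ (using $V,Y_\tau^\sharp\in\Xi$ and $A(X)|_{\mathcal{C}}=0$). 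For the inverse I would send $(\bm v^+,\bm v,\bm v^-)\in\FZ^\alpha$ to $V:=\tilde v+v^-X+v^+Y_\tau^\sharp$, where $\tilde v$ is the unique homogeneity $-1$ lift of $\bm v$ to $\Xi$ tangent to $\sigma_\tau(Z)$. A short calculation using $\Phi(Y_\tau^\sharp,X)=Y_\tau(X)=1$, $\Phi(X,X)=0$, and $\Phi(\tilde v,X)=0$ (the latter because both $\tilde v$ and $X$ are tangent to $\mathcal{C}$ while $\tilde v$ is $\sigma_\tau$-horizontal) confirms $\Psi_\tau\circ\Psi_\tau^{-1}=\mathrm{id}$.

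The hard part, and the real content of the theorem, is scale-invariance: under $\tau'=\Omega\tau$ with $\Omega$ a positive homogeneity-zero function (equivalently a pullback from $Z$), I need $\Psi_{\tau'}(V)$ and $\Psi_\tau(V)$ to differ by the equivalence of Definition~\ref{def:contractor}. Formula~\eqref{why} gives $Y_{\tau'}-Y_\tau=d\log\Omega-\tfrac12({\mathcal L}_R\log\Omega)\,d(A(X))$, and along $\mathcal{C}$ one has $d(A(X))=2A-X^\flat$, which evaluated on $V\in\Xi$ recovers $\Phi(V,X)=v^+$. Combined with $d\log\Omega(X)=0$ and the inverse formula for $V$, this produces $v'^{+}=v^+$ and $v'^{-}=v^-+d\log\Omega(\tilde v)-\tfrac12({\mathcal L}_R\log\Omega)\,v^++v^+\,d\log\Omega(Y_\tau^\sharp)$; the last term is absorbed into the middle-component shift $\tilde v\to\tilde v-v^+(d\log\Omega)^\sharp$. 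To identify this with the intrinsic rule I would prove two short compatibility lemmas: that the ambient Reeb $R$ pushes forward along $\sigma_\tau$ (modulo $X$) to $\rho^\alpha$, giving ${\mathcal L}_R\log\Omega|_{\mathcal{C}}=\bm\chi$ (using ${\mathcal L}_XR=-2R$ from Lemma~\ref{lem:Xprime} together with $A(R)=1$); and that $Y_\tau^\sharp$ projected into $\xi$ reproduces $(d^\alpha\log\Omega)^\sharp=\bm\Upsilon^\sharp$ (using $\sigma_\tau^*dA=d\alpha$ and the defining relation for $\sharp$). With these in hand the three components transform by $\bm v^+\mapsto\bm v^+$, $\bm v\mapsto\bm v-\bm v^+\bm\Upsilon^\sharp$, $\bm v^-\mapsto\bm v^-+\bm\Upsilon(\bm v)-\tfrac12\bm v^+\bm\chi$, exactly as in Definition~\ref{def:contractor}. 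The difficulty is overwhelmingly one of bookkeeping---tracking weights, restricting to germs along $\mathcal{C}$, and repeated use of $A(X)|_{\mathcal{C}}=0$ and ${\mathcal L}_XA=2A$---rather than any conceptual obstruction.
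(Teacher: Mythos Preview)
Your overall architecture is exactly the paper's: fix an ambient scale, use the decomposition $V=\tilde v^+Y_\tau^\sharp+\tilde v+\tilde v^-X$ to produce an isomorphism with $\FZ^\alpha$, then verify that changing the scale induces the equivalence relation of Definition~\ref{def:contractor}. The well-definedness and inverse arguments in your first paragraph are fine (you should also note $X^\flat(v)=0$, but that follows from the same subtraction mechanism you used for $Y_\tau(v)=0$).

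The genuine gap is in your scale-invariance step, specifically the identification of the coefficient of $v^+$ in the transformation of $v^-$. Your first compatibility lemma, that ``the ambient Reeb $R$ pushes forward along $\sigma_\tau$ (modulo $X$) to $\rho^\alpha$, giving ${\mathcal L}_R\log\Omega|_{\mathcal C}=\bm\chi$'', is not correct as stated. The ambient Reeb $R$ is \emph{transverse} to $\mathcal C$: from $d(A(X))=2A-X^\flat$ one has $d(A(X))(R)=2\neq 0$, so $R$ cannot be pushed forward to $Z$, and adding multiples of $X$ (which \emph{is} tangent to $\mathcal C$) does not fix this. Consequently ${\mathcal L}_R\log\Omega$ depends on how $\Omega$ is extended off $\mathcal C$ and does not by itself equal $\chi$. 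What actually happens is that the two $v^+$-terms you obtained, namely $d\log\Omega(Y_\tau^\sharp)$ and $-\tfrac12{\mathcal L}_R\log\Omega$, must be \emph{combined}: one checks (along $\mathcal C$) that $\tilde\rho:=R-2Y_\tau^\sharp$ is tangent to $\mathcal C$ and corresponds to $\rho^\alpha$, so that
\[
d\log\Omega(Y_\tau^\sharp)-\tfrac12{\mathcal L}_R\log\Omega
=-\tfrac12{\mathcal L}_{\tilde\rho}\log\Omega
=-\tfrac12\bm\chi\, .
\]
So the term $v^+\,d\log\Omega(Y_\tau^\sharp)$ is \emph{not} ``absorbed into the middle-component shift'' as you wrote; it is precisely what makes the ambient Reeb contribution match the intrinsic one. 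Your second compatibility lemma is also garbled (you wrote $Y_\tau^\sharp$ where you presumably meant the ambient $(d^A\log\Omega)^\sharp$). For the middle slot, the paper avoids tracking how $Y_\tau^\sharp$ itself changes by instead exploiting the scale-independence of $\Phi(U,V)$ and solving algebraically for $\hat{\bm v}$; your direct route would also work, but you would need to control $Y_{\tau'}^\sharp-Y_\tau^\sharp$ carefully rather than handwave it.
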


\noindent The proof of Theorem \ref{thm:equivalence} shall constitute the remainder of the present subsection. The statement can be proved in  two steps. First, establish the isomorphism for a choice of contact form~$\alpha$, {\it i.e.}   compare the vector bundles $\FZ^\alpha$ and $\Xi|_{Z^\alpha}$. Second, prove that the rescaling transformations induced by a rescaling of $\alpha$ coincide for these vector bundles. \\

\noindent
{\bf First Step.} Let us fix a contact form $\alpha\in \bm\alpha$ and start by showing that the ambient contractor bundle $\Xi|_{Z^\alpha}$ is isomorphic to the (intrinsic) tractor bundle $\FZ^\alpha=\ce Z[1] \oplus \xi[-1] \oplus \ce Z[-1]$. Note that, as stated above, the smooth submanifold $Z^\alpha$ is diffeomorphic to $Z$, as $Z^\alpha$ is a transverse slice of the cone over~$Z$. In this proof we leave the choice of scale $\tau$ associated to $\alpha$ implicit and often write $Y_\tau=Y$.

The filtered isomorphism in the statement of Theorem \ref{thm:equivalence} relies on
 the fact that a representative~$V$ for a standard ambient tractor decomposes as
\begin{equation}\label{decompose}
V=\tilde v^+ Y^\sharp + \tilde v + \tilde v^- X+A(X) U\, ,
\end{equation}
where $\tilde v^+=\Phi(V,X)$, $ \tilde v^-=Y(V)$, $U\in \Gamma(\Xi)$ has homogeneity minus three, $\tilde v$ satisfies $\Phi(\tilde v,X)=0$, and $Y(\tilde v)=0$. This decomposition uses the (dual of the) 1-form $Y_\tau$.

Note that the musical map $\sharp$ is defined ambiently, given the choice of contact form $A$, in the way discussed earlier for any contact manifold. Also, notice that $Y^\sharp$ is not parallel to $X$ along $\Cone$ because $\Phi(Y^\sharp,X)\neq 0$ there.

Restrictions of smooth functions of the ambient space $M^{2n+3}$ to the cone ${\mathcal C}$, of a given homogeneity~$w$, correspond to sections of the weight $w$ density bundles $\Gamma(\ce Z[w])\lr Z$. The functions $\tilde v^+$ and~$\tilde v^-$ in Equation~\nn{decompose}
have definite homogeneities one and minus one, respectively.
Upon restriction to $\mathcal C$, they define functions  with the same homogeneities with respect to the $\R^+$-action.
Hence they define
sections of~$\ce Z[1]$ and~$\ce Z[-1]$, that we shall respectively denote $v^+$ and $v^-$. These are the components bijectively mapping to $\ce Z[1]$ and $\ce Z[-1]$ in the statement of Theorem \ref{thm:equivalence}. Notice that the map is a bijection, up to fixing $\tilde v$, due to the fact that the Levi form $\Phi$ is non-degenerate.

Let us now conclude that the component $\tilde v$ can always be determined, and uniquely. That is, let us describe precisely which subbundle of the ambient contractor bundle corresponds isomorphically to the $\xi[-1]$ piece. It suffices to show that the vector $\tilde{v}$ in the decomposition above is the pushforward of a unique vector $v \in \Gamma(\xi)$. For this, we note that for any $v \in \Gamma(\xi)$, we have 
$$A\big(\sigma_* v\big) = \left(\sigma^*A\right)(v) = \alpha(v) = 0\, ,$$
where  we used Item~\eqref{third} of  the  ambient contact manifold Definition~\ref{def:ambientspace}. 

Hence $\sigma_* v \in \ker A = \Xi$. 
Next, notice that
\begin{equation}\label{iamacuteequation}
X^\flat = \Phi(X,\pdot) = 
\iota_X d A = {\mathcal L}_X A - d \big(A(X)\big) = 2A - d \big(A(X)\big)\, . 
\end{equation}
\noindent Now we compute 
\begin{equation}\label{daxi}
X^\flat\big(\sigma_* v\big) = 2A\big(\sigma_* v\big) - d\big(A(X)\big)\big(\sigma_* v\big) = - d\big(A(X)\big)\big(\sigma_* v\big) = 0\, .
\end{equation}
The last equality holds because $A(X)$ is a defining function for $\mathcal C$ and the ambient vector $\sigma_* v$ is tangent to $\Cone$.
Thus $\sigma_* v \in \ker X^\flat$. Finally, we can compute
$$
Y\big(\sigma_* v\big) = d\log\tau\big(\sigma_* v\big) -\tfrac 12 (\mathcal{L}_R\log\tau)  \big(d\big(A(X)\big)\big)\big(\sigma_* v\big) = d\log\tau\big(\sigma_* v\big) = 0\, .
$$
For the first equality we used Equation~\nn{daxi}, while
the last equality holds because $\sigma_* v$ is tangent to~$\sigma(Z)$ and this ambient submanifold is a constant locus of $\tau$.
Therefore, $\sigma_* v \in \ker Y$. 
Orchestrating, we have that along $Z_\alpha$, 
$$
\sigma_* v \in \ker A \cap\ker X^\flat \cap\ker Y.
$$
Remembering that  $\sigma$  defines an embedding, it follows that  $\big(\!\ker X^\flat \cap \ker A \cap \ker Y)|_{Z^\alpha}$ is in bijection with $\Gamma(\xi)$.
Vector fields along $Z^\alpha$ can be pushed forward along the $\R^+$-action to homogeneous vector fields on $\Cone$. Hence, we have an isomorphism between sections of $\xi$ and homogeneous vector fields on~$\Cone$ subject to the triple kernel condition above. 
Remembering that $V$ has homogeneity minus one,  we may use this result to identify the
restriction of~$\tilde v$ to $\Cone$ in Equation~\nn{decompose} with a section $\bm v$ of $\xi[-1]$. This concludes the first step in proving Theorem \ref{thm:equivalence}.\\

\noindent
{\bf Second Step.} Now we show that ambient contractors transform in the same manner as (intrinsic) contractors, {\it i.e.}   those defined directly in terms of direct sums of density bundles as in Equation~\eqref{direct_sum}, or equivalently, that the $\R^+$-action used to quotient $\Xi$ for the definition of $\FZ$  induces the equivalence relation in Definition~\ref{def:contractor}. First, the component $v^+$ of the (intrinsic) contractor bundle in Definition~\ref{def:contractor} is invariant under rescaling of the contact form $\alpha$, and thus it suffices to study the transformations of the components $v^-$ and $v$ under a different choice of ambient scale $\hat\tau=\tilde{\Omega}\hh\hh\cdot\tau$, where $\tilde{\Omega}\in C^\infty(M)$ is a positive homogeneous function on the ambient space $M$.

 For that, we note that the isomorphism in Theorem \ref{thm:equivalence} relies on the associated ambient $1$-form $Y_\tau$, defining the component $v^-$. Let us first study $v^-$, and thus $Y_\tau$, under rescaling. Let
 $$V=\tilde v^+ Y^\sharp + \tilde v + \tilde v^- X+A(X) U$$
 be an ambient contractor, $\tau$ an ambient scale and $\hat\tau=\tilde{\Omega}\hh\hh\cdot\tau$ a rescaling as above. Then the transformation of $Y_\tau$ under rescaling reads
\begin{multline*}
\qquad Y_{\hat \tau}(V) = Y_{ \tau}(V)+
\big((d^A+\tfrac12 X^\flat \Lie_R) \log\tilde\Omega\big)(V)\\
=Y_{ \tau}(V)
 + (d\log\tilde\Omega)(\tilde v)+\big((d\log\tilde\Omega)(Y^\sharp)-\tfrac 12 \Lie_R \log \tilde\Omega\big)\tilde v^+ + {\mathcal O}(A(X))\, .
 \qquad\qquad
 \end{multline*}
In order to obtain the transformation law for $v^-$ of Definition \ref{def:contractor}, we must restrict to the cone $\Cone$. This eliminates the ${\mathcal O}(A(X))$ factor and the first summand $Y_{ \tau}(V)$ is naturally identified with $\tilde v^-$. In addition, the term $(d\log\tilde\Omega)(\tilde v)$ restricts to $(d\log\Omega)(v)$ as desired, and it suffices to study the term
$(d\log\tilde\Omega)(Y^\sharp)-\tfrac 12 \Lie_R \log \tilde\Omega$.
For that, we compute along the cone as follows:
$$(d\log \tilde \Omega)(Y^\sharp)=
(d^A\log \tilde \Omega)(Y^\sharp)
=\tfrac12 (d^A\log \tilde \Omega)(R-
\tilde\rho  )
=-\tfrac12 \iota_{\tilde\rho}\, d^A\log \tilde \Omega
=
- \tfrac12 \Lie_{\tilde\rho} \log \tilde \Omega
+\tfrac12 \Lie_R\log\tilde\Omega\, ,
$$
where the vector field $\tilde\rho$ is defined as
$$\tilde{\rho} = \lambda^{-2}(x_\lambda\circ \sigma)_*\rho,$$
with $\rho$ being the Reeb vector field associated to the scale $\tau$ and $\sigma \in \Gamma(\mathcal C)$ denotes the section also associated with $\tau$. In the above equality, we have used $\tilde\rho = R-2Y^\sharp$, which can be readily verified. In conclusion, the term $\tfrac12 \Lie_R\log\tilde\Omega$ in $(d\log \tilde \Omega)(Y^\sharp)$ disappears when inserted in $(d\log\tilde\Omega)(Y^\sharp)-\tfrac 12 \Lie_R \log \tilde\Omega$, and we are left with the coefficient
$$- \tfrac12 \Lie_{\tilde\rho} \log \tilde \Omega$$
for the $\tilde v^+$ component. Thus, at this point, the rescaling reads
$$
\hat v^-=v^--\frac12 v^+ \tau^{-2} \, {\mathcal L}_{\sigma_*\rho^\alpha} \log\Omega +(d\log\Omega)(v)\, , 
$$
and it suffices to notice that $\tilde\rho$ is related to $\rho$ by
$$
\tilde{\rho} = \lambda^{-2}(x_\lambda \circ \sigma)_*(\rho^\alpha)=\tilde \tau^{-2}(x_\lambda\circ \sigma)_*(\rho^\alpha)\, .$$ Hence, in terms of densities along $Z$, it follows that
\begin{equation}\label{botslot}
\Gamma(\ce Z[-1])\ni\bm{\hat  v}^-= \bm v^--\frac12 \bm v^+ \bm \tau^{-2} \, {\mathcal L}_{\rho^\alpha} \log\Omega +\bm \Upsilon(\bm v)\, ,
\end{equation}
where the weight $-2$ density $\bm \tau^{-2} \, {\mathcal L}_{\rho^\alpha} \log\Omega$
is precisely the $\bm \chi$ introduced in Definition~\ref{def:contractor}. This concludes the proof for the $v^-$-component. It remains to argue for the $v$-component, which we do now.

For that, first observe that for any ambient contractors $U$ and $V$, the ambient function
$
\Phi(U,V)
$
is homogeneous. It therefore defines a homogeneous function along $\Cone$ and thus a weight zero density along $Z$. Inserting decomposition~\nn{decompose} in $\Phi(U,V)$ for a choice of ambient scale $\tilde \tau$
corresponding to~$\alpha\in \bm \alpha$
gives  
$$
\Phi(U,V)=\tilde u^+_{\tilde \tau}\tilde v^-_{\tilde \tau} -\tilde v^+_{\tilde \tau}\tilde u^-_{\tilde \tau} +\Phi(\tilde u_{\tilde \tau},\tilde v_{\tilde \tau}) \label{phi(u,v)}
.$$
Now we remember that, restricted to $\Cone$, the triplet $(\tilde v^+_{\tilde \tau},\tilde v,\tilde v^0)$
corresponds to a section $(\bm v^+,\bm v,\bm v^-)$ of $\ce Z[1]\oplus \xi[-1]\oplus \ce Z[-1]$. Moreover, using arguments similar to above, it can be verified that the homogeneity zero function $\Phi(\tilde u_{\tilde \tau},\tilde v_{\tilde \tau})$ corresponds to $\bm \varphi(\bm u,\bm v)$ introduced in Equation~\nn{JUV}. The quantity $\Phi(U,V)$ is independent of any choice of $\alpha$. Thus, denoting by hats the triple of sections obtained for another $\hat \alpha \in \bm \alpha$, we have
$$
\hat{\bm u}^+
\hat{\bm v}^-
-\hat{\bm v}^+
\hat{\bm u}^-
+\bm \varphi( \hat {\bm u},\hat {\bm v})=
{\bm u}^+
{\bm v}^-
-{\bm v}^+
{\bm u}^-
+\bm \varphi( {\bm u}, {\bm v})\, .
$$
Equation~\nn{topmiddlebottom}
shows that $\hat {\bm v}^+=\bm v^+$ and $\hat {\bm u}^+=\bm u^+$, since these are defined independently from $\bm \tau$. Now Equation~\nn{botslot} gives $\hat {\bm v}^-$, and so also $\hat{\bm u}^-$, in terms of their unhatted counterparts. Inserting these results in the above display and then making the ansatz
$$
\hat{\bm v} = \bm a \bm v^++b \bm v + c \bm v^{-}\, ,
$$
(and similarly for $\hat {\bm u}$
for some $\bm a\in \Gamma(\xi[-2])$, $ b\in C^\infty Z$, $ c\in  \Gamma(\xi)$, gives a system of equations. These can be uniquely solved for the unknowns $(\bm a,b,c)$, and the solution is uniquely given by $(-\bm \Upsilon^\sharp,1,0)$ 
where $\bm \Upsilon^\sharp=[\alpha,(d^\alpha \log \Omega)^\sharp]$.
In summary we have found
$$
\hat {\bm v}^+=\bm v^+\, ,
$$
$$
\hat {\bm v} = \bm v - \bm \Upsilon^\sharp\, \bm v^+\, ,
$$
\vspace{-3mm}
$$
\bm{\hat  v}^-= 
\bm v^-+\bm \Upsilon(\bm v)-\tfrac12\bm v^+\bm \chi
\, .
$$
Comparing with Definition~\ref{def:contractor},
this establishes that the contractor and ambient tractor bundle are isomorphic. This concludes the proof of Theorem \ref{thm:equivalence}.\hfill$\square$

Note that the ambient vector field $X$ generating the $\R^+$-action $x$ lies in the (ambient) contact distribution $\Xi$ along the cone ${\mathcal C}$, 
 and it has homogeneity zero with respect to ${\mathcal L}_X$. Hence, it corresponds to a section of $\FZ[1]$.
It follows from Theorem~\ref{thm:equivalence},
that this is precisely the canonical tractor. Now suppose that $f\in C^\infty(M)$ has homogeneity $w$. 
Then the map $$
Df:=d f-\big(A-\frac12  X^\flat\big)  {\mathcal L}_R f\, ,
$$
obeys the tangentiality condition 
$$
D A(X)={\mathcal O}\big({ A}(X)\big)\, .
$$
This is the ambient analog of the contact $D$-operator defined in Equation~\nn{DDef}. Note that in these terms $Y_\tau = D\log\tau$.

\begin{example}[Continued from Example~\ref{getmaxy}] Consider the functions $$f=e^{w\mu} F(z)\mbox{ and } \sigma = e^\mu\, ,$$
	for some $F\in C^\infty(Z)$. We have ${\mathcal L}_X f = wf$ and ${\mathcal L}_X \sigma = \sigma$, and also $X^\flat=2e^{2\mu} \alpha$
and
$R=\frac{\partial}{\partial \theta}$. 
If we denote $Y:=d\sigma=e^\mu d\mu$, it follows that 
$$
D f= e^{w\mu}\big (w F
d\mu +
dF\big)
=
e^{w\mu}\big(
we^{-\mu} YF 
+\bar d F
+ \frac12 e^{-2\mu} X^\flat {\mathcal L}_\rho F\big)\, ,
$$
where $\bar d F := d F -\alpha {\mathcal L}_\rho F$ and $\rho$ is the Reeb vector of $\alpha$.
Using that the choice of ambient scale~$\tau = \sigma$ determines the contact 1-form $\alpha$, and Theorem~\ref{thm:equivalence}, we see that the above matches the contact $D$-operator given in Equation~\nn{DDef}.\hfill$\blacksquare$
\end{example}

\subsection{Ambient Connections}\label{ssec:ambientconnection} We have  now developed the contractor bundle $\FZ\longrightarrow Z$, the corresponding ambient construction $\Xi/\!\sim$, and established their equivalence in Theorem~\ref{thm:equivalence}. In Subsection~\ref{ssec:ContractConnections} we additionally introduced linear contractor connections $\nabla^\FZ$ for the contractor bundle~$\FZ$. In this subsection we  develop the ambient construction for contractor connections. Interestingly, the ambient contractor connections we construct will intertwine projective geometry---the parabolic geometry associated to $\mbox{SL}(n,\R)$---with contact geometry.

Let us first introduce the central object of interest in this subsection.

\begin{definition}\label{def:ambientconnection}
Let $(M,A,x)$ be an ambient contact manifold. An ambient contractor connection is a connection $\bar \nabla:\Xi_{}\longrightarrow\Xi_{}\otimes T^*\!M$ on the distribution $\Xi$ such that 
\begin{enumerate}[(i)]
	\item \label{FIRST}$\bar \nabla X=\Id - R\otimes A + {\mathcal O}(A(X))$,
	\item \label{SECOND} $\bar\nabla \Phi={\mathcal O}(A(X))$, 
\end{enumerate}
and such that, for any representative ambient contractor $V$, the following conditions are satisfied:
\begin{enumerate}[(a)]
	\item \label{thesecondone} $\bar \nabla \big(V+\order(A(X))\big)-\bar \nabla V
	=\order(A(X))\, ,$
	\item  $\bar \nabla_U V\in \Gamma(\Xi)$,
	\item $\Lie_X (\bar \nabla_U V)=-\bar \nabla_U V+\order(A(X))$,
\end{enumerate}
\noindent where $U\in\Gamma(TM)$ is any vector field subject to the condition $\Lie_U(A(X))=\order(A(X))=\Lie_X U$.\hfill$\blacksquare$
\end{definition}

The last three conditions in Definition \ref{def:ambientconnection} ensure that, along vectors tangent to the cone $\Cone$ which descend to vectors along~$Z$, the connection $\bar \nabla$ is a map from ambient contractors to ambient contractors. The first two conditions  guarantee that ambient contractor connections are contractor connections, as defined in Subsection \ref{ssec:ContractConnections}.

\noindent The main result in this subsection is the existence of an ambient contractor connection as introduced in Definition \ref{def:ambientconnection}. We construct this ambient contractor connection by intertwining contact geometry with projective structures. 
We discuss these below, but already state the  precise relation:

\begin{theorem}\label{existenceambientconnection}
Let $(M,A,x)$ be an ambient contact manifold and $\widetilde \nabla$ an ambient projective connection $($see Definition \ref{def:ambientprojective}$)$ that preserves the Levi 2-form along $\Xi$, {\it i.e.} $\widetilde\nabla \Phi= \order(A(X))$. Then, the connection $\bar \nabla$, defined for $U\in \Gamma(TM)$ by
	$$
	\bar\nabla_{U}=\widetilde \nabla_{U} - (R\otimes A)\circ \widetilde \nabla_{U}\, ,
	$$
is an ambient contractor connection.\hfill$\blacksquare$
\end{theorem}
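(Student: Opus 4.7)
The plan is to verify the five defining properties (\ref{FIRST}), (\ref{SECOND}), (\ref{thesecondone}), (b) and (c) of Definition~\ref{def:ambientconnection} for the candidate connection $\bar\nabla_U=\widetilde\nabla_U-(R\otimes A)\circ\widetilde\nabla_U$. Throughout I will rely on $A(R)=1$, $\iota_R\Phi=0$, and on ${\mathcal L}_XA=2A$, from which Lemma~\ref{lem:Xprime} (invoked later in the excerpt) gives ${\mathcal L}_XR=-2R$; together these imply that the tensor $R\otimes A$ has net homogeneity zero along the cone.

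First I would dispose of (b), which is immediate: for any vector field $U$ and any representative $V$ of an ambient contractor,
$$A(\bar\nabla_UV)=A(\widetilde\nabla_UV)-A(R)\,A(\widetilde\nabla_UV)=0,$$
so $\bar\nabla_UV\in\Gamma(\Xi)$. Property (\ref{FIRST}) then follows from the defining feature of an ambient projective connection, namely $\widetilde\nabla X=\mathrm{Id}+\order(A(X))$ on the Thomas-type ambient manifold (Definition~\ref{def:ambientprojective}): applying the $R\otimes A$ projection and using $A(\mathrm{Id}_{TM})=A$ yields
$$\bar\nabla X=\mathrm{Id}-R\otimes A+\order(A(X)).$$
For (\ref{SECOND}), extend $\bar\nabla$ to $\Xi^{\otimes 2}$ by Leibniz. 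The hypothesis $\widetilde\nabla\Phi=\order(A(X))$ handles the $\widetilde\nabla$-part; the correction coming from $(R\otimes A)$ contributes only terms of the form $\Phi(R,\cdot)$ or $\Phi(\cdot,R)$, which vanish because $\iota_R\Phi=\iota_R\,dA=0$.

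Next I would verify (\ref{thesecondone}). If $V$ and $V+A(X)W$ are two representatives, with $W\in\Gamma(\Xi)$ of homogeneity $-3$, Leibniz gives
$$\bar\nabla_U(A(X)W)=\bigl({\mathcal L}_U A(X)\bigr)W+A(X)\,\bar\nabla_UW.$$
The second summand is manifestly $\order(A(X))$, and the first is $\order(A(X))$ by the standing assumption ${\mathcal L}_U A(X)=\order(A(X))$ on admissible vector fields $U$.

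The main obstacle is property (c), the preservation of homogeneity. Here I would use the equivariance of the ambient projective connection under the $\mathbb R^+$-action, which should give the commutator identity
$$[{\mathcal L}_X,\widetilde\nabla_U]\,V=\widetilde\nabla_{[X,U]}V+\order(A(X))$$
when $U$ and $V$ have the homogeneities allowed in Definition~\ref{def:ambientconnection}. Since ${\mathcal L}_X U=\order(A(X))$ and ${\mathcal L}_XV=-V+\order(A(X))$, this collapses to ${\mathcal L}_X(\widetilde\nabla_UV)=-\widetilde\nabla_UV+\order(A(X))$. For the correction piece $(R\otimes A)\circ\widetilde\nabla_U$ one uses ${\mathcal L}_X R=-2R$ and ${\mathcal L}_X A=2A$ so that $R\otimes A$ is ${\mathcal L}_X$-invariant modulo $\order(A(X))$, and the same $-1$ weight is recovered. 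The delicate point — and where I anticipate the bulk of the work — is confirming that the equivariance built into Definition~\ref{def:ambientprojective} is strong enough to produce the bracket identity above along $\mathcal C$ and not merely away from it; this amounts to checking that the $\order(A(X))$ ambiguity in both $\widetilde\nabla X$ and in the homogeneity of $V$ feeds only into $\order(A(X))$ terms after covariant differentiation, which follows from (a) once (a) is established.
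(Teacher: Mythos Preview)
Your overall architecture---verify the five clauses of Definition~\ref{def:ambientconnection} one at a time---matches the paper's, and your treatments of (b), (\ref{thesecondone}), and (\ref{SECOND}) are correct and essentially identical to those given there. The real content lies in (\ref{FIRST}) and especially (c), and in both places you have conflated the vector field $X$ generating the ambient $\R^+$-action with the vector field $X'=X-\tfrac12 A(X)R$ generating the \emph{projective} $\R^+$-action. These agree only along $\mathcal C$, not in a neighborhood.

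For (\ref{FIRST}): the ambient projective connection satisfies $\widetilde\nabla X'=\mathrm{Id}$, not $\widetilde\nabla X=\mathrm{Id}+\mathcal O(A(X))$. In fact
\[
\widetilde\nabla_U X=\widetilde\nabla_U X'+\tfrac12\widetilde\nabla_U\bigl(A(X)R\bigr)=U+\tfrac12\bigl(\mathcal L_U A(X)\bigr)R+\mathcal O(A(X)),
\]
and the middle term is \emph{not} $\mathcal O(A(X))$ for general $U$. Your conclusion is nevertheless rescued because this extra term lies in the $R$-direction and is annihilated by the projector $\mathrm{Id}-R\otimes A$; the paper does exactly this computation.

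For (c) the conflation is more serious. The Killing/equivariance hypothesis built into Definition~\ref{def:ambientprojective} is $\tilde R(X',\cdot)=0$, equivalently $[\mathcal L_{X'},\widetilde\nabla_U]=\widetilde\nabla_{[X',U]}$; it is \emph{not} the identity $[\mathcal L_{X},\widetilde\nabla_U]=\widetilde\nabla_{[X,U]}+\mathcal O(A(X))$ that you invoke. Passing from $X'$ to $X$ introduces terms coming from $\mathcal L_{\frac12 A(X)R}$ which are not a priori $\mathcal O(A(X))$ (for instance $\mathcal L_{fR}W=f[R,W]-(Wf)R$, and the coefficient $Wf$ with $f=\tfrac12 A(X)$ need not vanish on $\mathcal C$). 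The paper handles this by writing $\mathcal L_X$ via torsion-freeness as $\widetilde\nabla_X-\widetilde\nabla_{(\cdot)}X$, expanding $\widetilde\nabla_X\widetilde\nabla_U V$ through the curvature identity, and then using $\tilde R(X,\cdot)=\tilde R(X',\cdot)+\tfrac12 A(X)\tilde R(R,\cdot)=\mathcal O(A(X))$ together with $\widetilde\nabla X'=\mathrm{Id}$. Your sketch skips precisely this unpacking, which is where the distinction between $X$ and $X'$ does the work. Once you make that distinction explicit, your argument becomes the paper's.
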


In short, the strategy is to use the existence of tractor connections for projective structures in order to build ambient contractor connections. The proof of Theorem \ref{existenceambientconnection} will be provided in the next subsection, as we first need to develop the projective structure ingredients, which we now do.

\noindent Given an ambient contact manifold $(M,A,x)$, we consider the volume form $\Vol_A$ associated to the contact form $A$,
$$
\tilde \mu = A \wedge \Phi^{\wedge n}=:\Vol_A\, .
$$
The homogeneity condition $\Lie_X A=2A$ used in Definition \ref{def:ambientcontractor} implies
$$
\Lie_X \Vol_A = 2(n+1) \Vol_A\, .
$$
Thus, starting in the contact geometric setting, it is not directly possible to view the quotient of~$M$ by the $\R^+$-action as a projective manifold and $M$ as its projective ambient space. Indeed, according to Definition \ref{def:ambientprojective}), the homogeneity weight $2(n+1)$ of $\Vol_A$ with respect to $X$ should have been $\dim(M)=2n+3$. The solution is to construct a new $\R^+$-action correcting this. 

For that, we first observe that
\begin{equation}\label{recall}
\Lie_{X-\frac12 A(X) R}  A= 2A -\frac12 A(X) \Lie_R A -\frac12 \big(d(A(X))\big) \iota_R A
= A+\frac12 X^\flat\, ,
\end{equation}
while we also have the following Lie derivative:
$$
\Lie_{X-\frac12 A(X) R}  \Phi
=\Phi+\frac12  d X^\flat = 2\Phi\, . 
$$
The two equalities above suggest that the infinitesimal generator $X$ for the original $\R^+$-action ought to be corrected to $X':=X-\frac12 A(X) R$. Indeed, the following result shows that the vector field $X'$ is the desired generator:

\begin{lemma}\label{lem:Xprime} The vector field $X':=X-\frac12 A(X) R$ obeys $$\Lie_{X'}\Vol_A = (2n+3) \Vol_A.$$
In addition, $[X,X']=0.$
\end{lemma}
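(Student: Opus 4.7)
The plan is to verify both identities by direct Cartan-calculus computations, building on Equation~\eqref{recall}, which already gives $\Lie_{X'} A = A + \tfrac12 X^\flat$ (equivalently $2A - \tfrac12 d(A(X))$, using $X^\flat = 2A - d(A(X))$ from Equation~\eqref{iamacuteequation}). Taking the exterior derivative yields the second ingredient, $\Lie_{X'} \Phi = 2\Phi$.

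For the first identity I apply the Leibniz rule to $\Vol_A = A \wedge \Phi^{\wedge(n+1)}$, the top-degree volume form on $M^{2n+3}$:
\begin{equation*}
\Lie_{X'} \Vol_A
= (\Lie_{X'} A)\wedge \Phi^{\wedge(n+1)} + (n+1)\, A \wedge(\Lie_{X'} \Phi)\wedge \Phi^{\wedge n}
= \bigl(A + \tfrac12 X^\flat\bigr)\wedge \Phi^{\wedge(n+1)} + 2(n+1)\Vol_A\,.
\end{equation*}
The key observation is that $X^\flat \wedge \Phi^{\wedge(n+1)} = \tfrac{1}{n+2}\,\iota_X \bigl(\Phi^{\wedge(n+2)}\bigr) = 0$, since $\Phi^{\wedge(n+2)}$ is a $(2n+4)$-form on a $(2n+3)$-manifold. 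The first term therefore collapses to $A\wedge \Phi^{\wedge(n+1)} = \Vol_A$, yielding the total $(2n+3)\Vol_A$ as claimed.

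For $[X, X'] = 0$, I expand $[X, X'] = -\tfrac12 [X, A(X)\, R]$ by Leibniz,
\begin{equation*}
[X, A(X)R] = \bigl(X\!\cdot\! A(X)\bigr)\, R + A(X)\, [X,R]\,.
\end{equation*}
The first factor equals $X\!\cdot\! A(X) = (\Lie_X A)(X) = 2\, A(X)$. To evaluate $[X,R] = \Lie_X R$ I apply $\Lie_X$ to the defining relations of the Reeb vector: $\Lie_X(A(R)) = 0$ yields $A(\Lie_X R) = -2$, while $\Lie_X(\iota_R \Phi) = 0$ together with $\Lie_X \Phi = 2\Phi$ gives $\iota_{\Lie_X R}\Phi = 0$. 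Because $A$ is contact, $\ker \Phi = \Span(R)$, so $\Lie_X R$ must be proportional to $R$; combined with $A(\Lie_X R) = -2$ this forces $\Lie_X R = -2R$. Substituting back, the two contributions cancel: $[X, A(X)R] = 2A(X)R - 2A(X)R = 0$.

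Neither step poses a serious obstacle; both are short Cartan-calculus exercises. The only subtle point is the dimensional vanishing $\Phi^{\wedge(n+2)} = 0$ on $M^{2n+3}$, which is precisely the mechanism by which the correction $-\tfrac12 A(X)\,R$ adjusts the weight of $\Vol_A$ under the Lie derivative to match $\dim M = 2n+3$, as needed for the ambient projective interpretation of the symplectization.
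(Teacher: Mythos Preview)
Your proof is correct and follows essentially the same route as the paper: both parts rely on the two displayed computations $\Lie_{X'}A = A+\tfrac12 X^\flat$ and $\Lie_{X'}\Phi=2\Phi$ preceding the lemma, and for $[X,X']=0$ you derive $\Lie_X R=-2R$ from the Reeb defining relations exactly as the paper does. Your treatment of the first identity is in fact more explicit than the paper's one-line ``follows directly'': you supply the dimensional vanishing $X^\flat\wedge\Phi^{\wedge(n+1)}=\tfrac{1}{n+2}\,\iota_X\Phi^{\wedge(n+2)}=0$ (and in doing so implicitly correct the paper's exponent in $\Vol_A=A\wedge\Phi^{\wedge n}$ to the top-degree $A\wedge\Phi^{\wedge(n+1)}$ on $M^{2n+3}$).
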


\begin{proof}
The homogeneity condition follows directly from the two computations above. Let us show that $X$ and $X'$ commute. For that, first note that
	$$
	0=\Lie_X 1 = \Lie_X (A(R))=2 A(R)+ A(\Lie_X R)\, ,
	$$
	and thus $A(\Lie_X R)=-2$. Given that
	$$
	0=\Lie_X (\Phi(R,\pdot))
	=2 \Phi(R,\pdot) +\Phi(\Lie_X R,\pdot) =\Phi(\Lie_X R,\pdot)\, , 
	$$
	we conclude that $\Lie_X R \propto R$ and hence
	$$
	\Lie_X R = -2 R\, .
	$$
	Now we can use
	$
	\Lie_X(A(X))=2A(X)
	$
	and obtain
	$$
	\Lie_X X'=\frac12 R\,  \Lie_X(A(X)) +\frac12 A(X)\, \Lie_X R = 0\, .
	$$
\end{proof}

\noindent By construction, the flow of $X'$ is given by the composition of the flows of $X$ and $\frac12 A(X) R$ and thus coincides with the $\R^+$-action $x$ along a neighborhood of the cone $\Cone$, as required. The first identity in Lemma \ref{lem:Xprime} shows that the quotient by the $X'$-action of 
an open neighborhood $\Op(\Cone)$ of the cone $\Cone$ 
is
an $\R^+$-principal bundle over the projective manifold $S:=\Op(\Cone)/\!\sim_{X'}$.

\noindent 
We now consider $\Op(\Cone)/\!\!\sim_{X'}$ as a projective manifold,
and thus may analyze this as a
$G=\SL(2n+3,\R)$ parabolic structure~\cite{Thomas}. Let us remind the reader that a $d$-dimensional projective manifold~$({S}^d, \bm \nabla)$ consists, by definition, of a smooth $d$-dimensional manifold $S$ and an equivalence class of torsion-free affine connections $\bm \nabla$, where the equivalence relation is given by $\nabla\sim {\nabla}'$ if and only if~$\nabla,{\nabla}'\in \bm \nabla$ determine the same unparameterized geodesics on $\mathcal{S}$. (Note that we 
use the symbol~$\bm \nabla^\hbar$ 
to denote the quantization of a contact structure, which ought not be confused with a projective class of connections.)
The ambient construction of projective structures starts with the following:

\begin{definition}\label{def:ambientprojective}
A projective ambient space for a projective $d$-dimensional manifold $({S},\bm \nabla)$ consists of a triple $(M^{d+1},\pi,\tilde \mu)$ where $\pi:M\longrightarrow {S}$ is an $\R^+$-principal bundle over ${S}$, and $\tilde \mu\in \Omega^{d+1} M$ is a positive volume form satisfying
	$$
	{\mathcal L}_{X'} \tilde \mu = (d+1) \hh \tilde \mu\, ,
	$$
	where $X'$ is the vector field generating the $\R^+$-action on $M$.\hfill$\blacksquare$
\end{definition}

In line with Definition \ref{def:ambientprojective}, the {\it ambient projective tractor bundle} $\mathfrak{P}$ is defined as the tangent bundle~$TM$ to the projective ambient space quotiented by the $\R^+$-action $x'_\lambda$, generated by $X'$, according to 
$$
V_P\sim \big(\lambda^{-1} dx'_\lambda(V)\big)_{x_\lambda(P)}\, ,
$$
for each $P\in M$. The {\it canonical protractor} corresponds to the ambient vector field $X'\in \Gamma(TM)$ that generates the $\R^+$-action. The ambient projective tractor bundle $\mathfrak{P}$ carries an {\it ambient projective connection}, which by definition is a torsion-free affine connection $\tilde \nabla$ subject to
$$\tilde \nabla X' = \Id\, ,\qquad \tilde \nabla \tilde \mu = 0\, ,$$
and a {\it Killing vector condition}
\begin{equation}\label{Kill}
{\mathcal L}_{X'} \tilde \nabla V = 
\tilde \nabla \, {\mathcal L}_{X'} V\, ,
\end{equation}
for any $V\in \Gamma(TM)$. The latter condition holds iff
$$
\tilde R(X',\cdot) V = 0\, ,
$$
where $\tilde R$ is the Riemann curvature of $\tilde \nabla$, see {\it e.g.}~\cite{Harrison}. The existence of an ambient projective connection is established in~\cite{Thomas}.

Let us now consider the ambient projective space to be the $(2n+3)$-dimensional ambient contact manifold $(M,A)$ endowed with the vector field $X'$, which we view as the infinitesimal generator of an $\R^+$-action.

\noindent Note that the projection of a weight $w$ ambient projective tractor to the distribution $\Xi|_{\Cone}$, given by~$\Xi=\ker A$ along the cone, yields a weight $w$ ambient contractor. Indeed, the 
conditions
$$\Lie_{X'} V=(w-1) V
\:\mbox{ and }\: A(V)={\mathcal O}(A(X))$$
together imply
$$
\Lie_{X} V = (w-1) V +\tfrac12 \Lie_{A(X) R} V
=(w-1) V+
\tfrac12 \big(d(A(X))\big) \iota_R V
+{\mathcal O}(A(X))
=(w-1) V+{\mathcal O}(A(X))\, .
$$
These ingredients
yield the hypothesis of Theorem~\ref{existenceambientconnection}, and we can now proceed to its proof.

\subsection{Proof of Theorem \ref{existenceambientconnection}}\label{proofy}
	Given an ambient contractor $V$, we have for any $U,B\in \Gamma(TM)$ with $\Lie_U (A(X))={\mathcal O}\big(A(X)\big)$ that  
	\begin{equation}\label{project}
	\bar \nabla_U(V + A(X) B)-\bar \nabla_U V
	=
	(B-RA(B))\, \iota_U d(A(X))
	+\order(A(X))
	=\order(A(X))\, ,
	\end{equation}
	because $\iota_U d(A(X))=\Lie_U (A(X))$.  
	This establishes Condition~(\ref{thesecondone}) in  Definition~\ref{def:ambientconnection}.
	Note that $\bar \nabla_U V\in \Gamma(\Xi)$ because $\Id - R\otimes A$ is a projector; this establishes (b).
	Alas, Condition (c)
requires some	computation:
	$$
	\Lie_X (\bar \nabla_U V) 
	=\Lie_X (\tilde \nabla_U V - R A(\tilde \nabla_U V))
	=(\Id -R\otimes A)\big(\Lie_X(\tilde \nabla_U V)\big)
	\, .
	$$
	Here we used that $\Lie_X A -2A = 0 = \Lie_X R+2R$.
	Now
	$$
	\Lie_X(\tilde \nabla_U V)=
	\tilde \nabla_X \tilde \nabla_U V - \tilde\nabla_{\tilde \nabla_U V}
	X\, ,
	$$
	because $\tilde\nabla$ is torsion free.
	Thus
	$$
	\Lie_X(\tilde \nabla_U V)=\tilde R(X,U) V +\tilde \nabla_{[X,U]} V+\tilde\nabla_U \tilde \nabla_X V
	-\tilde\nabla_{\tilde \nabla_U V}(
	X'+\tfrac12 A(X) R)\, ,
	$$
	where $\tilde R$ is the Riemann tensor of $\tilde\nabla$. 

By using the Killing condition~\nn{Kill}, it follows that $0=\tilde R(X',\pdot) 
	=\tilde R(X,\pdot)+\order(A(X))$. Note as well that
	$$\tilde \nabla_U \tilde \nabla_X V
	= \tilde\nabla_U (\Lie_X V + \tilde \nabla_V X)=-\tilde\nabla_U V
	+\tilde \nabla_U V +\frac12 \tilde\nabla_U 
	\big(A(X)\tilde\nabla_R V\big)=\order(A(X))\, ,
	$$
	because $\Lie_U (A(X))=\order(A(X))$.
	Recalling that $\Lie_X U=\order(A(X))$ and $\tilde \nabla X'=\Id$, we then have
	$$
	\Lie_X(\tilde \nabla_U V)=-\tilde\nabla_U V 
	-\frac12 R\, \tilde\nabla_{\tilde \nabla_U V}
	A(X)
	+\order(A(X))\, .
	$$
	Applying the projector $\Id - R\otimes A$ to the above expression and inserting this in Equation~\nn{project} yields
	$$
	\Lie_X (\bar \nabla_U V) = -(\Id - R\otimes A)(\tilde \nabla_U V)+\order(A(X))=-\bar \nabla_U V+\order(A(X))\, .
	$$
	This establishes that $\bar\nabla$ maps ambient contractors to ambient contractors. 
	
	It remains to verify the first two Conditions~(\ref{FIRST},\ref{SECOND}), of Definition~\ref{def:ambientconnection}. 
First, we compute
	\begin{multline*}
	\bar \nabla X =(\Id - R\otimes A) \tilde \nabla_U (X'+\tfrac12 A(X) R)=(\Id - R\otimes A)\big(U+\tfrac 12 R \Lie_U (A(X))\big)+\order(A(X))\\[1mm]=(\Id - R\otimes A)U +\order(A(X))\, ,
	\end{multline*}
	which verifies Condition~(\ref{FIRST}).
	Finally, that the 2-form $\Phi$ is parallel with respect to $\bar \nabla$ along $\Cone$
	follows directly from $\tilde \nabla \tilde\mu=\order(A(X))$.\hfill$\Box$

\section{Dynamical Quantization of $(Z,\xi)$}\label{sec:Quant}
\newcommand{\Sp}{\operatorname{Sp}}
\newcommand{\cev}[1]{\reflectbox{\ensuremath{\vec{\reflectbox{\ensuremath{#1}}}}}}

The quantization of a strict contact structure $(Z,\alpha)$ was given in Section~\ref{sec:StrictQuant}, while a  contractor calculus for handling the associated contact manifold $(Z,\xi)=(Z,\ker{\alpha})$ was developed in Section~\ref{sec:Contract}. We now meld these ingredients to present the dynamical contact quantization of $(Z,\xi)$.

\subsection{Hilbert Contractor Bundle}
\label{HBC}
 Let $(Z,\xi)$ be a contact manifold and $\FZ\longrightarrow Z$ its contractor bundle, as constructed in Section \ref{ssec:contractorbundle}. Let ~${\mathscr P}$ be the  $P$-principal bundle corresponding to $\FZ$, where~$P$ is the parabolic subgroup of $\operatorname{Sp}(2n+2)$ stabilizing a ray in $(\R^{2n+2},\omega_\st)$. In particular, the contractor bundle $\FZ$ is an associated vector bundle for this $P$-principal bundle. Given a Hilbert space $\mathcal{H}$,  we consider the metaplectic projective representation $$\mathcal{M}:\mbox{Sp}(2n+2,\mathbb{R})\lr\Aut(\P(\mathcal{H})),$$
of $\mbox{Sp}(2n+2,\mathbb{R})$ acting on the projective Hilbert space $\P(\mathcal{H})$. The metaplectic representation is discussed in the following section. For now, we use it to provide the following definition.

\begin{definition}\label{def:Hilbertcontract}
Let $(Z,\xi)$ be a contact manifold, $\FZ\rightarrow Z$ a standard contractor bundle and $\mathcal H$ a Hilbert space equipped with a metaplectic projective representation of  $\mbox{Sp}(2n+2,\mathbb{R})$. The {\it Hilbert Contractor Bundle} $\mathcal{H}(\FZ)$ of $\FZ\rightarrow Z$ is the vector bundle
${\mathscr P}\times_{ P}\P(\mathcal{H})$
associated to the metaplectic  projective representation $\mathcal{M}:\mbox{Sp}(2n+2,\mathbb{R})\rightarrow\Aut(\P(\mathcal{H}))$. \phantom{junk}
\hfill$\blacksquare$
\end{definition}

In fact, for our purposes it is convenient to use a different but equivalent vector bundle $\SH(\FZ)$ defined by considering the bundle of orthonormal frames $\mathscr P$ on $\FZ$ with respect to the Hilbert space inner product. Then $\SH(\FZ):={\mathscr P}\times_{U(\mathcal H)} \P({\mathcal H})$. We shall slightly abuse notation and also call this bundle the Hilbert contractor bundle.

On each fiber ${\mathbb P}({\mathcal H})$, the metaplectic representation comes equipped with a representation of the Heisenberg algebra ${\mathfrak{ sp}}(2n+2)$. Hence, 
for each standard contractor $U\in \Gamma(\FZ)$, the Hilbert contractor bundle comes with a map 
$$
s(U):\Gamma\big({\mathscr H}(\FZ)\big)\to\Gamma\big({\mathscr H}(\FZ)\big)\, ,
$$
where
$$
s(U)\circ s(V) -s(V)\circ s(U)=J(U,V)\, .
$$
In the above, $V$ is also a standard contractor and $J$ denotes the contractor symplectic form. We recycle the name  {\it Heisenberg map} for the above map $s$, which is canonically defined by extending the fiberwise unitary representation of the Heisenberg algebra (discussed in the next section) to act on sections of~${\mathscr H}(\FZ)$. It can  be further extended to act on weighted sections of ${\mathscr H}(\FZ)[w]$ or take as inputs weighted contractors. In particular, the map 
$$
s^+:=s(X):\Gamma({\mathscr H})\to\Gamma({\mathscr H})
$$
for the Hilbert contractor bundle plays a {\it r\^ole} analogous to that of the canonical  tractor for the standard contractor bundle.
It is useful to develop a more explicit realization of the Hilbert contractor bundle akin to that given for the standard contractor bundle in Section~\ref{cbundles}.
To that end,
we next give an explicit realization of the metaplectic projective representation of $\mbox{Sp}(2n+2,\mathbb{R})$. 

\subsection{Metaplectic Representation}\label{sec:metaplectic} 
The material in this section is standard but included for completeness, good references include~\cite{Gelfand,Gosson}.
The metaplectic representation  is a unitary representation of the connected double covering $\operatorname{Mp}(2n):=\widetilde{\Sp}(2n)$ of $\operatorname{Sp}(2n)$, that can be constructed
in terms of  generalized Fourier transformations acting on $L^2({\mathbb R}^{n})$. (Strictly, this is defined on the space of Schwartz functions ${\mathcal S}(\mathbb R^{n})$ on   
$\mathbb R^{n}$, which are dense in $L^2({\mathbb R}^{n})$.) 

Let $\mathscr{F}\sse\operatorname{Sp}(2n)$ be the set of free symplectic matrices in the symplectic linear group given by
$$\mathscr{F}=\left\{\left( \begin{array}{cc}
A & B \\
C & D \end{array} \right)\in\operatorname{Sp}(2n): A,B,C,D\in\mbox{Mat}_{n}(\R),\det(B)\neq0\right\}\sse\operatorname{Sp}(2n)\,  .$$
This set of matrices generates $\operatorname{Sp}(2n)$; in particular any $\operatorname{Sp}(2n)$ matrix is expressible as the product of a pair of free symplectic matrices. Now consider the {\it generating function} given by the bilinear form $a_F:\R^{n}\times\R^{n}\to {\mathbb R}$ associated to a linear symplectomorphism $F\in\mathscr{F}$ according to
$$a_F(x,\tilde{x})=\frac{1}{2}(B^{-1}A\tilde{x})\cdot \tilde{x}+\frac{1}{2}(DB^{-1}x)\cdot x-(B^{-1}x)\cdot\tilde{x}.$$
The generating function $a_F$  allows us to define the generalized Fourier transform $\mathscr L_F$ induced by $F$
$$\mathscr L_F:L^2(\R^{n})\lr L^2(\R^{n}),\quad \mathscr L_F\psi(x)=\left(\frac{1}{2\pi}\right)^{\frac n2}i^{\nu_F}\sqrt{|\det B^{-1}|}\int_{\R^{n}} e^{ia_F(x,\tilde{x})}\psi(\tilde{x})d\tilde{x}\, .$$
The system of phases, encoded by $i^{\nu_F}$, in the definition of  $\mathscr L_F$ is
given in terms of the so-called 
Conley-Zehnder index $\nu_F$ modulo~$4$ of a path joining the free symplectic matrix $F$ to the identity~$\Id\in\operatorname{Sp}(2n)$. This is  given by $\nu_F=m-n$ where
$m$ is a choice of argument for the branch of $\det B^{-1}$.

Let $\mathscr{H}=L^2(\R^{n})$, then the formul\ae\ above define the metaplectic representation 
$$\mathscr{M}:{\mbox{Sp}}(2n,\mathbb{R})\lr\Aut(\P(\mathscr{H}))$$
by defining it on free symplectic matrices and extending by virtue of producing a group automorphism.
Happily, since we work with $\P(\mathscr{H})$, we no longer need 
keep track of the system of phases controlled by the Conley-Zehnder index.

\begin{remark}\label{rmk:JFourier}
The operator $\widehat J$  
associated to the standard almost complex structure $J$, with generating function $a_{J}(x,\tilde{x})=-x\cdot\tilde{x}$, is the standard Fourier transform. It is readily seen that the operator $\widehat R$ associated to a diagonal rescaling of scale $r\in\GL(n,\R)$ is 
$$\widehat R\psi(x)=i^m\sqrt{\det(r)}\cdot \psi(r\cdot x)\, ,$$
 where $m$ is the Maslov index associated to a choice of branching argument for $\det(r)$. \hfill$\blacksquare$
\end{remark}

\begin{example}\label{Iamthexample}
Consider $\operatorname{Sp}(2)$ and denote $$
T:=\begin{pmatrix}1&t\\0&1\end{pmatrix}\, ,\qquad S:=\begin{pmatrix}0&1\\-1&0\end{pmatrix}\quad\Rightarrow
\quad-STS=
\begin{pmatrix}1&0\\-t&1\end{pmatrix}\,.
$$
The matrices $S$ and $T$ 
are free symplectic, while $STS$ is not.  Denoting the vectors $(x,\tilde x)$ discussed above by some~$(y,\tilde y)\in {\mathbb R}^2$, then $a_S=-y\tilde y$ and $a_T=\frac1{2t}
(y-\tilde y)^2$. Thus, if $\psi(y)\in L^2({\mathbb R})$ is a Schwartz function,
$$
\mathscr{M}_S \psi(y)=\frac1{\sqrt{2\pi i} } \int_{\mathbb R}
e^{-iy \tilde y}\psi(\tilde y)d\tilde y\, ,\qquad
\mathscr{M}_T \psi(y)=\frac1{\sqrt{2\pi it}} \int_{\mathbb R}
e^{\frac{i}{2t}{(y -\tilde y)^2}}\psi(\tilde y)d\tilde y\, .
$$
Moreover, a simple calculation shows
$$
\mathscr{M}_{\scalebox{.5}{$\begin{pmatrix}1\!&\!0\\-t\!&\!1\end{pmatrix}$}}=
(\mathscr{M}_{S^{-1}}\circ \mathscr{M}_T \circ \mathscr{M}_S)\psi(y)=e^{ -\frac i2 t y^2} \psi(y)\, .
$$ 
Defining $s_{\pm}\in \operatorname{End}({\mathcal S}(\mathbb R))$ by
$$
s_{+}  \psi(y)=y \psi(y)\, \qquad 
s_{-} \psi(y)=\frac1i\frac{\partial}{\partial y} \psi(y) \, ,
$$
the linearization of  $\mathscr{M}_T$ and 
$(\mathscr{M}_{S^{-1}}\circ \mathscr{M}_T \circ \mathscr{M}_S)$ in $t$ yields the operators  
$$
\frac{d}{dt} (\mathscr{M}_S\circ \mathscr{M}_T \circ \mathscr{M}_S) \psi(y)|_{t=0}
=-\frac i2 (s_{+}){}^2\,  \psi(y)\,, \quad
\frac{d}{dt} \mathscr{M}_T \psi(y)|_{t=0}=\frac i2(s_{-}){}^2\,  \psi(y)\, . 
$$
The operators $m_{++}=s_+^2$, $m_{--}=s_-^2$ and $m_{+-}=s_+ s_-+s_-s_+$ obey 
$$
[m_{--}, m_{++}]= 4 y\frac{\partial}{\partial y} +2=2m_{+-}\, ,
$$
and thus
generate the Lie algebra $\mathfrak{sp}(2)$ acting on Schwartz functions.
\hfill$\blacksquare$\end{example}

The pair of operators $s_+$ and $s_-$ in Example \ref{Iamthexample} are particularly important, since they give a representation of the one-dimensional Heisenberg Lie algebra. In fact, by the Stone--von Neumann theorem, once we fix how the central element of the Heisenberg algebra acts, we obtain a unique unitary representation of the Heisenberg algebra. 
This gives a map
$
s(u)\in \Aut(\P(\mathscr{H}))
$
 which, for any~$u,v\in {\mathbb R}^{2n+2}$, obeys
$$
s(u)\circ s(v) -s(v)\circ s(u)=J(u,v)\, .
$$
This is a unitary representation of the Heisenberg algebra where the central element acts as the identity. Moreover, this interacts with the projective metaplectic representation according to 
\begin{equation}\label{fun2meta}
{\mathscr M}(g) \circ s(u)
\circ  {\mathscr M}(g^{-1})=
s(g\cdot u)\, .
\end{equation}
Here $g\cdot u$ denotes the fundamental representation of ${\rm Sp}(2n)$ on 
the symplectic vector space
$({\mathbb R}^{2n},J)$.

\smallskip

Armed with the projective metaplectic representation, we now give an  explicit description of the Hilbert contractor bundle $\mathscr{H}(\FZ)$ of Definition~\ref{def:Hilbertcontract}.

\subsection{Hilbert Contractor Bundle II}

Consider the standard contractor bundle $\FZ\lr (Z,\xi)$. By Definition~\ref{def:contractor}, the equivalence relation defining the bundle $\FZ$ is given by the linear transformation
\begin{equation}\label{transf}
\begin{pmatrix}
v^+ \\
v^a \\
v^-
\end{pmatrix}
\sim 
\left(
\begin{array}{c|c|c}
\Omega & {0} & 0 \\
\hline
0&\delta^a_b
&0\\
\hline
0 &0 & \Omega^{-1}
\end{array}
\right)
\left(
\begin{array}{c|c|c}
1 & {0} & 0 \\
\hline
0&M^b_c
&0\\
\hline
0 &0 & 1
\end{array}
\right)
\left(
\begin{array}{c|c|c}
1 & {0} & 0 \\
\hline
-{\Upsilon}^c &
\delta^c_d & 0 \\
\hline
-\frac{1}{2}\chi & {\Upsilon}_d &1
\end{array}
\right)
\begin{pmatrix}
v^+ \\
v^d \\
v^-
\end{pmatrix}\, .
\end{equation}
The above display deserves some explanation:. Since the densities $\bm v^{\pm}$ of Definition~\ref{def:contractor}
are defined as equivalence classes $[\alpha,v^{\pm}]=[\Omega^2\alpha ,\Omega^{\pm1} v^{\pm}]$, we have labeled them by the functions $v^\pm$.
We have also expanded the distribution-valued vector $v=v^a f_a$ in a basis $\{f_a\}$ for the distribution such that~$\varphi(f_a, f_b)=j_{ab}$.
Indices are raised and lowered with the symplectic bilinear form $j$ in the usual way.
Global existence of such a basis is not important for the discussion that follows. The ${\rm Sp}(2n)$-valued matrix~$M^b_c$ encodes equivalence under different basis choices. 
The function $\Omega$ encodes
 rescaling of the contact form $\alpha$ to 
$\Omega^2\alpha$ with $\Omega\in C^\infty(Z)$, $\Omega$ positive, where
$
d\log \Omega = \Upsilon + \alpha \chi
$, $\Upsilon(\rho^\alpha)=0$, and~$\Upsilon=\Upsilon_a e^a$,
for a dual basis  $e^a$ determined by the Reeb vector of $\alpha$.

The matrices in the above display are lower triangular symplectic matrices that belong to
 the parabolic group $P$ discussed  in Subsection~\ref{ssec:contractorbundle}. 
 Understanding how they are realized by the metaplectic representation is key to giving an explicit description of the Hilbert contractor bundle.
 Let us focus on a single fiber of ${\mathscr H}(\FZ)$, given by the
  Hilbert space  $\mathcal{H}=L^2(\R^{n+1})$,
 which comes naturally equipped with the metaplectic representation of ${\rm Sp}(2n+2)$. 
 Following the 5-grading 
 in Equation~\nn{5grading}, 
 we may label the generators of $\mathfrak {sp}(2n+2)$ by
 $
 \{m_{++},m_{+a},m_{ab},m_{+-},m_{-a},m_{--}\}\, ,
 $
where the grading is computed by counting the subscripts $\pm$ with weights $\pm 1$.
The three generators $\{m_{++},m_{+-},m_{--}\}$  
 span 
 the Lie algebra $\mathfrak{sp}(2)$.
 The Lie algebra of the parabolic subgroup $P$  is generated by the non-negatively graded elements $ 
 \{m_{++},m_{+a},m_{ab},m_{+-}\}$.
Now we write the $(n+1)$-dimensional Heisenberg Lie algebra as $\operatorname{span}\{s_+,s_a,s_-,1\}$, where
 $$
 [s_-,s_+]=1\,, \quad [s_a,s_b]=j_{ab}\, ,
 $$
 and we have denoted the generator of the center by $1$.
  The generators of the Lie algebra of the parabolic subgroup $P$ are given by bilinears in the enveloping algebra of this Heisenberg algebra according to
  $$
  m_{++}=(s_+)^2\,,\quad
  m_{+a}=s_+ s_a\, ,\quad
  m_{ab}=s_a s_b + s_b s_a\, ,\quad
  m_{+-}=s_+s_-+s_-s_+\, . 
  $$ 
The map $s_+$ on ${\mathcal H}=L^2({\mathbb R}^{n+1})$ extends to the map $s_+$ in ${\rm Aut}(\Gamma({\mathscr H}))$ discussed above; this recycling of notation should cause no confusion. 
   
   Let us denote elements of $\R^{n+1}$ by $(y,x)$ where $x\in {\mathbb R}^n$ and $y\in {\mathbb R}$. In an index notation, we write~$x=x^a$. We now  construct an explicit  realization of the metaplectic representation
   $$\mathscr{M}:\mbox{Sp}(2n+2,\mathbb{R})\lr\Aut\big({\mathbb P}(L^2({\mathbb R}^{n+1}))\big)$$
   in which the generator $m_{++}$ acts on functions of $(y,x)$ by multiplication by the variable $y^2$ and  the corresponding Heisenberg generator $s_+$ is multiplication by $y$. 
   From Example~\ref{Iamthexample}, the Heisenberg generator $s_-$ acts by differentiation with respect to $y$. 
    For that, we examine each of the parabolic-valued matrices appearing in the transformation rule in Equation~\nn{transf}. To begin with, we factor the third matrix appearing there as
$$
\left(
\begin{array}{c|c|c}
1 & 0 & 0 \\
\hline
-{\Upsilon}^a & \delta^a_b & 0 \\
\hline
-\frac{1}{2}\chi & \Upsilon_b & 1
\end{array}
\right)=
\left(
\begin{array}{c|c|c}
1 & 0 & 0 \\
\hline
0 & \delta^a_c & 0 \\
\hline
-\frac{1}{2}\chi & 0 & 1
\end{array}
\right)
\left(
\begin{array}{c|c|c}
1 & 0 & 0 \\
\hline
-{\Upsilon}^c & \delta^c_b & 0 \\
\hline
0 & {\Upsilon}_b & 1
\end{array}
\right).
$$
The two multiplicative symplectic factors on the right are not free symplectic matrices. Nevertheless, the symplectic shear $M_{++}$ in the first factor can be conjugated by the standard complex structure~$J$ to an upper triangular matrix. By Remark \ref{rmk:JFourier}, the operator $\widehat{J}$ associated to $J$ is the standard Fourier transform. Hence, for $[\Psi]\in {\mathbb P}(L^2({\mathbb R}^{n+1}))$, the metaplectic action associated to the first factor is: 

$$M_{++}=
\left(
\begin{array}{c|c|c}
1 & 0 & 0 \\
\hline
0 & \delta^a_b & 0 \\
\hline
-\frac{1}{2}\chi &0 & 1
\end{array}
\right)
\stackrel{\mathscr M}\longmapsto\left\{\Psi(y,x)\longmapsto e^{\frac i4\chi y^2}\Psi(y,x)\right\}\, .$$
As promised, the linearization  of $M_{++}$ is proportional to $m_{++}=y^2$.
We now
need to understand the metaplectic action of the matrix $M_+$ in the second factor above:
$$M_+=\left(
\begin{array}{c|c|c}
1 & 0 & 0 \\
\hline
-{\Upsilon}^a & \delta^a_b & 0 \\
\hline
0 & {\Upsilon}_b & 1
\end{array}
\right)\, .$$
For that, we consider a Lagrangian polarization $\R^{2n}=\R^n\oplus\R^n$ and denote by $\pi_1$ and $\pi_2$ the projections onto the two factors. In particular, $\pi_2({\Upsilon})=:\Upsilon_2^{\bar a}$ undergoes a phase shift under the metaplectic representation while $\pi_1({\Upsilon})=:\Upsilon^1_{\bar a}$  acts as a translation. Explicitly,  the image under   the first projection obeys:
$$M_+^1=\left(
\begin{array}{c|c|c|c}
1 & 0& {0} & 0\\
\hline
0 & \delta_{\bar a}^{\bar b} & 0 & 0 \\
\hline
-\frac{1}{2}\Upsilon^{\bar a}_1 & 0 & \delta_{\bar b}^{\bar a} & 0\\
\hline
0 & \frac{1}{2}\Upsilon^1_{\bar b}& 0  & 1
\end{array}
\right)
\stackrel{\mathscr M}\longmapsto\left\{ \Psi(y,x)\longmapsto \Psi\left(y,x+\frac{y}{2}\pi_1({\Upsilon})^\sharp\right)\right\}\,  ,$$
and the second projection is represented by:
$$M_+^2=
\left(
\begin{array}{c|c|c|c}
1 & 0& 0& 0\\
\hline
\frac{1}{2}\Upsilon_{\bar a}^2  & \delta_{\bar a}^{\bar b} & 0 & 0 \\
\hline
0 & 0 & \delta_{\bar b}^{\bar a} & 0\\
\hline
0 & 0&\frac{1}{2}\Upsilon_2^{\bar b}   & 1
\end{array}
\right)
\stackrel{\mathscr M}\longmapsto
\left\{ \Psi(y,x)\longmapsto e^{-i y(\pi_2({\Upsilon})^\sharp\cdot x)/2 }\Psi(y,x)\right\}\, .$$
Linearization of these formul\ae\  combined with the fact that $m_{+a}=s_+ s_a$,  gives results for the Heisenberg generators:
$$
\pi_1(s_a)=:s_{\bar a} = \frac{\partial }{\partial x^{\bar a}}
\, ,\quad
\pi_2(s_a)=:s^{\bar a} = x^{\bar a}\, .
$$
Noticing that
$$
M_+=M_+^1 (M_+^2)^2 M_+^1\, ,
$$
we have
$$
M_+\stackrel{\mathscr M}\longmapsto
\Big\{\Psi(y,x)\longmapsto e^{- i y(\pi_2({\Upsilon})^\sharp\cdot (x+y\pi_1({\Upsilon})^\sharp/2)) }\Psi\left(y,x+y\pi_1({\Upsilon})^\sharp\right)\Big\}\, .
$$
By  composition with the symplectic shear $M_{++}$, we obtain
$$
\left(
\begin{array}{c|c|c}
1 & 0 & 0 \\
\hline
-{\Upsilon}^a & \delta^a_b & 0 \\
\hline
-\frac{1}{2}\chi & \Upsilon_b & 1
\end{array}
\right)
\stackrel{\mathscr M}\longmapsto
\left\{\Psi(y,x)\longmapsto e^{- i y\left(\frac{\chi}{4}y+\pi_2({\Upsilon})^\sharp\cdot (x+y\pi_1({\Upsilon})^\sharp/2)\right) }\Psi\left(y,x+y\pi_1({\Upsilon})^\sharp\right)\right\}.$$

The central block of the second matrix in the transformation rule~\nn{transf} is
 ${\rm Sp}(2n)$-valued. Its metaplectic action is described in the previous section. Using Remark~\ref{rmk:JFourier}, we see that the first matrix in Equation~\nn{transf} 
 gives the action
 $$
 M_{+-}=\left(
\begin{array}{c|c|c}
\Omega & {0} & 0 \\
\hline
0&\delta^a_b
&0\\
\hline
0 &0 & \Omega^{-1}
\end{array}
\right)
\stackrel{\mathscr M}\longmapsto
\Big\{
\Psi(y,x)\mapsto \Psi(\Omega y,x)
\Big\}\, .
 $$

 Now, given a section $\Psi(z;y,x)$ of ${\mathscr H}(\FZ)$ corresponding to some choice of $\alpha\in  \bm \alpha$, and  where $z\in Z$ labels a point in base manifold, moving to a conformally related contact form $\Omega^2 \alpha$,
  we have the equivalence 
  $$
  \Psi(z;y,x)\sim 
  e^{- i \Omega y\left(\frac{\chi}{4}\Omega y+\pi_2({\Upsilon})^\sharp\cdot (x+\Omega y\pi_1({\Upsilon})^\sharp/2)\right) }\Psi\left(z;\Omega y,x+\Omega y\hh\hh\pi_1({\Upsilon})^\sharp\right)\, .
    $$
    In the above, both the function $\Omega\in C^\infty(Z)$ and covector $\Upsilon$ now depend on $z$.

  In order to conclude our contact quantization of $(Z,\xi)$, our next step is to introduce  the BRST charge. In Section \ref{sec:StrictQuant} we  constructed a BRST charge given by the connection form $-i\hbar d^{\hat A}$ for the Hilbert bundle associated to a strict contact manifold~$(Z,\alpha)$. The ambient contractor construction developed in Section~\ref{sec:Contract} allows us to employ this construction to make a connection  on the Hilbert contractor bundle $\mathscr{H}(\FZ)$.

\subsection{Contact Quantization} \label{contquant}
The quantization of a contact structure $(Z,\xi)$ can be achieved by performing an $\R^+$-equivariant strict quantization of a corresponding  ambient contact manifold $(M,A,x)$---see Definition~\ref{strictequi}---and then restricting the quantum connection form  to the symplectic cone~${\mathcal C}$ in a manner analogous to that used to obtain a contractor connection 
from an ambient one.

For that, first let $(Z,\xi)$ be a $(2n+1)$-dimensional contact manifold and $(M,A,x)$ an ambient contact manifold  for $(Z,\xi)$, and recall that  by Theorem \ref{thm:equivalence} the contractor bundle $\FZ\lr(Z,\xi)$ is isomorphic to the restriction of the ambient contractor bundle $\FZ\cong\Xi|_{\Cone}/\!\sim$ to the symplectization of $(Z,\xi)$, which corresponds to the symplectic cone $\Cone$. In addition, the ambient construction of Section \ref{sec:Contract}  produces by restriction,  a  contractor connection $(\FZ,\nabla^{\FZ})$ from an ambient contractor connection $(\Xi,\nabla^{\Xi})$.
In Section~\ref{ssec:ambientconnection} we explained how to obtain an ambient contractor connection from an ambient projective structure suitably compatible with the strict contact structure $(M,A)$.

\noindent An ambient tractor connection can also be extracted from an $\R^+$-equivariant quantum connection~$\nabla$ on $(M,A,x)$ as follows.
First, note that the quantum connection form $\nabla$ determines a connection $\nabla^\Xi$ on the distribution $\Xi$ of $(M,A)$ via Equation~\nn{extract}. At grade $-1$, Equation~\nn{eqgr} imposes
\begin{equation*}
s(\nabla^\Xi_X U)=s(U)\, ,
\end{equation*}
where $U$ has homogeneity zero, so that $[X,U]=0$. Thus, using Equation~\nn{torsionlike}, we have
$$
\hat \kappa(U)=
s\big(\nabla^\Xi_{U}\bar X+{\mathcal L}_{\bar X} U +A(X) \nabla_R^\Xi U
\big)
=
s(\nabla^\Xi_{\bar X} U
)
+A(X)\,  s\big(\nabla_R^\Xi U
-
[R,U]
\big)
\, ,
$$
where $\bar X = X- R A(X)$.
Hence, for any $U'$---regardless of its homogeneity, as there exists a local basis of $TM$ spanned by homogeneous vector fields---we have
\begin{equation}\label{nablaX}
\nabla^\Xi_{U'} \bar X =\overline{ U'}
+A(X)\big( \nabla^\Xi_R U'-\overline{[R,U']}\big)\, .
\end{equation}
Here a bar is used to denote the projection of a vector to the contact distribution~$\Xi$.
Moreover, flatness of $\nabla$ imposes that
$\{\nabla^{(0)},\hat \kappa(\cdot)\}=0$. Thus, it follows that 
$$
\nabla^\Xi \Phi=0\, ,
$$
where $\Phi=dA$. This establishes that $\nabla^\Xi$ obeys the first two requirements of Definition~\ref{def:ambientconnection}\,. The next  requirement follows using the same arguments employed in Section~\ref{proofy}, 
while the subsequent one is a consequence of the definition given in Equation~\nn{extract}.
For the last condition, let $U$ be a homogeneity zero vector and $V\in \Gamma(\Xi)$ have homogeneity $-1$. 
Then the grade zero part of Equation~\nn{eqgr} reads
$$
[\nabla^{(0)}_X,\nabla^{(0)}_U]=0\, .
$$
Thus, using the Jacobi identity and Equation~\nn{extract} we have
\begin{multline*}
0=\big[[\nabla_X^{(0)},\nabla_U^{(0)}],\hat \kappa(V)\big]
=-\big[
[\nabla_U^{(0)},\hat \kappa(V)], \nabla^{(0)}_X
\big]
-\big[
[\hat \kappa(V),\nabla_X^{(0)}], \nabla^{(0)}_U
\big]
\\
=\big[\nabla_X^{(0)},\hat \kappa(\nabla^\Xi_U V)\big]
+\big[\hat \kappa(\nabla_X^\Xi V),\hat \nabla_U^{(0)}\big]
=\hat \kappa(\nabla_X^\Xi \nabla^\Xi_U V + \nabla^\Xi_U  \nabla^\Xi_XV)\, .
\end{multline*}
But  $\nabla^\Xi$ obeys the torsion condition~\nn{torsionlike}, so it follows that
$$
\nabla_{\bar X}^\Xi V =  \nabla^\Xi_V \bar X +\overline{[\bar X, V]}=
{\mathcal O}\big(A(X)\big)\, .
$$
Here we used Equation~\nn{nablaX} and the homogeneity condition $[X,V]=-V$.  Hence, imposing the tangential  condition $\Lie_U (A(X))={\mathcal O}\big(A(X)\big)$, 
we have established that
$$
\nabla_X^\Xi \nabla^\Xi_U V 
={\mathcal O}\big(A(X)\big)\, .
$$
Finally, note that if $V\in \Gamma(\Xi)$, then 
$$
\Lie_X V = 
\nabla^\Xi_{\bar X} V - \nabla^\Xi_V \bar X
+ R A([X,V])
+\Lie_{R A(X)} V\, .
$$
Remembering that ${\mathcal L}_X R=-2R$, we can project both sides of the above equation along the distribution which gives
$$
{\mathcal L}_X V
=\nabla^\Xi_{\bar X} V -V
+{\mathcal O}\big(A(X)\big)\, .
$$

This shows that $\nabla^{(0)}$ determines an ambient contractor connection.
To summarize, we have just established the following result.
\begin{proposition} Given an ambient contact manifold, $(M,A,x)$ for a contact manifold $(Z,\xi)$, then 
an $\R^+$-equivariant  quantum connection form $\nabla$ determines a contractor connection $\nabla^\FZ$.
\end{proposition}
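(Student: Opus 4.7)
The strategy is to exhibit a connection $\nabla^\Xi$ on the ambient distribution $\Xi = \ker A$ that satisfies the defining conditions of an ambient contractor connection (Definition~\ref{def:ambientconnection}); then the descent to the symplectization $\mathcal C$, together with the isomorphism $\FZ \cong \Xi|_{\mathcal C}/\!\sim$ of Theorem~\ref{thm:equivalence}, produces the contractor connection $\nabla^\FZ$ on $\FZ \to Z$. The input is that the quantum connection $\nabla$ on the Hilbert bundle already encodes a connection on $\Xi$ through its Heisenberg map: expanding $\nabla$ as a Laurent series in $\surd\hbar$ and reading off the flatness condition $\nabla^2 = 0$, the grade $-2$ part forces $\nabla^{(-1)} = s(\cdot)$, and the grade $-1$ part yields Equation~\eqref{employ-me}, i.e.\ $\{\nabla^{(0)}, s(\cdot)\} = 0$. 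Since $s$ is injective by maximal rank of $\Phi = dA$, the relation $[\nabla^{(0)}, s(u)] = s(\nabla^\Xi u)$ uniquely defines $\nabla^\Xi u$ for $u \in \Gamma(\Xi)$, in direct analogy with Equation~\eqref{extract}.

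I would then verify the axioms of Definition~\ref{def:ambientconnection} in turn. Condition~(ii), that $\nabla^\Xi \Phi$ vanishes (modulo $A(X)$), falls out by applying $\nabla^{(0)}$ to the Heisenberg bracket $[s(u), s(v)] = -i\Phi(u,v)\,\mathrm{Id}$ and using the anticommutation $\{\nabla^{(0)}, s(\cdot)\} = 0$. Condition~(b) is immediate since $\nabla^{(0)}$ acting by commutator on $s(\Gamma(\Xi))$ lands back in $s(\Gamma(\Xi))$. Condition~(a), the bookkeeping on the $A(X)$-ambiguity in the representative of an ambient contractor, follows verbatim from the argument used for Theorem~\ref{existenceambientconnection} in Section~\ref{proofy}. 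For condition~(i), I would take the grade $-1$ piece of the equivariance equation~\eqref{eqgr}: for any homogeneity-zero $U$ with $[X,U]=0$ it gives $s(\nabla^\Xi_X U) = s(U)$; combining with the torsion-like identity~\eqref{torsionlike} and the decomposition $X = \bar X + R\,A(X)$ yields $\nabla^\Xi_{U'} \bar X = \overline{U'} + \mathcal{O}(A(X))$, which is exactly condition~(i) after absorbing the $R \otimes A$ correction.

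The substantive step is condition~(c), the homogeneity $\mathcal L_X(\nabla^\Xi_U V) = -\nabla^\Xi_U V + \mathcal{O}(A(X))$. I would use the grade zero part of~\eqref{eqgr}, namely $[\nabla^{(0)}_X, \nabla^{(0)}_U] = 0$, and apply the Jacobi identity to $[[\nabla^{(0)}_X, \nabla^{(0)}_U], s(V)]$ for $V \in \Gamma(\Xi)$ of homogeneity $-1$; translating each nested commutator into $s$ of a nested $\nabla^\Xi$ produces $s(\nabla^\Xi_X \nabla^\Xi_U V + \nabla^\Xi_U \nabla^\Xi_X V) = 0$, i.e.\ the two terms vanish after applying $s$. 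Injectivity of $s$ together with condition~(i) applied to the homogeneity $-1$ vector $V$ gives $\nabla^\Xi_{\bar X} V = \mathcal{O}(A(X))$, and finally the identity $\mathcal L_X V = \nabla^\Xi_{\bar X} V - \nabla^\Xi_V \bar X + \mathcal{O}(A(X))$ (using the torsion-type property and $\mathcal L_X V = -V$ for homogeneity $-1$) delivers the desired homogeneity of $\nabla^\Xi_U V$. The main obstacle throughout is the careful $\mathcal{O}(A(X))$ accounting: one must distinguish $X$ from $\bar X$, track the Reeb homogeneity $\mathcal L_X R = -2R$ (Lemma~\ref{lem:Xprime}), and verify that the tangentiality hypothesis $\mathcal L_U(A(X)) = \mathcal{O}(A(X))$ on the test vectors is preserved at each step of the reduction.
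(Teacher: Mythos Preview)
Your proposal is correct and follows essentially the same route as the paper: extract $\nabla^\Xi$ from the grade $-1$ flatness relation via the Heisenberg map, then verify the axioms of Definition~\ref{def:ambientconnection} using the graded pieces of the equivariance condition~\eqref{eqgr}, the torsion-type identity~\eqref{torsionlike}, and the Jacobi identity for $[[\nabla^{(0)}_X,\nabla^{(0)}_U],s(V)]$, with the $\mathcal O(A(X))$ bookkeeping handled exactly as in Section~\ref{proofy}. The only cosmetic difference is that the paper writes the Heisenberg map as $\hat\kappa$ in this passage rather than $s$, and orders the verification of conditions (i)--(ii) before (a)--(c); the logical content is identical.
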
 

The contractor bundle $\FZ$ corresponding to the ambient contact manifold $(M,A,x)$, together with a Hilbert space ${\mathcal H}$, determines a
contractor Hilbert bundle as an associated vector bundle as demanded in Definition~\ref{def:Hilbertcontract}\,.

\begin{remark}
	Ambient contractors are sections of the ambient distribution $\Xi$ subject to the homogeneity condition in Equation~\nn{homog} and the equivalence~\nn{supp}. The ambient connection $\nabla^{(0)}$ in the direction of $X$ plays the {\it r\^ole} of the Lie derivative ${\mathcal L}_X$ on sections of the Hilbert bundle~${\mathscr H}$.
\hfill$\blacksquare$
\end{remark}

Alternatively, and equivalently,
in addition to specifying a contractor connection,  an $\R^+$-equivariant quantum connection form $\nabla$ determines the 
contractor Hilbert bundle from the ambient Hilbert bundle.
This result is an analog of Theorem~\ref{thm:equivalence}
relating the ambient distribution restricted to the symplectization of $Z$ to the standard contractor bundle~$\FZ$.

\begin{theorem}\label{sympspin}
Let $\nabla$ be an $\R^+$-equivariant quantum connection form acting on sections of a Hilbert bundle  $\mathscr{H}$ over $M$. Then the section space $\Gamma(\mathscr{H}(\FZ))$ is isomorphic to $ \ker \nabla^{(0)}_X|_{\mathcal C}\subset \Gamma(\mathscr{H})$.
\end{theorem}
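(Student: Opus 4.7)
My plan is to mirror Theorem~\ref{thm:equivalence} one level up, at the Hilbert-bundle rather than the contractor-bundle level. Recall that $\FZ \cong \Xi|_{\mathcal C}/\!\sim$ identifies a standard contractor with a section of $\Xi$ of homogeneity $-1$, modulo the infinitesimal shift of the $\R^+$-action and modulo terms proportional to $A(X)$. At the Hilbert level, as is made explicit in the Remark immediately preceding the statement, $\nabla^{(0)}_X$ plays the role that $\mathcal L_X$ plays on $\Gamma(\Xi)$. The claim is then that $\ker \nabla^{(0)}_X|_{\mathcal C}$ is the correct homogeneous slice whose quotient, under the $\R^+$-action generated by $\nabla^{(0)}_X$ itself, realizes $\Gamma(\mathscr H(\FZ))$.

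First I would pin down $\nabla^{(0)}_X$ as the infinitesimal generator on $\Gamma(\mathscr H)$ of the projective $\R^+$-action used to construct $\mathscr H(\FZ) = \mathscr P \times_P \P(\mathcal H)$. The key inputs are all present in the preceding material. The grade $-1$ flatness identity $\{\nabla^{(0)}, s(\cdot)\}=0$ from Equation~\nn{employ-me}, together with the grade $-1$ part of the equivariance hypothesis~\nn{eqgr}, forces $[\nabla^{(0)}_X, s(U)] = s(\nabla^{\Xi}_X U)$, and in particular equals $s(U)$ for $\mathcal L_X$-invariant $U \in \Gamma(\Xi)$. Using the Heisenberg commutation $s(U)\circ s(V)-s(V)\circ s(U)=J(U,V)\,\Id$ and applying the Stone--von Neumann theorem fiberwise, conjugation by $\exp(t\nabla^{(0)}_X)$ is then the unique, up to phase, unitary intertwiner of Heisenberg representations differing by the $\R^+$-flow on $\Xi|_{\mathcal C}$. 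Comparing against the explicit metaplectic formula for the $\R^+$-factor $M_{+-}$ of the parabolic $P$ spelled out in Section~\ref{sec:metaplectic}, namely $\Psi(y,x)\mapsto \Psi(\Omega y,x)$, this identifies $\nabla^{(0)}_X$ as the generator of that very action on $\Gamma(\mathscr H)$ modulo a projective ambiguity that vanishes after passage to $\P(\mathcal H)$.

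With this identification in hand, the map $\ker \nabla^{(0)}_X|_{\mathcal C} \longrightarrow \Gamma(\mathscr H(\FZ))$ is defined by restriction to $\mathcal C$ followed by descent: kernel elements are invariant along $\R^+$-orbits with respect to the projective action just identified, which is exactly the equivalence used to form $\mathscr H(\FZ)$ as an associated bundle. Well-definedness modulo $A(X)$-terms is automatic after restriction to $\mathcal C = \{A(X)=0\}$. Injectivity uses uniqueness of parallel transport along the nowhere-vanishing $X$; for surjectivity, I would extend any section of $\mathscr H(\FZ)$ off a transverse slice $Z^\alpha \subset \mathcal C$ by solving the first-order ODE $\nabla^{(0)}_X \Psi = 0$ along the $\R^+$-fibers, and then check invertibility using the explicit trivializations given in Subsection~\ref{HBC}.

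The main obstacle I anticipate is the Stone--von Neumann step: one must carefully track the inherent phase ambiguity of the metaplectic lift and show that it is precisely absorbed upon passing to $\P(\mathcal H)$, so that $\nabla^{(0)}_X$ matches, as a projective generator, the $\R^+$-factor of $P$. Once this projective identification is secured, the remaining pieces reduce to standard flow-constancy arguments combined with the base identification already provided by Theorem~\ref{thm:equivalence}.
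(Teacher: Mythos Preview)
Your proposal is correct and follows essentially the same route as the paper. Both arguments hinge on identifying $\nabla^{(0)}_X$ with the infinitesimal $\R^+$-factor of the parabolic $P$-action on the Hilbert fibers via the Heisenberg/metaplectic correspondence; the paper invokes the intertwining relation~\nn{fun2meta} directly to conclude that kernel elements transform as sections of the associated bundle, while you derive that same intertwining via Stone--von Neumann and then spell out injectivity/surjectivity through parallel transport along $X$. Your version is more explicit about the actual isomorphism (the paper leaves the bijection implicit once the $P$-equivariance is established), and your anticipated obstacle---the projective phase ambiguity---is exactly what the paper sidesteps by working in $\P(\mathcal H)$ from the outset.
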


\begin{proof}[Proof of Theorem~\ref{sympspin}]
In Theorem~\ref{thm:equivalence} we established an isomorphism between 
contractors and sections of the ambient distribution $\Xi_{\mathcal C}/\!\sim$ along the  projectivized cone. These obey the homogeneity condition ${\mathcal L}_X U = -U + {\mathcal O}\big(A(X)\big)$. As shown above, this homogeneity condition can  equivalently be written in terms of the ambient connection $\nabla^\Xi$ as
$$
\nabla^\Xi_X U = {\mathcal O}\big(A(X)\big)\, .
$$
Now consider $\psi, \chi \in \Gamma\big({\mathcal H}(\FZ)\big)$ that obey
$$
\nabla_X^{(0)} \psi = {\mathcal O}\big(A(X)\big) = \nabla_X^{(0)} \chi\, .
$$
Then 
$${\mathcal L}_X \langle\chi| s(U)|\psi \rangle 
=\nabla_X^{(0)} \langle\chi| s(U)|\psi \rangle 
\in \Gamma(\Xi^*)= {\mathcal O}\big(A(X)\big)\, .$$
We already proved that ambient vectors $U$ in the kernel of $\nabla^\Xi_X$ along the cone define standard contractors which are sections of an associated vector bundle transforming under the fundamental representation of the parabolic $P\subset \Sp(2n+2)$. It thus suffices to establish that the same holds for $\psi$ and $\chi$, which is true because  Equation~\nn{fun2meta}  shows that a fundamental $P$-action  on  a vector $U$ induces a projective metaplectic transformation on $\psi$ and $\chi$.
\end{proof}

We have now gathered the ingredients necessary to establish Theorem~\ref{biggy}, relating $\R^+$-equivariant quantizations of ambient contact manifolds to the contact quantization of a contact manifold as laid out in Definition~\nn{QMcontdef}.

\begin{proof}
[Proof of Theorem~\ref{biggy}]
First, the symplectization of $(Z,\xi)$ 
is obtained 
from $(M,A,x)$ by restriction to the hypersurface given by the zero locus of $A(X)$, so in the following we identify these spaces. We first need to check that the  connection $\nabla^\hbar$  constructed for a strict quantization of $(M,A)$ pulls back to a symplectic quantization of $(Z,\xi)$. This follows directly from the definition of an ambient contact manifold, which in particular implies 
that $A\big|_{A(X)=0}=\lambda_\xi$, so indeed
$$\lim_{\hbar \to 0} i\hbar {\nabla}^\hbar \big|_{A(X)=0,\hbar = 0} =
\lambda_\xi\, .$$
For this to give a contact quantization of $(Z,\xi)$, we still need to verify the 
 $\R^+$-equivariance Condition~\nn{RPE}. This is a direct consequence  of the Definition~\ref{strictequi} for a strict quantization.
\end{proof}

In  Section~\ref{S3} we spell out this quantization procedure in detail for the example of the standard contact 3-sphere. The final task in this Section \ref{sec:Quant} is explaining how to recover a strict quantization from the quantization of a contact structure. This will rely on contractor connections that admit parallel contractors.

\subsection{Parallel Contractors and Reeb Quantization}\label{parallel} 

Suppose that $\bm \nabla$ is a contact quantum connection determining a contractor connection~$\nabla^\FZ$, and moreover, $\nabla^\FZ$ admits a parallel standard contractor $I\in \Gamma(\FZ)$, 
$$
\nabla^\FZ I = 0\, .
$$
Then, via Lemma~\ref{parascale}, this determines a scale $\bm \sigma=X_A I^A$. Let us suppose that $\bm \sigma$ is a true scale and therefore picks out a  distinguished contact form $\alpha=\bm\alpha/\bm\sigma\in \bm \alpha$. The question we want to address is the relationship between the contact quantization given by $\bm \nabla$ and the quantization of the Reeb dynamics of $\alpha$. In other words, we now explain how to reduce contact quantization to Reeb quantization.

For that, we first 
feed the parallel tractor~$I$ to  the Heisenberg map $s$, which gives a map 
$$s^-:=s(I):\Gamma\big({\mathscr H}(\FZ)\big)\to\Gamma\big({\mathscr H}(\FZ)\big)\, .$$ 
We would like to impose the constraint
$$
s^- \Psi = 0
$$
on sections $\Psi\in
{\mathscr H}(\FZ)$ to obtain a new Hilbert bundle on which the Reeb dynamics of the contact 1-form $\alpha=\bm\alpha/\bm \sigma$ are quantized. The scale $\bm \sigma$  determines a distinguished section of the symplectization~${\mathcal C}$ with contact form $\alpha$.  Indeed, we now have a  hypersurface $Z_I$ in ${\mathcal C}$ that is isomorphic to $Z$ and 
equipped with a contact 1-form $\alpha$.

Let $U$ be any vector tangent to $Z_I$. Given a contact quantization~$(\SH(\FZ),\bm\nabla^\hbar)$ of $\xi$ and underlying symplectic quantization $\nabla^\hbar$, 
 we require that the maps~$s^-$
 and $\nabla^\hbar_U$, restricted to the hypersurface,  commute
 \begin{equation}\label{commuter}
[\nabla^\hbar_U,s^-]=0\, .
\end{equation}
We then define a new Hilbert bundle $\SH_I$ over $Z_I$ as the subbundle of the restriction 
$\SH(\FZ)|_{Z_I}$ given by  imposing that the operator $s^-$ annihilates sections. The above-displayed condition ensures that we also get a quantum connection 
on $\SH_I$ this way.
Moreover, acting with $\big(\frac12\frac{d^2}{d\hbar^2}\circ \hbar^2 \cdot \big)\big|_{\hbar=0}$ extracts the grade zero piece of 
Condition~\nn{commuter} and this implies the parallel condition $
\nabla^\FZ I =0
$ introduced above.

\begin{remark}
Employing Equations~\nn{nablaV} and~\nn{scaletractor},  the parallel contractor condition 
implies that the tensors $P$ and $Q$ vanish  in the scale determined by $\bm \sigma$. This is a direct analog
of a result for conformal geometries which states that a conformal class of metrics admits an Einstein metric precisely when the corresponding tractor bundle admits a parallel scale tractor~\cite{BEG}. This gives a pleasing geometric analogy between general relativity  and quantum mechanics.\hfill$\blacksquare$
\end{remark}

This Section \ref{sec:Quant} is now concluded by working through the above construction explicitly in our running example.

\begin{example}[Continued from Example~\ref{getmaxy}]\label{maxy} Let $(Z,\alpha)$ be a strict contact structure and consider
its symplectization ${\mathcal C}\cong \R_\mu\times Z$; 
the coordinate along rays is~$\mu\in\R$, and the Liouville  form 
is
$$
\lambda^\xi = e^{2\mu}\alpha\, .
$$
This Liouville form is obtained by pulling back the 1-form $A$ in Equation~\nn{hereisA} to the hypersurface~${\mathcal C}$ where~$A(X)=2\theta$ vanishes. Also, pulled-back to~${\mathcal C}$, the frames $E^A$ in Equation~\nn{whoframedrogerrabbit} give an isomorphism between the tractor bundle $\FZ$ over $Z$ and tangent bundle $T{\mathcal C}$ of the symplectization  ${\mathcal C}$. 

\noindent Let us denote $\sigma=e^\mu$ and note that $d \sigma = \iota_I d \lambda^\xi$ where $I=-\frac1{2\sigma}\rho$. Thus $I^A=E^A(I)=(\sigma,0,0)\,$ and
$$d I^A + \Omega^A{}_B I^B = 
(0,-P^a, e^{-\mu}Q)\, .
$$
In conclusion, $I^A$ is parallel when $P^a=0=Q$. In turn,  the system of equations~\nn{flatter} imposes $d^\omega e^a = 0$. Expanding the Heisenberg map  in terms of the frames $E^A$ and then pulling-back to the hypersurface~$Z_I$ given by  $\sigma=1$ $($so that $\mu=0)$, we find
$$
E^A \hat S_A= e^a \hat S_a + 2 \alpha  \hat S^-\, ,
$$
where $\hat S^-=s(I)=E^A(I)\hat S_A = I^A \hat S_A$. Hence, the connection form in Equation~\nn{nab0} pulled-back first to the symplectization, then to the hypersurface $Z_I$, 
and acting on the kernel of $s(I)$, becomes
$$
\nabla^0|_{Z_I,\ker s(I)}= \frac{\alpha}{i\hbar} + \frac{e^a \hat S_a}{i\hbar} + d + \frac{1}{2i\hbar} \omega^{ab} \hat S_a \hat S_b\, .
$$
Note that there is no dependence on the pair of operators $\hat S_{\pm}$ in the above display. The condition~\nn{commuter} implies the same property for the higher order terms 
$\hat \Omega^1=\nabla - \nabla^0$ acting on the kernel of~$s(I)$, since the corresponding totally symmetric contractors $\Omega^{A_1\ldots A_k}$ defined by Equation~\nn{notme} must then obey
$$
I_A\Omega^{A A_2\ldots A_k}=0\, .
$$
Finally, flatness of the restriction $\nabla^\hbar:=\nabla|_{Z_I,\ker s(I)}$ is guaranteed by that of $\nabla$, and thus $\nabla^\hbar$  defines a strict quantization of~$(Z,\alpha)$, as required.
\hfill$\blacksquare$\end{example}

\section{Dynamical Quantization of the contact 3-sphere $(S^3,\xi_\st)$}\label{S3}
In this section,  we apply our methods to the quantization of the standard contact 3-sphere $(S^3,\xi_\st)$. A useful model for the three-sphere is a pair of solid tori identified along their common (Clifford) torus boundary. In coordinates, let $\theta_1,\theta_2\in [0,2\pi)$ be periodic coordinates for this torus $T^2$. Then, introducing $\psi\in [0,\frac \pi2]$, the standard embedding of the three sphere   in ${\mathbb R}^4$ is given by
$$
S^3\ni (\theta_1,\theta_2,\psi)\hookrightarrow (\cos\psi \cos\theta_1,\cos\psi \sin\theta_1, \sin\psi \cos\theta_2 ,\sin \psi \sin\theta_2)\in {\mathbb R}^4\, .
$$
The Clifford torus is at $\psi=\frac \pi4$. A contact form $\alpha$ for the unique tight contact structure $\xi_\st$ on $S^3$ is
$$
\lambda_i:= 2(\cos^2\psi d\theta_1 + \sin^2 \psi d\theta_2)\, ,
$$
and the corresponding Reeb vector field is $\rho=\frac{\partial}{\partial \theta_1}+\frac{\partial}{\partial \theta_2}\,$. Note that all Reeb orbits are periodic.

The three-sphere $S^3$ is parallelizable and, moreover, the codistribution is globally spanned by the pair of coframes $e_1,e_2$ given by
$$ \lambda_j:=e^1 = 2\cos(\theta_1 + \theta_2)d\psi + \sin2\psi\sin(\theta_1 + \theta_2)\left(d\theta_1 - d\theta_2 \right) ,
$$
$$
 \lambda_k:=e^2 = 2\sin(\theta_1 + \theta_2)\hh d\psi - \sin2\psi\cos(\theta_1 + \theta_2)\left(d\theta_1 - d\theta_2 \right),$$  
where $
d\alpha= e^1\wedge e^2$. Indeed, the (quaternionic) triplet of 1-forms
$
(\lambda_i,\lambda_j,\lambda_k)
$
obey
\begin{equation}
\label{foldback}
d\lambda_i = \lambda_j\wedge\lambda_k\, ,\qquad
d\lambda_j = \lambda_k\wedge\lambda_i\, ,\qquad
d\lambda_k = \lambda_i\wedge\lambda_j\, .
\end{equation}
Hence
$$
d\begin{pmatrix}e^1 \\ e^2\end{pmatrix}
+\begin{pmatrix}0& \alpha\\
-\alpha & 0 
\end{pmatrix}\wedge\begin{pmatrix}e^1 \\ e^2\end{pmatrix}=0\, .
$$
Thus, having solved $de^a + \omega^a{}_b \wedge e^b = 0$ with $\omega$ determined by the matrix displayed above,  we may search for a quantum connection form. Let ${\mathcal H}=L^2({\mathbb R})$ with Heisenberg algebra
$$
[\hat s_1,\hat s_2]=-i\hbar \, ,
$$
Note, that for this example, in order to connect to former results, we prefer to use standard physics conventions for the dependence on $\hbar$.

Now consider the quantum connection form
$$
\nabla = \frac\alpha{i\hbar}
+ \frac{e^1\hat s_1 + e^2 \hat s_2}
{i\hbar}
+ \nabla_0
+\hat a\, .
$$
Here $\nabla_0 = d - \frac\alpha{2i\hbar}  (\hat s_1^2+\hat s_2^2)$,
and we must compute $\hat a$ by requiring $\nabla^2=0$. For that, note that~$(\lambda_i,\lambda_j,\lambda_k)$ are a basis for $T^*S^3$ so we may expand
$$
\nabla = d + \frac1{i\hbar}\sum_{m\in\{i,j,k\}} \lambda_m \hat L_{m}\, ,
$$
where the operators $\hat L\in \operatorname{End}L^2({\mathbb R})$ are hermitean, and by virtue of Equation~\nn{foldback} must obey
$$
{i\hbar} \hat L_i
+[\hat L_j,\hat L_k]=0=
{i\hbar} \hat L_j
+[\hat L_k,\hat L_i]
={i\hbar} \hat L_k
+[\hat L_i,\hat L_j]\, .$$
In other words, we must find a representation of $\mathfrak{su}(2)$ on $L^2({\mathbb R})$ such that
$$
\hat L_i=1-\frac12 (\hat s_1^2+\hat s_2^2)+ \cdots\, ,\quad
\hat L_j=\hat s_1+\cdots
\, ,\quad
\hat L_k =\hat s_2 +\cdots\, .
$$
Here $\cdots$ denotes higher order terms in the grading, where $\hat s^1$, $\hat s^2$ carry grade one and $\hbar$ is grade two. To find this representation, we make the ansatz $\hat L_i=1-\frac12 (\hat s_1^2+\hat s_2^2)$ without further correction and then solve for the higher order terms in $\hat L_j$ and $\hat L_k$.

Before proceeding, it is interesting to note that the above problem amounts to starting with the Heisenberg algebra $[\hat s_1,\hat s_2]=i\hbar$, and then searching for corrections to the generators $\{\hat s_1, \hat s_2, 1\}$ that lift this to the $\mathfrak{su}(2)$ Lie algebra. Also, one can assure oneself of the likely existence of a solution by noting that given the Poisson bracket $\{q,p\}=1$, the three Hamiltonians $$1-\frac12(p^2+q^2)\, ,\quad q\, \sqrt{1-\frac14(p^2+q^2)}\, ,\quad p\, \sqrt{1-\frac14(p^2+q^2)}\, ,
$$ obey the $\mathfrak{su}(2)$ Lie algebra with respect to the bracket $\{\cdot\hh,\hh\cdot\}$.

Returning to the quantum solution, it is useful to introduce operators
$$
a:=(\hat s_2+i\hat s_1)/\surd 2
,\qquad a^\dagger:=(\hat s_2-i\hat s_1)/\surd 2$$ subject to $[a,a^\dagger]=\hbar$, as well as
$$\hat E:=(\hat L_k+i\hat L_j)/\surd 2,\qquad \hat E^\dagger:=(\hat L_k-i\hat L_j)/\surd 2\, .$$ Then, we must now solve
$$
[\hat L_i,\hat E]=\hbar \hat E\, ,\quad
[\hat L_i,\hat E^\dagger]= -\hbar \hat E^\dagger\, ,\quad
[\hat E,\hat E^\dagger]=\hbar \hat L_i\, ,
$$
where
$$
\hat L_i = 1 - \hat N - \frac\hbar 2\, ,
$$
with $\hat N:=a^\dagger a$. Inspired by the classical solution displayed above, we posit
$$\hat E= \sqrt{1+\frac {\hat N+\hbar}2} \, a\, ,
\quad
\hat E^\dagger= a^\dagger \, \sqrt{1+\frac {\hat N+\hbar}2}\, .$$
Indeed, the above operators obey the required $\mathfrak{su}(2)$ algebra. Even better, they are an example of a construction found in 1940 by Holstein and Primakoff. 

At this stage, we have determined a rather interesting one-parameter family of quantum dynamical systems on the three-sphere with its standard  Reeb dynamics. These are given by the connections
\begin{equation}\label{nablaS3strict}
\nabla^\hbar= d + \frac{\lambda_i }{i\hbar} \Big(1-\hat N-\frac\hbar2\Big)
+\frac{\lambda}{i\hbar} a^\dagger \, \sqrt{1-\frac {\hat N+\hbar}2}
+\frac{\bar\lambda}{i\hbar} \sqrt{1-\frac {\hat N+\hbar}2} \, a\, ,
\end{equation}
where $\lambda:=(\lambda_k+i\lambda_j)/\surd 2$.
In particular, while the limit $\hbar\to 0$ connects back to the underlying classical system, other values of $\hbar$ are interesting too.
Indeed, a realization of a phenomenon originally observed by Holstein and Primakoff occurs in this quantum mechanical system whenever
$$
1-\frac\hbar2 \in {\mathbb Z}_{\leq 0}\, . 
$$
Namely, while the operator $E$ always annihilates the Fock vacuum $|0\rangle$, because $a|0\rangle=0$, the number operator eigenstate $|\frac\hbar 2-1\rangle$ is annihilated by $E^\dagger$. Hence in that case, the Hilbert space truncates to the spin $\frac{\hbar-2} 4$ unitary irreducible representation of $\mathfrak{su}(2)$. This concludes the first part of the discussion regarding the dynamical quantization of the Reeb dynamics of the contact form $\lambda_i$ for $(S^3,\xi_\st)$.

\medskip

Let us now turn to the ambient quantization of $(S^3,\xi_\st)$. For that, let us consider Cartesian coordinates $(x,y,z,w,u)\in\R^5$ and an  ambient space
$(M,A,\rm x)$ given by $(\R^5,A)$, where the contact form is $A=\lambda_I+du$ with $\lambda_I:=2(xdy-ydx+zdw-wdz)$. Thus the Levi form reads
$$\Phi:= dA = 4\big(dx\wedge dy + dz\wedge dw\big).$$
The associated Reeb vector is $R=\frac\partial{\partial u}$,
and the $\R^+$-action reads
$$(x,y,z,w,u)\stackrel{\rm x}\longmapsto (\lambda x, \lambda y, \lambda z,\lambda w,\lambda^2 z).$$
It is generated by the vector field
$$
X=x\frac{\partial}{\partial x}+
y\frac{\partial}{\partial y}+
z\frac{\partial}{\partial z}+
w\frac{\partial}{\partial w}+
2u\frac{\partial}{\partial u},
$$
which satisfies ${\mathcal L}_X A = 2A$. Thus $X^\flat = \Phi(X,\cdot) = 2\lambda_I\, .$ Since $A(X)=2u$, the symplectic cone is the  $\{u=0\}\sse(\R^5,A)$ hyperplane. This is the space ${\mathbb R}^4$, with coordinates $(x,y,z,w)$, equipped with its standard $\R^+$-action and Liouville form. In other words, the symplectization of~$S^3$ with its standard contact structure and the unique Stein filling of $S^3$ added to the concave end. Indeed, the pullback of~$\lambda_I$ to the   embedded 3-sphere at~$r^2:=x^2+y^2+z^2+w^2=1$
is exactly the contact 1-form $\lambda_i$ defined above.

Next, we introduce  symplectic frames for the ambient codistribution $\Xi^*$. For that, first note that 
$$
\lambda_I:=2(xdy-ydx+zdw-wdz)\, ,\quad\!\!
\lambda_J:= 2(x dz - z dx -ydw+wdy)\, ,\quad\!\!
\lambda_K:= 2(x dw - w dx +ydz-zdy)\, ,
$$
obey the relations
$$
d\lambda_I = \frac{\lambda_J\wedge \lambda_K}{r^2}+2 Y\wedge \lambda_I\, ,\qquad
d\lambda_J = \frac{\lambda_K\wedge \lambda_I}{r^2}+2 Y\wedge \lambda_J\, ,\qquad
d\lambda_K = \frac{\lambda_I\wedge \lambda_J}{r^2}+2 Y\wedge \lambda_K\, .
$$
Hence $\Xi^*=\operatorname{span}\{E^+,E^1,E^2,E^-
\}$ where $E^+=X^\flat$, $E^1=\lambda_J/r$, $E^2=\lambda_K/r$, $E^-=Y$, and this is a symplectic framing because
$$
\Phi=E^-\wedge E^+ + E^1\wedge E^2\, .
$$
This allows us to introduce the connection $\nabla^\Xi$ subject to
\begin{equation}\label{Ideterminenabla}\nabla^\Xi E^A=(dE^A + \Omega^A{}_B E^B):=
d\begin{pmatrix}
E^+\\[1mm]E^1\\[1mm]E^2\\[1mm]E^-
\end{pmatrix}
+
\begin{pmatrix}
-Y&\frac{\lambda_K}r&
-\frac{\lambda_J}r&
2\lambda_I\\[1mm]
0&0&\frac{\lambda_I}{r^2}&\frac{\lambda_J}r\\[1mm]
0&-\frac{\lambda_I}{r^2}&0&\frac{\lambda_K}r
\\[1mm]
0&0&0&Y
\end{pmatrix}
\begin{pmatrix}
E^+\\[1mm]E^1\\[1mm]E^2\\[1mm]E^-
\end{pmatrix}
=0\, .
\end{equation}
Let $V\in \Gamma(\Xi)$ be a
vector in the distribution and denote $V^A:=E^A(V)$. Then the 
above display determines a connection on $\Xi$ according to
$$
\nabla^\Xi V^A := d V^A + \Omega^A{}_B V^B\, .
$$
For the special case $V=X$, the homothety $X^A=(0,0,0,1)$, we have $\nabla^\Xi X^A = E^A\,.$
In addition, if~$V$ has homogeneity $-1$, so that ${\mathcal L}_X V=-V$, then the functions $V^A=E^A(V)=(V^+,V^1,V^2,V^-)$  have homogeneities given by ${\mathcal L}_X V^+=V^+$, ${\mathcal L}_X V^{1,2}=0$, ${\mathcal L}_X V^-=-V^-$. By inspection---remember that $[d,{\mathcal L}_X]=0$,  ${\mathcal L}_X Y=0$, ${\mathcal L}_X r = r$ and ${\mathcal L}_X \lambda_{I,J,K}=2\lambda_{I,J,K}$---the functions~$\nabla^\Xi V^A$ 
have the same respective homogeneities as $V^A$, which proves that $\nabla^\Xi$ maps contractors to contractors and thus defines a contractor connection.

Let us proceed with the contact quantization of the ambient space $(M,A,x)$. Note that
the curvature $F^\Xi$ of the connection $\nabla^\Xi$ above ~is
$$
F^\Xi=
\big(\nabla^\Xi\big)^2=
\begin{pmatrix}
0&0&0&0\\
0&0&\frac{\lambda_J\wedge\lambda_K}{r^4}&0\\
0&-\frac{\lambda_J\wedge\lambda_K}{r^4}&0&0\\
0&0&0&0
\end{pmatrix}\, ,
$$
which obeys $\iota_X F^\Xi=0$. In order to achieve flatness, we now search for an $\R^+$-equivariant quantum connection form~$d^{\hat A}$ for $(M,A,x)$ where
$$
d^{\hat A} = \frac{A}{i\hbar} + \frac {E^A \hat S_A}{i\hbar} + d + \frac{\Omega^{AB} \hat S_A \hat S_B }{2i\hbar} + \cdots \, .
$$
Again the terms denoted $\cdots$ are solved for  by requiring flatness; these are certainly non-vanishing because $F^\Xi\neq 0$.
This connection form acts on a Hilbert bundle over $M$ whose associated Hilbert space operators $\hat S_A=(\hat S_+,\hat S_1,\hat S_2,\hat S_-)$ obey
$$
[\hat S_-,\hat S_+]=-i\hbar
\, ,\qquad 
[\hat S_1,\hat S_2]=-i\hbar\, .
$$
In the current setting, $T^*{\mathbb R}^5\backslash\{0\}$ is framed by the 1-forms $\{du, \lambda_I, \lambda_J,\lambda_K, Y\}$, and thus we can write 
\begin{multline*}
\nabla =  e^{-\frac{u}{i\hbar}}d\hh\circ\hh e^{\frac{u}{i\hbar}} + \frac{\lambda_I}{i\hbar} \Big((1+\hat S_+)^2 
-\frac{H}{r^2}\Big)  
+  \frac{\lambda_J\hat S_1}{i\hbar}  (1+\hat S_+)
+  \frac{\lambda_K\hat S_2}{i\hbar} \hat (1+\hat S_+)
\\
+
 \frac{Y}{i\hbar}\,  \Big(\hat S_-+\frac12 (\hat S_+ \hat S_-+\hat S_-\hat S_+)\Big)+\, \cdots\, .
\end{multline*}
The terms labeled $\cdots$ required for $\nabla$ to be nilpotent can be found by modifying the Holstein--Primakoff solution above. For that, let us denote
$$\Lambda:=(\lambda_K+i\lambda_J)/\surd 2,\qquad A:=(\hat S_2+i \hat S_1)/\surd 2,\qquad \hat H:=\frac12 (\hat S_1^2+\hat S_2^2)\, .$$ The desired connection is then
\begin{multline*}
\nabla:= e^{-\frac{u}{i\hbar}}\, d\hh\circ\hh e^{\frac{u}{i\hbar}} +
\frac{\lambda_I}{i\hbar} \Big((1+\hat S_+)^2 
-\frac{\hat H}{r^2}\Big) 
\\[1mm]+
\frac{\Lambda}{i\hbar r}\,  A^\dagger\,  \sqrt{(1+\hat S_+)^2 -\frac{\hat H+\frac\hbar 2}{2 r^2}}
+
\frac{\bar \Lambda}{i\hbar r}\,   \sqrt{(1+\hat S_+)^2 -\frac{\hat H+\frac\hbar 2}{2 r^2}}\: A
\\[1mm]
+
 \frac{Y}{i\hbar}\,  \Big(\hat S_-+\frac12 (\hat S_+ \hat S_-+\hat S_-\hat S_+)\Big)\, .
\end{multline*}
It is not difficult to verify that the above connection  obeys $\nabla^2=0$ and agrees at grades $-2,-1,0$ with the display before last. Also, the operator
$$
\nabla_X^{(0)}+{\sf gr}=
d_X
+\frac{1}{2i\hbar} (\hat S_+ \hat S_-+\hat S_-\hat S_+)+{\sf gr}
$$
commutes with $\hat S_+$. Using this, it is easy to see that for any homogeneous vector field $U$, Equation~\nn{eqgr} holds, and thus the connection $\nabla$ given above is $\R^+$-equivariant. By Theorem~\ref{biggy}, we have now determined a contact quantization of $S^3$. For each contact form $\alpha\in \bm \alpha$ this determines a nilpotent connection $\bm \nabla^\hbar$ acting on sections of   the contractor Hilbert bundle~${\mathscr H}(\FZ)$. This can be computed by pulling-back the ambient quantum connection form $\nabla$ to the codimension~2 submanifold given by $\{u=0,\tau=0\}\sse\R^5$, where $\tau$ is any (positive) scale obeying ${\mathcal L}_X \tau = \tau$. For this example, a valid choice~is the radius coordinate $\tau=r$. For this choice, the connection $\nabla$ pulls back to 
\begin{multline*}
\nabla_{S^3}:= d +
\frac{\lambda_i}{i\hbar} \Big((1+\hat S_+)^2 
-\frac{\hat H}{r^2}\Big) 
+
\frac{\lambda}{i\hbar }\,  A^\dagger\,  \sqrt{(1+\hat S_+)^2 -\frac{\hat H+\frac\hbar 2}{2 }}
+
\frac{\bar \lambda}{i\hbar r}\,   \sqrt{(1+\hat S_+)^2 -\frac{\hat H+\frac\hbar 2}{2 }}\: A
\, ,
\end{multline*}
where $\lambda_i=\alpha$ is the standard contact form on $S^3$ given above. 

Finally, to recover the strict quantization of the standard (periodic) Reeb dynamics discussed above, we  search for a parallel scale tractor $I$. For that, we must solve
$
\nabla^\Xi I = {\mathcal O}(A(X))
$
for $I\in \Gamma(\Xi)$ subject to ${\mathcal L}_X I=-I$. If we denote $I^A=E^A(I)=(I^+,I^1,I^2,I^-)$, we must study---see Equation~\nn{Ideterminenabla}---the system of equations:
$$
\left\{
\begin{array}{c}
d I^+ - Y I^+ +\frac{\lambda_K I^1+\lambda_J I^2}{r}+2 \lambda_I I^- = 0\, ,\\[2mm]
d I^1 +\frac{\lambda_I I^1+r\lambda_J I^-}{r^2} = 0\, ,\\[2mm] 
d I^2 +\frac{-\lambda_I I^2+r\lambda_K I^-}{r^2} = 0\, ,\\[2mm] 
d I^-+YI^-=0 \, . 
\end{array}
\right.
$$
This is readily solved by taking
$$
I^A=\begin{pmatrix}
r\\0\\0\\0
\end{pmatrix}\, ,
$$
so that we obtain
$$I=rY^\sharp=d r^\sharp=\frac 14\big( 
\frac xr\frac{\partial}{\partial y}-
\frac yr\frac{\partial}{\partial x}+
\frac zr\frac{\partial}{\partial w}-
\frac wr\frac{\partial}{\partial z}\big).$$
Given the parallel scale tractor $I$ and corresponding scale $\sigma=r$, we can compute the constraint
$$
s^-=s(I) = \hat S_A I^A=\hat S_+\, .
$$
This operator clearly commutes with the above connection $\nabla_{S^3}$.
Imposing the condition $\hat S_+=0$ on sections of the Hilbert bundle ${\mathscr H}(\FZ)$, this gives the section space  of the Hilbert bundle $\mathscr H$. Finally, acting on the kernel of $\hat S_+$, the connection $\nabla_{S^3}$ precisely reproduces the quantum connection~\nn{nablaS3strict}, upon identifying $\hat S_{1,2}$ with $\hat s_{1,2}$ and noting that $\hat H= \hat N+\frac\hbar 2$.\hfill$\Box$

\bibliographystyle{unsrt}
\bibliography{QCont_CHW}


\end{document}